\patchcmd{\@maketitle}{\LARGE \@title}{\fontsize{16}{19.2}\selectfont\@title}{}{}\makeatother
\newcommand{\cR}{\mathcal{R}}
\newcommand{\cG}{\mathcal{G}}
\newcommand{\eps}{\varepsilon}
\DeclareMathOperator{\Lab}{Lab}
\DeclareMathOperator{\lab}{lab}
\DeclareMathOperator{\Tab}{Tab}
\DeclareMathOperator{\nd}{nd}
\DeclareMathOperator{\fd}{fd}
\DeclareMathOperator{\bd}{bd}
\DeclareMathOperator{\td}{td}
\DeclareMathOperator{\id}{id}
\DeclareMathOperator{\br}{br}
\DeclareMathOperator{\cv}{cv}
\DeclareMathOperator{\argmin}{argmin}
\DeclareMathOperator{\argmax}{argmax}
\theoremstyle{plain}
\newtheorem{theorem}{Theorem}[section]
\newtheorem{lemma}[theorem]{Lemma}
\newtheorem{observation}[theorem]{Observation}
\newenvironment{restate}[1][]{%
    \begingroup%
    \def\thetheorem{\ref{#1}}%
}{%
    \addtocounter{theorem}{-1}%
    \endgroup%
}
\title{Routing in Histograms\footnote{
   MC was supported by JST ERATO Grant Number JPMJER1201 (Japan) and ERC STG 757609.
   JC was supported by ERC STG 757609 and DFG grant MU 3501/1-2.
   MK was supported by MEXT KAKENHI No.~17K12635 and the NSF award CCF-1422311.
   WM was supported in part by ERC StG 757609.
   AvR and MR were supported by JST ERATO Grant Number JPMJER1201, Japan.
}}
\author[1]{Man-Kwun Chiu}
\author[1]{Jonas Cleve}
\author[1]{Katharina Klost}
\author[2]{Matias Korman}
\author[1]{Wolfgang Mulzer}
\author[3]{\\Andr\'{e} van Renssen}
\author[4]{Marcel Roeloffzen}
\author[1]{Max Willert}
\affil[1]{Institut f\"ur Informatik, Freie Universit\"at Berlin, Germany\\
\texttt{\{chiumk,jonascleve,kathklost,mulzer,willerma\}@inf.fu-berlin.de}}
\affil[2]{Department of Computer Science, Tufts University, Medford, MA, USA\\
  \texttt{matias.korman@tufts.edu}}
\affil[3]{School of Computer Science, University of Sydney, Sydney, Australia\\
  \texttt{andre.vanrenssen@sydney.edu.au}}
\affil[4]{Department of Mathematics and Computer Science, TU Eindhoven, 
Eindhoven, the Netherlands\\
\texttt{m.j.m.roeloffzen@tue.nl}}
\date{}
\begin{document}
\maketitle

\begin{abstract}
Let $P$ be an $x$-monotone orthogonal polygon with $n$ vertices.
We call $P$ a \emph{simple histogram} if its upper boundary
is a single edge; and a \emph{double histogram} if it has
a horizontal chord from the left boundary to the right 
boundary.
Two points $p$ and $q$ in $P$ are \emph{co-visible} if and only
if the (axis-parallel) rectangle spanned by $p$ and $q$ 
completely lies in $P$.
In the $r$-visibility graph $G(P)$ of $P$, we connect two vertices
of $P$ with an edge if and only if they are co-visible.

We consider \emph{routing with preprocessing} in $G(P)$.
We may preprocess $P$ to obtain 
a \emph{label} and a \emph{routing table} for each vertex of $P$. 
Then, we must be able to route a packet between
any two vertices $s$ and $t$ of $P$, where each step 
may use only the label of the target node $t$, the routing table 
and neighborhood of the current node, and 
the packet header.

We present a routing scheme for double histograms that sends 
any data packet along a path whose length is at most twice
the (unweighted) shortest path distance between the 
endpoints.
In our scheme, the labels, routing tables, and headers
need $O(\log n)$ bits. For the case of simple histograms,
we obtain a routing scheme with optimal routing paths,
$O(\log n)$-bit labels, one-bit routing tables, 
and no headers.
\end{abstract}

\section{Introduction}

The \emph{routing} problem is a classic question in 
distributed graph algorithms~\cite{GiordanoSt04,PelegUp89}. 
We have a graph $G$ and would like to preprocess it
for the following task: given a data
packet located at some \emph{source} vertex $s$ of $G$,
route the packet to a \emph{target} vertex $t$ of $G$,
identified by its \emph{label}.
The routing should have the following properties:
(A) \emph{locality}: to determine
the next step of the packet, it should use only 
information at the current vertex or in the 
packet header; (B) \emph{efficiency}:
the packet should travel along a path whose
length is not much larger than the length of a shortest path
between $s$ and $t$. The ratio between the length of this
routing path and a shortest 
path is called the \emph{stretch factor}; and (C)
\emph{compactness}: the space
requirements for labels, routing tables, and packet headers
should be small.

Obviously, we could store at each vertex $v$ of $G$
the complete shortest path tree of $v$. Then, the routing scheme
is perfectly efficient: we can send the packet
along a shortest path. However, the scheme lacks compactness.
Thus, the challenge is to balance the
(seemingly) conflicting goals of compactness and efficiency.

There are many compact routing schemes for general graphs
\cite{AbrahamGa11,AwerbuchBNLiPe90,Chechik13,Cowen01,
EilamGaPe03,RodittyTo15,RodittyTo16}.
For example, the scheme by  Roditty and Tov~\cite{RodittyTo16} 
needs to store a 
poly-logarithmic number of bits in the packet header and
it routes a packet from $s$ to $t$ on a path of length 
$O\big(k\Delta+m^{1/k}\big)$, where $\Delta$ is the 
shortest path distance between $s$ and $t$, $k > 2$ is any 
fixed integer, $n$ is the number of nodes, and $m$ is the number of
edges. 
The routing tables use $mn^{O(1/\sqrt{\log n})}$ space.
In the late 1980's, Peleg and Upfal~\cite{PelegUp89} proved
that in general graphs, any routing scheme with constant stretch factor
must store $\Omega(n^c)$ bits per vertex, for some constant 
$c > 0$.
Thus, it is natural to focus on special graph classes to obtain
better routing schemes. For 
instance, trees admit routing schemes that always 
follow the shortest path and that store $O(\log n)$ bits at 
each node~\cite{FraigniaudGa01,SantoroKh85,ThorupZw01}. 
Moreover, in planar graphs,
for any fixed $\eps > 0$, there is a routing scheme with a 
poly-logarithmic number of bits in each routing table that 
always finds a path that is within a factor of 
$1 + \eps$ from optimal~\cite{Thorup04}.
Similar results are also available for
unit disk graphs~\cite{KaplanMuRoSe18,yan2012compact}
and for metric spaces with 
bounded doubling dimension~\cite{KonjevodRiXi16}.

Another approach is \emph{geometric routing}:
the graph resides in a geometric space, and 
the routing algorithm has to determine the next vertex for 
the packet based on the coordinates of the source and the target 
vertex, the current vertex, and its neighborhood, see 
for instance~\cite{BoseFavReVe17,BoseFavReVe15} and the references 
therein. In contrast to compact routing schemes, there are no 
routing tables, and the routing happens purely based on the 
local geometric information (and possibly the packet header).
For example, the routing 
algorithm for triangulations by Bose and Morin~\cite{BoseMo04} uses 
the line segment between the source and the target for its 
routing decisions. In a recent result, 
Bose~{\em et al.}~\cite{BoseFavReVe17} 
show that when vertices do not store any routing tables,
no geometric routing scheme can achieve the stretch 
factor $o(\sqrt{n})$. This lower bound applies irrespective of the
header size.

We consider routing in a particularly interesting class
of geometric graphs, namely visibility graphs of polygons.
Banyassady~{\em et al.}~\cite{banyassady2017routing} presented a
routing scheme for
polygonal domains with $n$ vertices and $h$ holes that uses $O(\log n)$
bits for the label, $O((\eps^{-1}+h) \log n)$ bits for the routing tables,
and achieves a stretch of $1 + \eps$, for any fixed $\eps>0$.
However, their approach is efficient only  if 
the edges of the visibility graph are
weighted with their Euclidean lengths.
Banyassady~{\em et al.}~ask whether  there is 
an efficient routing scheme for visibility graphs with unit weights
(also called the \emph{hop-distance}), arguably a more applied setting.

We address this open problem by combining
the two approaches of geometric and compact routing: we use
routing tables at the vertices to represent information about the 
structure of the graph, 
but we also assume that the labels of all adjacent vertices are directly
visible at each node.
This is reasonable from a practical point of view, because a 
node in a network must be aware of all its neighbors and their 
labels.
The size of this list is not relevant for the compactness,
since it depends purely on the graph and cannot be influenced 
during preprocessing.
We focus our attention on $r$-visibility graphs of orthogonal 
simple and double histograms. Even this seemingly simple case turns 
out to be quite challenging and reveals the whole richness of 
the compact routing problem in unweighted, geometrically defined 
graphs. Furthermore, histograms 
constitute a natural starting point, 
since they are crucial building blocks in many visibility 
problems; see, for instance,~\cite{bartschi2011coloring, bartschi2014improved,bartschi2014conflict,bhattacharya2017approximability,bhattacharya2017constant,hoffmann2018tight}.
In addition, $r$-visibility is a popular concept in orthogonal 
polygons that enjoys many useful structural properties,
see, e.g.,~\cite{o1987art,hoffmann2018tight,motwani1990covering,hoffmann1990rectilinear,worman2007polygon}.

A simple histogram is a monotone orthogonal
polygon whose upper boundary consists of a single edge;
a double histogram is a monotone orthogonal polygon
that has a horizontal chord that touches the boundary of $P$ only
at the left and the right boundary.
Let $P$ be a (simple or double) histogram with $n$ vertices.
Two vertices $v$ and $w$ in $P$ are connected in the visibility 
graph $G(P)$ by an unweighted edge if
and only if 
the axis-parallel rectangle spanned by $v$ and $w$ is contained in the 
(closed) region $P$ (we say that $v$ and $w$ are \emph{co-visible}). 
We present the first efficient and compact routing schemes for 
polygonal domains under the hop-distance. The following two theorems 
give the precise statements.
\begingroup
\def\thetheorem{1}
\begin{theorem}\label{thm:simple-routing}
Let $P$ be a simple histogram with $n$ vertices. There is a routing scheme
for $G(P)$ with a routing table with 1 bit, without headers, having label
size $2\cdot\lceil\log n\rceil$, such that we can route between any two
vertices on a shortest path.
\end{theorem}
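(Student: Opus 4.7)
The plan is to exploit the fact that $r$-visibility in a simple histogram is governed entirely by range maxima over the bottom-boundary heights, which organizes vertices and shortest paths into a tree-like structure that can be encoded compactly.

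First, I would parameterize the histogram by its column heights $h_1,\ldots,h_k$ and observe that two vertices $v=(x_v,y_v)$ and $w=(x_w,y_w)$ with $x_v<x_w$ are co-visible in $G(P)$ if and only if $\min(y_v,y_w)$ is at least $\max\{h_i : \text{column } i \text{ strictly between } x_v,x_w\}$. Thus $r$-visibility reduces to a range-maximum condition on the height sequence, so the combinatorial structure of $G(P)$ is captured by a tree $T$ (analogous to a Cartesian tree of $(h_1,\ldots,h_k)$ on maxima) whose root is a tallest column and whose subtrees describe nested ``wells''.

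Next, I would prove a structural lemma on shortest paths. For any $s,t$ with $x_s<x_t$, some shortest $s$-$t$ path in $G(P)$ has an ``ascend--cross--descend'' form: it climbs through strictly higher vertices until reaching the ``peak height'' $h^*$ equal to the maximum column height strictly between $s$ and $t$, takes a single horizontal hop across the peak column, and then descends symmetrically to $t$. The number of ascending (resp.\ descending) hops equals the number of $T$-levels $s$ (resp.\ $t$) must climb to reach $h^*$, so each canonical step corresponds to one $T$-edge and is locally determined as the greedy choice of the neighbor with largest reach in the required direction.

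With this in hand, I would design the routing scheme as follows. The label of $v$ consists of two $\lceil\log n\rceil$-bit indices, for instance its pre-order rank in $T$ together with a secondary rank (such as a post-order rank, or the rank of a distinguished ancestor) that together allow ancestor and interval tests on $T$. The single routing bit at $v$ encodes the one binary feature that still needs disambiguation once the label is fixed; a natural candidate is whether $v$ is the upper or lower endpoint of its vertical transition edge, since those two vertices share an $x$-rank but play opposite roles in climbing and descending. The next-hop rule at $u$ compares the labels of $u$ and $t$ to decide whether to ascend, cross, or descend in $T$, then picks the unique neighbor whose labels witness this move, using the 1-bit flag to resolve the residual ambiguity.

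The main obstacle I anticipate is proving shortest-path correctness under the tight $2\lceil\log n\rceil$-bit label and one-bit table budget. This splits into two tasks: (i) pinning down, for every pair $(s,t)$, a \emph{canonical} shortest path such that at each intermediate vertex the next canonical hop is uniquely determined by the labels of the current vertex, the target, the neighbors, and a single additional bit; and (ii) verifying this uniqueness in all corner cases---vertices on the top edge, the leftmost and rightmost bottom corners, columns of equal height (requiring a consistent tie-breaking rule in $T$), and the paired upper/lower transition vertices that share an $x$-coordinate. The delicate step will be arguing that, once the two-index label is committed, exactly one bit of state everywhere suffices to disambiguate the next-hop choice along the canonical shortest path.
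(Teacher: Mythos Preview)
Your approach diverges from the paper's, and there is a real gap in the descent phase.

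The paper does not use a Cartesian tree. The label of $v$ is its counterclockwise index $v_{\id}$, augmented---for reflex and base vertices only---by the index of a \emph{breakpoint} $\br(v)$, namely the endpoint of the highest visible horizontal edge on the relevant side of $v$. The single routing bit at $v$ records which of $\ell(v),r(v)$ is higher. Routing proceeds as follows: if $t\notin I(s)$ (testable from the min/max id in the link table), hop to the higher of $\ell(s),r(s)$ using the bit; if $t\in I(s)\setminus N(s)$, locate $\nd(s,t)$ and $\fd(s,t)$ in the link table, read $b=\br(\nd(s,t))$ from that neighbor's label, and hop to $\nd(s,t)$ or $\fd(s,t)$ according to which side of $b$ contains $t_{\id}$. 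Three short lemmas (every $s$--$t$ path passes through a bottleneck pair; the higher interval endpoint is optimal when $t\notin I(s)$; $\nd(s,t)$ is reflex with $\fd(s,t)\in\{\ell(\nd(s,t)),r(\nd(s,t))\}$) certify that each hop lies on a shortest path.

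Your ``ascend--cross--descend with a single horizontal hop'' picture is where the plan breaks. The ascent is fine and essentially matches the paper, but once $t\in I(s)$ the descent is itself a multi-step process, and at \emph{each} descending step one must choose between $\nd(s,t)$ and $\fd(s,t)$. The correct choice depends on where $t$ lies relative to the breakpoint of $\nd(s,t)$---the highest column inside the pocket $[\nd(s,t),\fd(s,t)]$---and this split point is not recoverable from pre-/post-order or ancestry labels of $s$, $t$, and the visible neighbors: the breakpoint $b$ is generically \emph{not} a neighbor of $s$, so its rank is invisible unless it is explicitly stored in $\nd(s,t)$'s label, which is precisely the paper's device. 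Your proposed routing bit (upper vs.\ lower endpoint of the incident vertical edge) does not supply this information; it addresses a different ambiguity than the one that actually arises, whereas the paper's bit (which of $\ell(s),r(s)$ is higher) is needed because labels carry only indices, not $y$-coordinates. Without a breakpoint-like datum in the labels you cannot guarantee a shortest-path hop during descent, so the scheme as stated does not achieve stretch~$1$. Your Cartesian-tree intuition is sound as a structural picture; what is missing is that the second $\lceil\log n\rceil$-bit field must encode the split point of the pocket below $v$, not an ancestry rank.
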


\def\thetheorem{2}
\begin{theorem}\label{thm:double-routing}
Let $P$ be a double histogram with $n$ vertices. There is a routing
scheme for $G(P)$ with routing table, label and header size
$O(\log n)$, such that we can route between any two vertices
with stretch at most $2$.
\end{theorem}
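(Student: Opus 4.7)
The plan is to reduce routing in $G(P)$ to routing in the two simple histograms that make up $P$. The horizontal chord splits $P$ into an upper part $P_u$ and a lower part $P_l$; by construction $P_l$ has the chord as its unique top edge, so it is a simple histogram, and $P_u$ becomes one after a vertical flip. Since $r$-visibility is invariant under reflection, Theorem~\ref{thm:simple-routing} applies inside each piece and yields an optimal intra-piece scheme with $O(\log n)$-bit labels and $1$-bit routing tables.

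During preprocessing we would identify the chord, apply the simple-histogram scheme to $P_u$ and to $P_l$, and augment each vertex with a small amount of cross-chord data. The label of a vertex $v$ records one bit naming its side, its Theorem~\ref{thm:simple-routing} label inside its home histogram, and a constant number of $O(\log n)$-bit geometric fields encoding the $x$-coordinate of $v$ together with the endpoints of the $x$-interval of the chord that $v$ sees through its home histogram. The $O(\log n)$-bit routing table stores the auxiliary data needed for cross-chord decisions, and the $O(\log n)$-bit header carries a single gateway descriptor.

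The routing algorithm proceeds in two phases. When $s$ and $t$ lie on opposite sides, say $s \in P_u$ and $t \in P_l$, $s$ first checks whether $t$ is among its neighbors and, if so, delivers the packet in one hop. Otherwise $s$ uses its own label and the label of $t$ to compute a \emph{gateway} vertex $g$ close to the chord, writes $g$ into the header, and invokes the Theorem~\ref{thm:simple-routing} routine in $P_u$ to move the packet to $g$. On arrival the gateway overwrites the header with $t$ and completes the route inside $P_l$ by a second invocation of Theorem~\ref{thm:simple-routing}. Same-side routing uses the intra-piece scheme directly, but some care is needed because shortcuts via the other half can make $d_H(s,t)$ in $G(P)$ strictly smaller than the distance in the home piece, forcing the scheme to detect this and reroute through a gateway as well.

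The main obstacle is the gateway selection: the rule must be computable from the labels of $s$ and $t$ alone and must produce $g$ with $d_H^{P_u}(s,g) + d_H^{P_l}(g,t) \le 2\, d_H(s,t)$. The key structural lemma we plan to prove is that along any shortest path in $G(P)$ from $s$ to $t$ there is a chord-crossing edge whose endpoints, by the rectangular co-visibility condition, have chord-visibility intervals that overlap at the $x$-coordinate of the crossing. Using the chord-visibility intervals stored in the labels, $s$ can pick a canonical chord-adjacent vertex in the intersection of its interval with that of $t$ to serve as $g$, and the staircase structure of each simple histogram bounds the detour between this locally chosen $g$ and the true crossing vertex. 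The factor $2$ in the stretch then reflects the slack of summing a path in $P_u$ with a path in $P_l$ via this locally computed gateway rather than via a specific crossing edge on the true shortest path.
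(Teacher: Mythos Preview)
Your decomposition is natural, but there are genuine gaps, and the paper takes an entirely different route.

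First, $P_u$ and $P_l$ are simple histograms whose base vertices are the chord endpoints---interior points of the left and right edges of $P$, not elements of $V(P)$. The scheme of Theorem~\ref{thm:simple-routing} frequently routes through a base vertex (whenever $t\notin I(s)$ and the higher of $\ell(s),r(s)$ is a base vertex), so it cannot be used as a black box on $P_u$ or $P_l$ without explaining what replaces those hops in $G(P)$.

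Second---and this is the real difficulty---you have not actually defined the gateway. The chord-visibility intervals you want to intersect are exactly the $x$-ranges of $I(s)$ and $I(t)$; these can be disjoint (take $s$ and $t$ in horizontally separated pockets), and then your rule produces nothing precisely in the cases where routing is nontrivial. Your structural lemma about a chord-crossing edge on a true shortest path is correct but does not help, because $s$ has no access to that edge or to its endpoints from the labels alone. Even when the intervals do overlap, there is no ``chord-adjacent vertex'' to pick: the chord lies in the interior of $P$ and carries no vertices. Finally, the stretch inequality you need---that routing through $g$ costs at most $2\,d(s,t)$---is asserted, not argued; a shortest path in $G(P)$ may cross the chord many times, and distances restricted to one half can be unboundedly larger than $d(s,t)$ in $G(P)$ (the very phenomenon you acknowledge in the same-side case). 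The appeal to ``staircase structure'' does not substitute for a proof here.

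For contrast, the paper never splits $P$. It works directly in the double histogram via near/far dominators, the sequences $a^i(s),b^i(s)$, and iterated top/bottom dominators $\td^k(s),\bd^k(s)$ with their intervals $I^k(s)$. The routing function branches on whether $t$ lies in $I(s)$, $I^2(s)\setminus I(s)$, $I^3(s)\setminus I^2(s)$, or outside $I^3(s)$, and a progress lemma (Lemma~\ref{lem:routing-scheme-terminates}) shows that any two consecutive hops decrease the distance to $t$ by at least one, which gives stretch~$2$. None of this machinery is present in your outline.
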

\endgroup

\section{Preliminaries}

\subparagraph*{Routing schemes.}
Let $G = (V, E)$ be an \emph{undirected, unweighted, 
simple, connected graph}. The (closed) \emph{neighborhood} 
of a vertex $v \in V$, $N(v)$, is the set containing $v$ 
and its adjacent nodes.
Let $v, w\in V$. A sequence 
$\pi: \langle v = p_0, p_1, \dots, p_k = w\rangle$ 
of vertices with $p_{i-1}p_i \in E$, for
$i = 1, \dots, k$, is called a \emph{path} of 
length $k$ between $v$ and $w$. The length of 
$\pi$ is denoted by $|\pi|$. 
We define $d(v,w) = \min_{\pi}|\pi|$ as 
the length of a shortest path between $v$ and $w$, 
where $\pi$ goes over all paths with
endpoints $v$ and $w$.

Next, we define a \emph{routing scheme}.
The algorithm that decides the next 
step of the packet is modeled by a 
\emph{routing function}. Every node is assigned a (binary) 
\emph{label} that identifies it in the network.
The routing function  
uses local information at the current node, the label of the 
target node, and the \emph{header} stored in the packet.
The local information of a
node $v$ has two parts: (i) the \emph{link table}, 
a list of the labels of $N(v)$ and (ii)
the \emph{routing table}, a bitstring chosen
during preprocessing to represent relevant 
topological properties of $G$.
Formally, a routing scheme of a graph $G$ 
consists of:
\begin{itemize}
\item a label $\lab(v) \in\{0,1\}^+$ for each 
node $v\in V$;
\item a routing table $\rho(v) \in \{0,1\}^*$ for 
each node $v\in V$; and
\item a routing function 
$f \colon \big(\{0,1\}^*\big)^4
\rightarrow V \times \{0,1\}^*$. 
\end{itemize}
The routing function takes the link table 
and routing table of a current node $s \in V$, 
the label $\lab(t)$ of a target node $t$, and a 
header $h \in \{0,1\}^*$. 
Using these four inputs, it
provides a next node $v$ adjacent to $s$ 
and a new header $h'$. The local information in 
the packet is updated to $h'$, and it 
is forwarded to $v$.  The routing scheme 
is \emph{correct} if the following holds: for 
any two sites $s, t\in V$, consider the
sequence $(p_0, h_0)=(s, \eps)$ 
and $(p_{i+1}, h_{i+1}) =
f\Big(\lab\big(N(p_{i})\big), 
\rho(p_{i}), \lab(t), h_{i}\Big)$, for $i \geq 0$. 
Then, there is
a $k = k(s, t) \geq 0$ with $p_k = t$ and 
$p_{i} \neq t$, for $i = 0, \dots, k - 1$. 
The routing scheme \emph{reaches} $t$ 
in $k$ steps. Furthermore, 
$\pi: \langle p_0,\dots, p_k\rangle$ is the 
\emph{routing path} from $s$ to 
$t$. The \emph{routing distance} is
$d_{\rho}(s,t) = |\pi|$.

Now, let $\cR$ be a \emph{family} of correct 
routing schemes for a given graph class 
$\cG$, i.e., $\cR$ contains a correct routing 
scheme for every graph in $\cG$ such that all 
routing functions use the same algorithm.
There are several measures for
the quality of $\cR$.  
First, the various pieces of information used 
for the routing should be small. 
This is measured by the \emph{maximum label size} 
$\Lab(n)$, the \emph{maximum routing table size} 
$\Tab(n)$, and the \emph{maximum header size} $H(n)$,
over all graphs in $\cG$ of a certain size.  
They are defined as

\begin{align*}
  \text{Lab}(n) &= \max_{\substack{(V, E) \in \cG \\
      |V| = n}} \max_{v \in V} |\lab(v)|,\\
  \text{Tab}(n) &=\max_{\substack{(V, E) \in \cG \\
      |V| = n}} \max_{v \in V} |\rho(v)|, \text{ and }\\
  H(n) &= \max_{\substack{(V, E) \in \cG \\
      |V| = n}} \max_{s \neq t\in V}\max_{i = 0, \dots, k(s, t)} |h_i|.
  \intertext{Finally, the \emph{stretch}
     $\zeta(n)$ relates the length of the routing path to the shortest
     path distance:}
  \zeta(n) &= \max_{\substack{(V, E) \in \cG \\
      |V| = n}} \max_{s \neq t\in V}
\frac{d_{\rho}(s, t)}{d(s, t)}.
\end{align*}

\subparagraph*{Polygons.}
Let $P$ be a \emph{simple orthogonal (axis-aligned)
polygon} in 
general position with vertex set $V(P)$, 
$|V(P)| = n$. No three vertices in $V(P)$ are 
on the same vertical or horizontal line.
The vertices are indexed counterclockwise 
from $0$ to $n - 1$; the 
lexicographically largest vertex has index $n - 1$. 
For $v \in V(P)$, we write $v_x$ for the $x$-coordinate,
$v_y$ for the $y$-coordinate, and $v_{\id}$ for the 
index.

We consider \emph{$r$-visibility}: 
two points $p,q\in P$ \emph{see each other} 
(are \emph{co-visible}) if and only if 
the axis-aligned rectangle 
spanned by $p$ and $q$ lies inside $P$ 
(we treat $P$ as a closed set).
The \emph{visibility graph} 
$G(P) = \big(V(P), E(P)\big)$ of $P$ has
an edge between two vertices $v,w\in V(P)$ 
if and only if $v$ and $w$ are co-visible. 
The distance $d(v, w)$ between two vertices 
$v, w \in V(P)$
is called the \emph{hop distance} of $v$ and $w$ in $P$.

A \emph{histogram} is an $x$-monotone orthogonal 
polygon where the upper boundary
consists of exactly one horizontal edge, the 
\emph{base edge}.
Due to our numbering convention, the endpoints of 
the base edge are indexed $0$ (left) and $n-1$ (right).
They are called the \emph{base vertices}.
A \emph{double histogram} is an $x$-monotone orthogonal
polygon $P$ that has a \emph{base line}, 
a horizontal line segment whose relative interior lies in the
interior of $P$ and whose left and right endpoint are on the 
left and right boundary edge of $P$, respectively.
We assume that the base line lies on the $x$-axis. 
Two vertices $v$, $w$ in $P$ lie \emph{on the same side} 
if both are below or above the base line, i.e., 
if $v_y w_y > 0$.
Every histogram is also a double histogram.
From now on, we let $P$ denote a  
(double) histogram.

Next, we classify the vertices 
of $P$.
A vertex $v$ in $P$ is incident to exactly 
one horizontal edge $h$. We call $v$ a 
\emph{left} vertex if it is the left endpoint of $h$; 
otherwise, $v$ is a \emph{right} vertex. 
Furthermore, $v$ is \emph{convex}
if the interior angle at $v$ is $\pi/2$; otherwise,
$v$ is \emph{reflex}. Accordingly, 
every vertex of $P$ is either
\emph{$\ell$-convex}, \emph{r-convex}, 
\emph{$\ell$-reflex}, or
\emph{r-reflex}.

\subparagraph*{Visibility Landmarks.}
To understand the structure of shortest
paths in $P$, we associate with 
each $v \in V(P)$ three landmark points
in $P$ (not necessarily vertices); 
\cref{fig:l-r-cv-I} gives an
illustration.
\begin{figure}
\begin{center}
\includegraphics[scale=0.8]{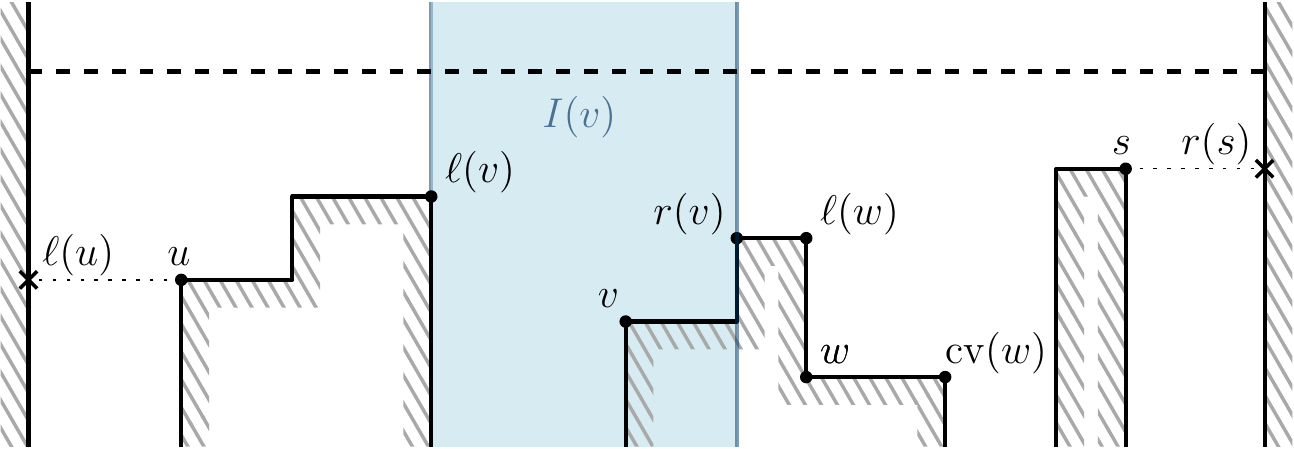}
\end{center}
\caption{The left and right points and the corresponding vertex.
The interval $I(v)$ of $v$ is the set of vertices between $\ell(v)$
and $r(v)$. The dashed line represents the base line.}
\label{fig:l-r-cv-I}
\end{figure}
The \emph{corresponding vertex} of $v$, 
$\cv(v)$, is the unique vertex that shares
a
horizontal edge with $v$.
To obtain the \emph{left point} $\ell(v)$ of $v$, 
we shoot a leftward 
horizontal ray $r$ from $v$. Let $e$ be the vertical 
edge where $r$ first hits the boundary of $P$.
If $e$ is the left boundary of $P$; then if $P$ is a simple
histogram, we let $\ell(v)$ be the left base vertex; and otherwise
$\ell(v)$ is the point where $r$ hits $e$.
If $e$ is not the left boundary of $P$, we let
$\ell(v)$ be the endpoint of $e$ closer to the base line.
The \emph{right point} $r(v)$ of $v$ is defined analogously,
by shooting the horizontal ray to the right.

Let $p$ and $q$ be two points in $P$. We say that $p$
is \emph{to the left of} $q$, if $p_x\leq q_x$. The point $p$ is \emph{strictly
to the left} of $q$, if $p_x<q_x$. The terms \emph{to the right of}
as well as \emph{strictly to the right of} are defined analogously. The 
\emph{interval} $[p, q]$ of $p$ and $q$ 
is the set of vertices in $P$ between $p$ 
and $q$, 
$[p, q] = \big\{v \in V(P) \mid p_x \leq v_x \leq q_x \big\}$.
By general position, this 
corresponds to index intervals
in simple histograms. More precisely, if $P$ is 
a simple histogram and $p$ is either an $r$-reflex vertex or 
the left base vertex and $q$ is either 
$\ell$-reflex or the right base vertex, then
$[p, q] = \big\{v \in V(P) \mid p_{\id} \leq
v_{\id} \leq q_{\id} \big\}$.
The \emph{interval of a vertex $v$}, 
$I(v)$, is the interval of the left and right point
of $v$, $I(v)=[\ell(v), r(v)]$. 
Every vertex visible from $v$ is in $I(v)$, i.e.,
$N(v) \subseteq I(v)$. This interval plays a crucial
role in our routing scheme and gives a very powerful
characterization of visibility in double histograms.

Let $s$ and $t$ be two vertices with
$t \in I(s) \setminus N(s)$.
We define two more landmarks for 
$s$ and $t$. Assume that $t$ lies 
strictly to the right of $s$, the other case is symmetric.
The \emph{near dominator} $\nd(s,t)$
of $t$ with respect to $s$ is the rightmost 
vertex in $N(s)$ to the left of $t$. 
If there is more than one such vertex, 
$\nd(s, t)$
is the vertex closest to the base line. 
Since $t$ is not visible from $s$, the near dominator 
always exists. The \emph{far dominator} $\fd(s,t)$  
of $t$ with respect to $s$ is the leftmost vertex in $N(s)$
to the right of $t$.
If there is more than one such vertex, 
$\fd(s, t)$
is the vertex closest to the base line. If there is no 
such vertex, we set $\fd(s, t) = r(s)$, the projection of 
$s$ on the right boundary.
The interval $I(s,t) = \big[\!\nd(s,t),\fd(s,t)\big]$ has
all vertices between the near and far dominator; see
 \cref{fig:nd-fd}.
\begin{figure}
\begin{center}
\includegraphics[scale=0.8]{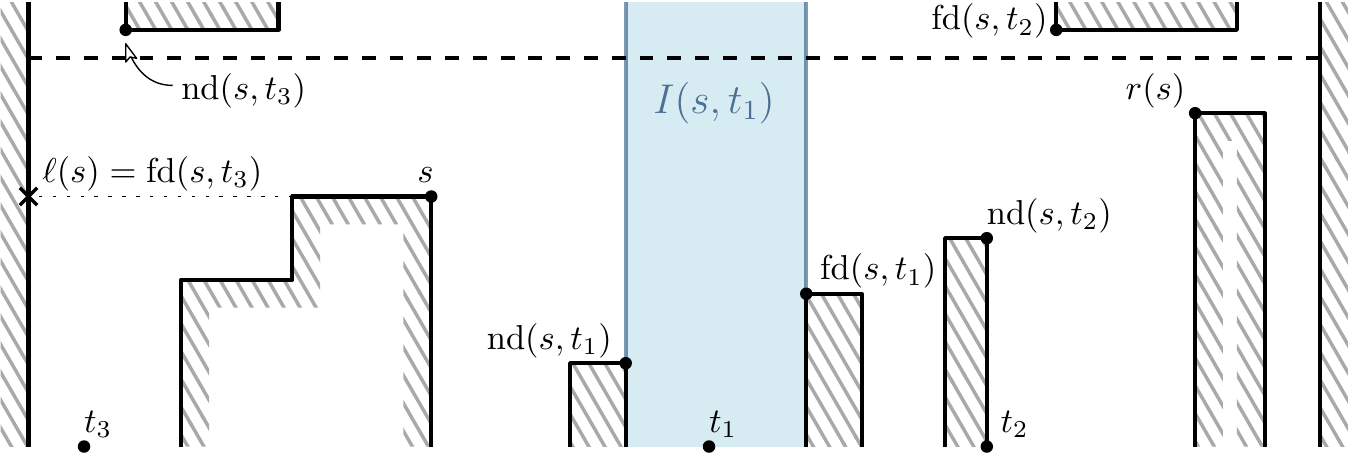}
\end{center}
\caption{The near and the far dominators. Observe that 
$\fd(s,t_3)$ is not a vertex.}
\label{fig:nd-fd}
\end{figure}
\section{Simple Histograms}
Let $P$ be a simple histogram with $n$ vertices. First, we give several
intuitive characterizations of the visibility in $P$.
Then, we analyze how the shortest paths between vertices behave.
The idea for our routing scheme is as follows: as long as a target
vertex $t$ is not contained in the interval $I(s)$ of a current vertex
$s$, i.e., as long as there is a higher vertex that blocks visibility 
between $s$
and $t$, we have to leave the current pocket as far as possible. 
Once we have reached a high enough spike,
we have to find the pocket containing $t$.
Finding the right pocket is possible, but much harder than just going up.
Details follow.

\subsection{Visibility in Simple Histograms}

We begin with some observations on the visibility in $P$.
As $P$ is a simple histogram, 
we have that for all vertices $v \in V(P)$, 
the points $\ell(v)$ and $r(v)$ are vertices of $P$.
Therefore, the far dominators also have
to be vertices. The following observations are now immediate.

\begin{observation}
\label{obs:interval-inclusion}
Let $v \in V(P)$ be $r$-reflex or the left base vertex,
and let $u \in [v, r(v)]$ be a vertex 
distinct from $v$ and $r(v)$. 
Then, $I(u) \subseteq [v, r(v)]$.
\end{observation}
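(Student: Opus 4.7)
The plan is to split the claim $I(u) \subseteq [v, r(v)]$ into the two coordinate bounds $\ell(u)_x \geq v_x$ and $r(u)_x \leq r(v)_x$, and to verify each by inspecting the shape of $P$ immediately outside the strip $x \in [v_x, r(v)_x]$. The case where $v$ is the left base vertex is immediate: the rightward ray from $v$ at the base height travels along the base edge, so $r(v)$ is the right base vertex and $[v, r(v)] = V(P)$. Assume henceforth that $v$ is r-reflex.

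The first substantive step is a preliminary height bound: every $u \in [v, r(v)] \setminus \{v, r(v)\}$ satisfies $u_y < v_y$ strictly. This follows from the co-visibility of $v$ and $r(v)$: the axis-aligned rectangle they span lies in $P$, which forces the bottom profile of $P$ to remain at or below $y = v_y$ throughout $x \in [v_x, r(v)_x]$; hence every vertex in this strip has $y \leq v_y$. General position rules out equality except at $\cv(v)$, which lies strictly to the left of $v$ (the horizontal edge incident to the r-reflex $v$ extends leftward) and is therefore not in $[v, r(v)]$.

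Equipped with $u_y < v_y$, the two coordinate bounds follow from horizontal ray-blocking. Just to the left of $v_x$, along the edge $\cv(v) v$, the bottom profile of $P$ sits at height $v_y$, so any horizontal line at $y < v_y$ exits $P$ before crossing $x = v_x$; hence the leftward ray from $u$ at height $u_y$ cannot reach $x < v_x$, giving $\ell(u)_x \geq v_x$. Symmetrically, if $r(v)$ is not the right base vertex, then immediately to the right of $r(v)_x$ the polygon's bottom sits at height $r(v)_y > v_y > u_y$, which blocks the rightward ray from $u$ before it can leave the strip, so $r(u)_x \leq r(v)_x$; otherwise this bound is immediate.

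The main obstacle I expect is the preliminary height bound, which requires invoking the $v$-to-$r(v)$ visibility rectangle to control the bottom profile inside the strip and then carefully excluding $\cv(v)$ by combining general position with the r-reflex geometry of $v$. Once that is settled, the ray-blocking arguments are essentially immediate from the single-base-edge structure of a simple histogram.
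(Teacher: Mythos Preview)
Your proof is correct and rests on the same geometric insight as the paper's: the co-visibility of $v$ and $r(v)$ forces every intermediate vertex $u$ to satisfy $u_y < v_y$, which in turn confines the horizontal rays from $u$ to the strip $[v_x, r(v)_x]$. The paper presents this as a two-line contradiction (if $I(u)$ escaped the strip then $u_y > v_y$, blocking $v$ from seeing $r(v)$), whereas you unfold the same argument directly and in more detail, but the route is essentially identical.
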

\begin{proof}
Assume $\ell(u)$ or $r(u)$ is outside of $[v, r(v)]$. Then, 
$u$ has a larger $y$-coordinate than $v$. Thus, $v$ cannot 
see $r(v)$, a contradiction to the definition of $r(v)$.
\end{proof}

\begin{observation}
\label{obs:only-2-visible-opposite-site}
Let $v \in V(P)$ be a left \textup(right\textup) vertex 
distinct from the base vertex.
Then, $v$ can see exactly two vertices to its right 
\textup(left\textup): $\cv(v)$ and $r(v)$ \textup($\ell(v)$\textup).
\end{observation}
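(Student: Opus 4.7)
My plan is to verify the two claimed visibilities and then exclude every other vertex; the right-vertex case is symmetric, so I treat only the left-vertex case. I would first record a useful preliminary: in a simple histogram $r(v)$ is a vertex of $P$ (as the text recalls), and by general position $r(v)_y > v_y$, because otherwise the upper endpoint of the vertical edge $e$ defining $r(v)$ would be a third vertex on the horizontal line through $v$ (besides $v$ and $\cv(v)$).

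For the positive visibilities, $\cv(v)$ is free because $R(v,\cv(v))$ is the shared horizontal edge, which lies on the boundary of $P$. For $r(v)$ I would decompose the rectangle $R(v, r(v))$ into vertical slices: the ray from $v$ at height $v_y$ stays inside $P$ all the way to $e$, so the floor of $P$ is at most $v_y$ on $[v_x, r(v)_x)$; the ceiling is the single base edge, lying above every vertex and in particular above $r(v)_y$; and at $x = r(v)_x$ the slice is contained in $e$. Hence every such slice lies in $P$.

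For the other direction, take any $u$ strictly to the right of $v$ with $u \notin \{\cv(v), r(v)\}$ and show that $R(v, u)$ leaves $P$. The case $u_y = v_y$ is excluded by general position, which puts only $v$ and $\cv(v)$ on the horizontal through $v$. If $u_y < v_y$, pick any $x$ in the non-empty open interval $(v_x, \min(u_x, \cv(v)_x))$: the point $(x, u_y)$ lies in $R(v, u)$ but strictly below the horizontal edge $[v, \cv(v)]$, which is the floor of $P$ at $x$, so it is outside $P$.

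The main obstacle is the case $u_y > v_y$. Here $R(v, u)$ contains the segment $[v_x, u_x] \times \{v_y\}$, and for it to lie in $P$ the ray from $v$ must reach $u_x$, forcing $u_x \leq r(v)_x$. I would then argue that every bar whose $x$-range meets the open strip $(v_x, r(v)_x)$ has floor at most $v_y$---a higher floor would produce a left wall blocking the ray before $r(v)_x$---so every vertex with $x$-coordinate strictly inside that strip has $y$-coordinate at most $v_y$; meanwhile, at $x = r(v)_x$ the only vertex with $y$-coordinate above $v_y$ is $r(v)$ itself, the upper endpoint of $e$. This contradicts $u_y > v_y$ with $u \neq r(v)$ and completes the argument.
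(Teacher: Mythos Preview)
Your argument is correct, but it takes a different route from the paper. The paper's proof is purely interval-based: it observes that any vertex visible from $v$ and strictly to its right lies in $[\cv(v), r(v)]$, and then distinguishes whether $\cv(v)$ is convex (in which case $\cv(v)_x = r(v)_x$, so $[\cv(v), r(v)] = \{\cv(v), r(v)\}$ immediately) or $r$-reflex (in which case $r(\cv(v)) = r(v)$, and Observation~\ref{obs:interval-inclusion} forces $I(u) \subseteq [\cv(v), r(v)]$ for every intermediate $u$, whence $v \notin I(u) \supseteq N(u)$). Your proof instead performs a direct geometric case split on the sign of $u_y - v_y$ and reasons explicitly about the floor of the histogram along the strip $(v_x, r(v)_x)$. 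This makes your argument self-contained---it does not invoke Observation~\ref{obs:interval-inclusion}---and it also verifies the two positive visibilities explicitly, which the paper leaves implicit. The trade-off is length: the paper's proof is a few lines because the relevant geometry has already been packaged into the interval-inclusion observation, whereas your floor-height analysis essentially re-derives that content in place.
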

\begin{proof}
Suppose that $v$ is a left vertex; the other 
case is symmetric.
Any vertex visible from $v$ to the right of $v$ 
lies in in $[\cv(v), r(v)]$. If 
$\cv(v)$ is convex, the observation 
is immediate, since then 
$[\cv(v), r(v)] = \{\cv(v), r(v)\}$. 
Otherwise, $\cv(v)$ is $r$-reflex and 
$r(v) = r(\cv(v))$.
By \cref{obs:interval-inclusion}, we get that 
for all $u \in [\cv(v), r(v)] \setminus 
\{\cv(v), r(v)\}$,  we have
$I(u) \subseteq [\cv(v), r(v)]$.
Thus, $v \notin I(u)$ for any such $u$, and 
since $N(u) \subseteq I(u)$, $v$ cannot see $u$.
\end{proof}

\subsection{Paths in a Simple Histogram}

We now analyze the structure of (shortest) paths 
in a simple histogram. The following lemma 
identifies certain ``bottleneck'' vertices that 
must appear on any path; 
see \cref{fig:any-path-includes-interval-border}.

\begin{lemma}
\label[lemma]{lem:any-path-includes-interval-border}
Let $v, w \in V(P)$ be co-visible vertices such that
$v$ is either $r$-reflex or the left base vertex 
and $w$ is either $\ell$-reflex or the
right base vertex.
Let $s$ and $t$ be two vertices with $s \in [v, w]$ and
$t \notin[v, w]$. Then, any path between $s$ and $t$ 
includes $v$ or $w$.
\end{lemma}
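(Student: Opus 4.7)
The plan is to exploit Observation~\ref{obs:interval-inclusion} (and its left--right symmetric analogue) to show that no ``strictly interior'' vertex of $[v,w]$ has a neighbor outside $[v,w]$. Once that is established, the lemma follows by inspecting the first edge of a path that crosses out of $[v,w]$.

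First, I would use the hypothesis that $v$ and $w$ are co-visible to pin down the containment
\[
[v,w]\ \subseteq\ [v,r(v)]\ \cap\ [\ell(w),w].
\]
The axis-parallel rectangle spanned by $v$ and $w$ is inside $P$, so the horizontal ray shot from $v$ to the right is unobstructed at least up to $x = w_x$, giving $r(v)_x \geq w_x$; symmetrically $\ell(w)_x \leq v_x$. Combined with $v_x \leq u_x \leq w_x$ for $u \in [v,w]$, this gives the two inclusions above. General position of the vertices then guarantees that no $u \in [v,w] \setminus \{v,w\}$ can coincide with $r(v)$ or $\ell(w)$.

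Next, take any $u \in [v,w] \setminus \{v,w\}$. By the previous step, $u$ lies in $[v,r(v)] \setminus \{v, r(v)\}$ and in $[\ell(w),w] \setminus \{\ell(w), w\}$. Applying Observation~\ref{obs:interval-inclusion} on the $v$-side and its mirror on the $w$-side yields $I(u) \subseteq [v,r(v)]$ and $I(u) \subseteq [\ell(w),w]$. Intersecting these two containments gives $I(u) \subseteq [v,w]$, and since $N(u) \subseteq I(u)$, every neighbor of $u$ also lies in $[v,w]$.

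I then close the argument by contradiction: suppose $\pi = \langle s = p_0,\dots,p_k = t\rangle$ is a path that avoids both $v$ and $w$, and let $i$ be the smallest index with $p_i \notin [v,w]$ (which exists since $t \notin [v,w]$). By minimality $p_{i-1} \in [v,w]$, and by assumption $p_{i-1} \neq v,w$. The previous step forces $p_i \in N(p_{i-1}) \subseteq [v,w]$, contradicting the choice of $i$. The main obstacle I anticipate is the bookkeeping in the first step: one must argue that co-visibility really does prevent the right-shooting ray from $v$ being blocked before $x = w_x$ (relying on the rectangle sitting inside $P$), and then invoke general position to separate $u$ from $r(v)$ and $\ell(w)$. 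Once this is handled, the rest is a short application of Observation~\ref{obs:interval-inclusion}.
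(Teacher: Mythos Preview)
Your argument is correct and follows the same strategy as the paper: show $N(u)\subseteq[v,w]$ for every $u\in[v,w]\setminus\{v,w\}$ via \cref{obs:interval-inclusion}, then conclude by looking at the first edge leaving $[v,w]$. The only difference is tactical: the paper case-splits on whether $v_y<w_y$ or $v_y>w_y$ and, using \cref{obs:only-2-visible-opposite-site}, shows that in fact $r(v)=w$ (respectively $\ell(w)=v$), so a single application of \cref{obs:interval-inclusion} already gives $I(u)\subseteq[v,w]$; you instead keep both sides and intersect $[v,r(v)]\cap[\ell(w),w]$. Your ``general position'' step is fine but deserves one more line: if $r(v)_x=w_x$ then $r(v)$ is the endpoint of the vertical edge at $w$ closer to the base edge, and since $w$ is $\ell$-reflex that endpoint is $w$ itself, so indeed $r(v)\notin[v,w]\setminus\{v,w\}$.
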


\begin{proof}
Let $I = [v,w]$. Since $t \notin I$, not both
$v, w$ can be base vertices. Thus, suppose
without loss of generality that $v_y < w_y$. 
Then, $\cv(v)$
is a left vertex and can see $w$. Hence, 
\cref{obs:only-2-visible-opposite-site} implies that
$r(v) = r(\cv(v))=w$. By
\cref{obs:interval-inclusion}, we get $N(u) \subseteq I(u) 
\subseteq I$, for every $u \in I \setminus \{v, w\}$. 
Thus, any path between $s$ and $t$ must include $v$ or $w$.
\end{proof}
\begin{figure}[htbp]
\begin{center}
\includegraphics[align=t,scale=0.8]{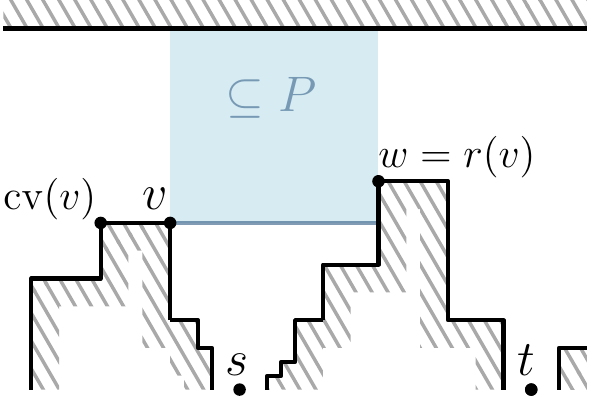}
\hspace*{1cm}
\includegraphics[align=t,scale=0.8]{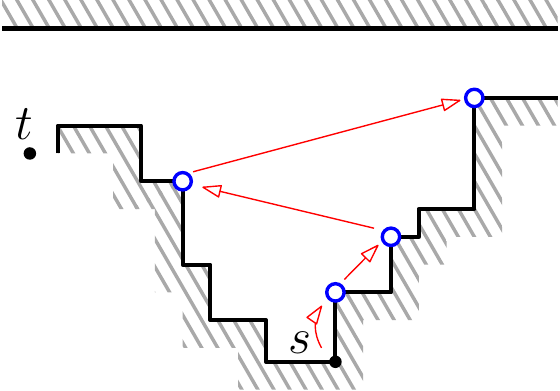}
\end{center}
\caption{Left: Any path from $s$ to $t$ has to include $v$ or $w$, since
the blue rectangle does not contain a vertex except for $v$ and $w$.
Right: A shortest path from $s$ to $t$ using the highest vertex.}
\label{fig:any-path-includes-interval-border}
\end{figure}
An immediate consequence of 
\cref{lem:any-path-includes-interval-border} is
that if $t \notin I(s)$, then any path 
from $s$ to $t$ uses $\ell(s)$ or $r(s)$. The next lemma 
shows that if $t \notin I(s)$, there is a shortest 
path from $s$ to $t$ that uses the higher 
vertex of $\ell(s)$ and $r(s)$, 
see \cref{fig:any-path-includes-interval-border}.

\begin{lemma}
\label[lemma]{lem:taking-higher-is-better}
Let $s$ and $t$ be two vertices with $t \notin I(s)$. 
If $\ell(s)_y > r(s)_y$ \textup($\ell(s)_y<r(s)_y$\textup), 
then there is a shortest path from $s$ to $t$ using 
$\ell(s)$ \textup($r(s)$\textup).
\end{lemma}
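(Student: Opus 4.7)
The plan is an exchange argument. Without loss of generality assume $\ell(s)_y > r(s)_y$. First, I would note that $\ell(s)$ and $r(s)$ are co-visible: $s$ sees both, so the horizontal strip at height $s_y$ from $\ell(s)_x$ to $r(s)_x$ lies in $P$, and since the upper boundary of a simple histogram is a single edge, the whole rectangle spanned by $\ell(s)$ and $r(s)$ therefore lies in $P$. Applying \cref{lem:any-path-includes-interval-border} to $v=\ell(s)$ and $w=r(s)$, whose interval is exactly $I(s)$, then shows that every path from $s$ to $t$ passes through $\ell(s)$ or $r(s)$.

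If some shortest path already uses $\ell(s)$, we are done. Otherwise, take a shortest path $\pi$ that uses $r(s)$ but not $\ell(s)$. A standard shortcut argument (moving $r(s)$ to the first position cannot increase the length) lets me assume $\pi=\langle s,r(s),p_2,\ldots,p_k=t\rangle$. Let $j\ge 2$ be minimal with $p_j\notin I(s)$. Applying \cref{lem:any-path-includes-interval-border} to the single-edge subpath $p_{j-1}p_j$ forces one of its endpoints to be $\ell(s)$ or $r(s)$; since $p_j\notin I(s)$ and $p_{j-1}\neq\ell(s)$ by assumption, we conclude $p_{j-1}=r(s)$, so $j=2$ and $p_2\notin I(s)$. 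Because $\ell(s)_y>r(s)_y$, the leftward ray from $r(s)$ at height $r(s)_y$ hits the vertical edge right below $\ell(s)$, giving $\ell(r(s))=\ell(s)$ and hence $N(r(s))\subseteq[\ell(s),r(r(s))]$; combined with $p_2\notin I(s)$ and $p_2\neq\ell(s)$, this forces $p_{2,x}>r(s)_x$.

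It remains to verify that $\ell(s)$ sees $p_2$. Since $r(s)$ is $\ell$-reflex and cannot be the right base vertex (whose $y$-coordinate would exceed $\ell(s)_y$), \cref{obs:only-2-visible-opposite-site} identifies $p_2$ as either $\cv(r(s))$ or $r(r(s))$, both of which have $y$-coordinate at least $r(s)_y$. The rectangle spanned by $\ell(s)$ and $p_2$ has its bottom edge at height $\min(\ell(s)_y,p_{2,y})\ge r(s)_y$; over $[\ell(s)_x,r(s)_x]$ the floor is at most $r(s)_y$ because $\ell(s)$ sees $r(s)$, and over $[r(s)_x,p_{2,x}]$ the floor equals $r(s)_y$ along the horizontal edge incident to $r(s)$, so the bottom edge, and therefore the entire rectangle, lies in $P$. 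The path $\langle s,\ell(s),p_2,\ldots,p_k\rangle$ is therefore valid and has length $k$, giving the desired shortest path through $\ell(s)$. I expect the main obstacle to be exactly this final visibility check: the hypothesis $\ell(s)_y>r(s)_y$ only truly bites here, by ensuring the floor stays low enough across $[\ell(s)_x,p_{2,x}]$ for $\ell(s)$ to substitute for $r(s)$ without any further modification to the tail of the path.
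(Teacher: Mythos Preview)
Your argument is correct and follows the same exchange strategy as the paper: locate the step where the shortest path leaves $I(s)$ through $r(s)$, identify the next vertex as $\cv(r(s))$ or $r(r(s))$ via \cref{obs:only-2-visible-opposite-site}, and replace $r(s)$ by the higher vertex $\ell(s)$. One small correction in your final visibility check: over $[r(s)_x,p_{2,x}]$ the floor need not \emph{equal} $r(s)_y$ (if $p_2=r(r(s))$ lies strictly right of $\cv(r(s))$, the floor drops below $r(s)_y$ there), but since $r(s)$ sees $p_2$ the floor is at most $r(s)_y$ on that whole interval, which is exactly what you need.
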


\begin{proof}
Assume $\ell(s)_y > r(s)_y$, the other case is symmetric.
Let $\pi: \langle s = p_0, \dots, p_k = t \rangle$ be a shortest 
path from $s$ to $t$. If $\pi$ contains $\ell(s)$, we 
are done. Otherwise, by 
\cref{lem:any-path-includes-interval-border}, there
is a $0 < j< k$ with $p_j = r(s)$ and $p_i \neq r(s)$, 
for $i > j$. Thus, $p_{j + 1} \notin I(s)$. Since 
we assumed $\ell(s)_y > r(s)_y$, it follows that 
$\ell(p_j) = \ell(r(s)) = \ell(s)$, so
$p_{j + 1}$ must be to the right of $p_j$. 
Therefore, by \cref{obs:only-2-visible-opposite-site}, 
we can conclude that
$p_{j + 1} \in \{\cv(p_j),r(p_j)\}$. 
Now, since $\ell(s)$ is
higher than $r(s)$, it can also see 
$\cv(p_j)$ and $r(p_j)$, in particular, 
it can see $p_{j+1}$.
Hence, $\langle s, \ell(s),p_{j+1}, \dots, p_k \rangle$ is 
a valid path of length at most
$|\pi|$, so there exists a shortest path from $s$ to
$t$ through $\ell(s)$.
\end{proof}

The next lemma considers the case where $t$ is in $I(s)$.
Then, the near and far dominator 
are the potential vertices that lie on a shortest
path from $s$ to $t$.

\begin{lemma}
\label[lemma]{lem:nd-sees-fd-simple}
Let $s$ and $t$ be two vertices with
$t \in I(s) \setminus N(s)$. Then,
$\nd(s,t)$ is reflex and either
$\fd(s,t)=\ell(\nd(s,t))$ or
$\fd(s,t)=r(\nd(s,t))$.
\end{lemma}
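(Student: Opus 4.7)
Without loss of generality I assume that $t$ lies strictly to the right of $s$, and write $v = \nd(s,t)$. The plan is to rule out a convex $v$ by a contradiction argument, then to identify $\fd(s,t)$ as $r(v)$ by tracking the first visible vertex of $s$ past the obstruction that hides $t$; the alternative $\fd(s,t)=\ell(v)$ in the lemma corresponds to the symmetric WLOG in which $t$ lies to the left of $s$.

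For reflexivity, suppose $v$ is convex. If $v$ is $\ell$-convex, its horizontal edge at height $v_y$ extends rightwards to $\cv(v)$. Since $v \in N(s)$, the rectangle spanned by $s$ and $v$ is contained in $P$; extending it along this horizontal boundary edge shows that the rectangle spanned by $s$ and $\cv(v)$ is also contained in $P$, so $\cv(v) \in N(s)$. Either $\cv(v)_x \le t_x$, contradicting the choice of $v$ as rightmost, or $\cv(v)_x > t_x$, which would make $t$ a vertex with $v_x < t_x < \cv(v)_x$; this is impossible since the horizontal edge from $v$ to $\cv(v)$ has no interior vertex by general position. If $v$ is $r$-convex, let $u$ be its partner at $x=v_x$, an $\ell$-reflex vertex strictly above $v$; the vertical segment from $v$ to $u$ lies on the boundary of $P$, so the rectangle from $s$ to $u$ lies in $P$ and $u \in N(s)$. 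Now $u$ and $v$ tie for rightmost at $x=v_x$, but $u$ is closer to the base line, so the tiebreak in the definition of $\nd$ selects $u$, contradicting $v = \nd(s,t)$.

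Once $v$ is reflex, I consider the case in which $v$ is $r$-reflex (the $\ell$-reflex case is handled analogously). By definition $r(v)$ is the upper endpoint of the first vertical edge met by the rightward ray from $v$ at height $v_y$. I would then establish: (a) no vertex $w$ with $v_x<w_x<r(v)_x$ is visible from $s$, and (b) $r(v) \in N(s)$. For (a), general position forbids a third vertex at height $v_y$ beyond $v$ and $\cv(v)$, so the bottom boundary of $P$ lies strictly below $y=v_y$ on $(v_x,r(v)_x)$. Any vertex $w$ in that strip has $w_y<v_y$, and its visibility from $s$ would require the bottom boundary to stay at most $w_y$ on $[s_x,w_x]$, which fails just to the left of $v_x$, where the bottom reaches $v_y$. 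For (b), $v \in N(s)$ gives the bottom boundary at most $v_y$ on $[s_x,v_x]$ (and in particular $v_y \le s_y$), and the definition of $r(v)$ gives the same inequality on $[v_x,r(v)_x]$; a short case split on whether $r(v)_y$ exceeds $s_y$ verifies that the rectangle from $s$ to $r(v)$ lies in $P$. Together, (a) and (b) identify $r(v)$ as the leftmost vertex of $N(s)$ strictly to the right of $t$: its partner at $x=r(v)_x$ is lower than $v_y$ and thus blocked from $s$ by the argument of (a), so the tiebreak selects $r(v)$ itself. Hence $\fd(s,t) = r(v)$.

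The principal obstacle is the degenerate configurations: when the partner of $v$ or of $\cv(v)$ coincides with $t$, when $v$, $\cv(v)$, or $r(v)$ is a base vertex, or when $s_y<v_y$ forces the argument that ``$\cv(v)$ extends the visibility of $v$'' to be recast in terms of the vertical partner instead. My plan leans on the hypothesis $t \notin N(s)$ together with general position to reduce each of these cases to one of the analyzed situations, shifting $v$ or its companion by at most one step along the boundary while preserving the claimed relationship between $\nd(s,t)$ and $\fd(s,t)$.
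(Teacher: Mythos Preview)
Your proof of reflexivity tracks the paper's closely, with one pleasant shortcut: in the $r$-convex case you stop at the tiebreak rule (the vertical partner $u$ sits at the same $x$-coordinate as $v$ and is closer to the base line, so the definition of $\nd$ would have chosen $u$), whereas the paper pushes one step further to $\cv(u)$ to obtain a strictly-rightward witness. Your version is shorter and equally valid.

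For the identification $\fd(s,t)=r(v)$, your route differs from the paper's. The paper argues \emph{from the near dominator's viewpoint}: it shows $\fd(s,t)$ is strictly higher than $v$ and co-visible with $v$ (otherwise an intermediate visible vertex would contradict the choice of dominators), and then invokes Observation~3.2 to conclude $\fd(s,t)=r(v)$. You argue \emph{from $s$'s viewpoint}: the horizontal edge terminating at the $r$-reflex vertex $v$ blocks $s$ from every vertex strictly between $v$ and $r(v)$ (your~(a)), while $r(v)$ itself is visible from $s$ (your~(b)); hence $r(v)$ is the leftmost neighbour of $s$ past $t$. The paper's approach is slicker because it recycles Observation~3.2; yours is more self-contained but leans on the facts $s_x<v_x$ and $s_y\ge v_y$, which you use in~(a) without deriving them (they follow from the $r$-reflex shape of $v$ together with $s$ seeing $v$ from the left, but this should be stated).

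One caution: the phrase ``the $\ell$-reflex case is handled analogously'' is too quick. When $t$ lies to the right of $s$, an $\ell$-reflex near dominator essentially cannot occur: if $s_y\ge v_y$ then $\cv(v)$, which lies to the right of $v$, is visible from $s$ and contradicts maximality of $v$; and if $s_y<v_y$ then $r(s)=v$, forcing $t_x=v_x$, a degenerate boundary configuration rather than a symmetric case. So this subcase collapses rather than mirroring the $r$-reflex analysis. The paper's proof glosses over the same point (it applies Observation~3.2 without separating the two reflex types), so this is not a defect peculiar to your argument, but your plan would be cleaner if it simply ruled out the $\ell$-reflex subcase up front instead of deferring to analogy.
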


\begin{proof}
Without loss of generality, $t$ lies strictly to the right 
of $s$. First, assume that $\nd(s,t)$ 
is $\ell$-convex. Since $s$ can
see $\nd(s, t)$ and since $\nd(s,t)$ is to the right of $s$,
it follows that $s$ and 
$\nd(s, t)$ share the same
vertical edge. Then, 
$\cv(\nd(s,t))$ is
also visible from $s$ and its horizontal distance 
to $t$ is smaller. This contradicts the definition 
of $\nd(s,t)$.

Next, assume that $\nd(s,t)$ is $r$-convex. 
Let $v$ be the reflex vertex sharing a 
vertical edge with  $\nd(s,t)$. 
Then, $N(\nd(s, t)) \subseteq N(v)$ and
$v \in N(s)$. Furthermore, since $t$ is strictly to the right
of $v$ but still inside $I(s)$, the vertices $v$ and $r(s)$ must be 
distinct.
Thus, $v_y < s_y$, so that $\cv(v)$ is also visible
from $s$. Moreover, the horizontal distance of 
$\cv(v)$ and $t$ is
smaller than the horizontal distance of $\nd(s,t)$ and 
$t$. This again contradicts the definition of 
$\nd(s,t)$. The first part of the lemma follows.

It remains to show that $\fd(s,t) = r(\nd(s,t))$.
First of all, $\fd(s,t)$ is higher
than $\nd(s,t)$, since otherwise 
$\fd(s,t)$ would not be visible from $s$. Moreover, if 
$\nd(s,t)$ and $\fd(s,t)$ are not 
co-visible, there must be a vertex $v$ strictly between 
$\nd(s,t)$ and $\fd(s,t)$
that is visible from $s$ and higher than $\nd(s,t)$.
Now, either
$t\in[\nd(s,t),v]$ or $t\in[v,\fd(s,t)]$. 
In the first case, the horizontal distance between $v$ and $t$ 
is smaller than between $t$ and $\fd(s, t)$, and in the
second case, the horizontal distance between $v$ and $t$ 
is smaller than between $t$ and $\nd(s, t)$. Either
case leads to a contradiction.
Therefore, $\fd(s,t)$ is higher than $\nd(s,t)$, strictly to
the right of $\nd(s,t)$ and visible from $\nd(s,t)$. Thus,
\cref{obs:only-2-visible-opposite-site} gives $\fd(s,t) = r(\nd(s,t))$.
\end{proof}

\subsection{The Routing Scheme}
\begin{figure}
\begin{center}
\includegraphics[scale=0.9]{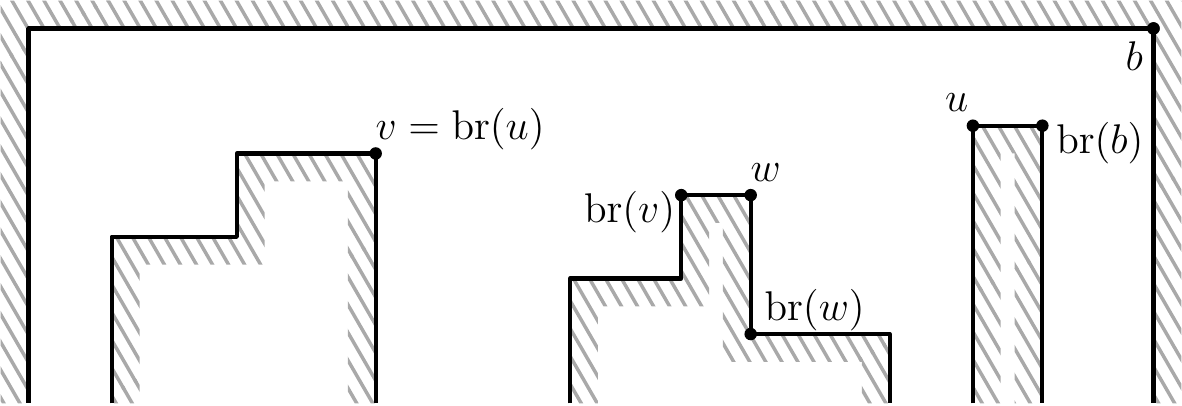}
\end{center}
\caption{The breakpoints of some vertices.}
\label{fig:breakpoint}
\end{figure}
We now describe our routing scheme and prove that it 
gives a shortest path.
\subparagraph*{Labels and routing tables.}
Let $v$ be a vertex. If $v$ is convex and not a
base vertex, it is labeled with its $\id$, i.e.,
$\lab(v)=v_{\id}$. Otherwise, 
suppose that $v$ is an $r$-reflex vertex or the
left base vertex.
The \emph{breakpoint of $v$}, $\br(v)$, is 
defined as the left endpoint of the horizontal 
edge with the highest $y$-coordinate to the right 
of and below $v$ that is visible from $v$; analogous
definitions apply to $\ell$-reflex vertices and the right 
base vertex; see \cref{fig:breakpoint}. 
Then, the label of $v$ consists
of the $\id$s of $v$ and its breakpoint, i.e.,
$\lab(v)=(v_{\id},\br(v)_{\id})$. Therefore,
$\Lab(n) = 2\cdot\lceil\log n\rceil$.
The routing table stores one bit,
indicating whether $\ell(v)_y>r(v)_y$, or not.
Hence, $\Tab(n)=1$.

\subparagraph*{The routing function.}
We are given the current vertex $s$  and the label 
$\lab(t)$ of the target vertex $t$. The routing function
does not use any information from the header, i.e., $H(n) = 0$.
If $t$ is visible from $s$, i.e.,
if $\lab(t) \in \lab(N(s))$, we directly go from 
$s$ to $t$ on a shortest path.
Thus, assume 
that $t$ is not visible from $s$.
First, we check whether $t\in I(s)$.
This is done as follows: we determine the smallest 
and largest $\id$ in the link table $\lab(N(s))$ of $s$.
The corresponding vertices are $\ell(s)$ and $r(s)$. 
Then, we can check whether $t_{\id} \in[\ell(s)_{\id},r(s)_{\id}]$,
which is the case if and only if $t\in I(s)$. Now,
there are two cases, 
illustrated in \cref{fig:routing-cases-simple}.
\begin{figure}
\begin{center}
\includegraphics{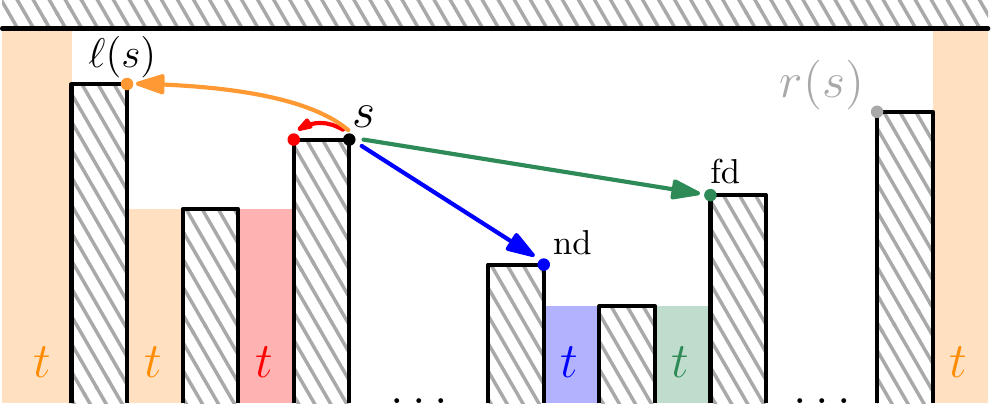}
\end{center}
\caption{The cases where the vertex $t$ lies and the corresponding
vertices where the data packet is sent to. If $t\in [\ell(s),s]$ we
have $\nd(s,t)=\cv(s)$ and
$\fd(s,t)=\ell(s)$.}
\label{fig:routing-cases-simple}
\end{figure}
First, suppose $t \notin I(s)$.
If the bit in the routing table of $s$ indicates that $\ell(s)$ is 
higher than $r(s)$, we take the hop to $\ell(s)$; otherwise,
we take the hop to $r(s)$.
By \cref{lem:taking-higher-is-better}, this hop
lies on a shortest path from $s$ to $t$.

Second, suppose that 
$t \in I(s)\setminus N(s)$.
This case is a bit more involved. We use the link table
$\lab(N(s))$ of $s$ and the label $\lab(t)$ of $t$ to 
determine $\fd(s,t)$ and $\nd(s,t)$. 
Again, we can do this by comparing the $\id$s.
\cref{lem:nd-sees-fd-simple} states that either
$\fd(s,t)=\ell(\nd(s,t))$ or
$\fd(s,t)=r(\nd(s,t))$.
We discuss the case that $\fd(s,t)=
r(\nd(s,t))$, the other case is symmetric. By
\cref{lem:any-path-includes-interval-border},
any shortest path from $s$ to $t$ includes
$\fd(s,t)$ or $\nd(s,t)$. Moreover, due to
\cref{lem:nd-sees-fd-simple}, $\nd(s,t)$ is reflex,
and we can use its label to access
$b_{\id}=\br(\nd(s,t))_{\id}$. The vertex $b$
splits $I(s,t) = [\nd(s,t),\fd(s,t)]$
into two disjoint
subintervals $[\nd(s,t),b]$ and
$[\cv(b),\fd(s,t)]$.
Also, $b$ and $\cv(b)$ are not visible from $s$,
as they are located strictly between the far and the near dominator.
Based on $b_{\id}$, we can now decide on the next hop.

If $t \in [\nd(s,t),b]$, we take the hop to
$\nd(s,t)$. If $t=b$, our packet uses
a shortest path of length $2$. Thus, assume that $t$ lies
between $\nd(s,t)$ and $b$. This is only possible
if $b$ is $\ell$-reflex, and we can apply
\cref{lem:any-path-includes-interval-border}
to see that any shortest path from $s$ to $t$ includes
$\nd(s,t)$ or $b$. But since $d(s,b)=2$, our
data packet routes along a shortest path.

If $t\in[\cv(b),\fd(s,t)]$, we take the hop
to $\fd(s,t)$. If $t=\cv(b)$, our packet
uses a shortest path of length $2$. Thus, assume that $t$
lies between $\cv(b)$ and $\fd(s,t)$. This
is only possible if $\cv(b)$ is $r$-reflex,
so we can apply \cref{lem:any-path-includes-interval-border}
to see that any shortest path from $s$ to $t$ uses
$\fd(s,t)$ or $\cv(b)$. Since
$d(s,\cv(b))=2$,
our packet routes along a shortest path.
The following theorem summarizes our discussion.

\begin{restate}[thm:simple-routing]
\begin{theorem}[restated]
Let $P$ be a simple histogram with $n$ vertices. There is a routing scheme
for $G(P)$ with a routing table with 1 bit, without headers, having label
size $2\cdot\lceil\log n\rceil$, such that we can route between any two
vertices on a shortest path.
\end{theorem}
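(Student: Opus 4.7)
The plan is to consolidate the case analysis performed in the preceding subsection into a single correctness statement and then check the space budget.

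First, I would verify the space bounds. Each label is either one vertex index (for convex non-base vertices) or a pair of indices (the vertex's own index and that of its breakpoint), so $\Lab(n) \le 2\lceil\log n\rceil$. The routing table stores only the bit indicating whether $\ell(s)_y > r(s)_y$, so $\Tab(n) = 1$; and the routing function never reads or writes a header, so $H(n) = 0$.

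Next, for the shortest-path property, I would argue by induction on $d(s,t)$ that every hop made by the routing function lies on some shortest $s$-to-$t$ path, so the packet reaches $t$ in exactly $d(s,t)$ steps. The base case $d(s,t) = 1$ handles $t \in N(s)$: the routing function detects $\lab(t) \in \lab(N(s))$ and hops directly to $t$. For the inductive step I would follow the case distinction of the routing function. If $t \notin I(s)$, the vertices $\ell(s)$ and $r(s)$ are identifiable as the endpoints of $\lab(N(s))$ in index order, the one-bit routing table selects the higher one, and \cref{lem:taking-higher-is-better} certifies that this hop starts a shortest path. If instead $t \in I(s)\setminus N(s)$, then $\nd(s,t)$ and $\fd(s,t)$ are recovered from $\lab(N(s))$ and $\lab(t)$ by index comparisons; by \cref{lem:nd-sees-fd-simple}, $\nd(s,t)$ is reflex and its label provides the breakpoint $b$. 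The breakpoint splits $I(s,t)$ into $[\nd(s,t),b]$ and $[\cv(b),\fd(s,t)]$, and the scheme hops to $\nd(s,t)$ or $\fd(s,t)$ according to which subinterval contains $t$. \cref{lem:any-path-includes-interval-border} applied to the subinterval containing $t$ (together with the observation that the ``opposite'' endpoint $b$ or $\cv(b)$ is at distance exactly $2$ from $s$) certifies that the chosen hop lies on a shortest path. Concatenating with the inductive hypothesis applied from the chosen neighbor then gives a shortest $s$-to-$t$ path overall.

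The main obstacle I anticipate is the bookkeeping in the second case: I need to argue carefully that index comparisons on indices in $\lab(N(s))$ really determine the horizontal ordering of $s$, $\nd(s,t)$, $\fd(s,t)$, and $t$ (so that $\nd(s,t)$ and $\fd(s,t)$ are found correctly from $\lab(t)$ alone), and that the split by $b$ exhausts all possibilities, including the boundary cases $t = b$ and $t = \cv(b)$, which collapse to shortest paths of length exactly $2$. Once this is laid out, the theorem follows directly from the inductive argument together with the space bookkeeping above.
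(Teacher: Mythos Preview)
Your proposal is correct and follows essentially the same approach as the paper: the same case split ($t\in N(s)$; $t\notin I(s)$ handled via \cref{lem:taking-higher-is-better}; $t\in I(s)\setminus N(s)$ handled via the breakpoint of $\nd(s,t)$ and \cref{lem:any-path-includes-interval-border} together with $d(s,b)=d(s,\cv(b))=2$), and the same space accounting. Your explicit framing as an induction on $d(s,t)$ is only a presentational difference from the paper's ``each hop lies on a shortest path'' narrative.
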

\end{restate}

\section{Double Histograms}
Let $P$ be a double histogram with $n$ vertices. Similar to the simple
histogram case, we first focus on the
visibility and the structure of shortest paths in $P$.
Again, if a target vertex $t$ is not in the interval $I(s)$ of a current
vertex $s$, we should widen the interval as fast as possible. However, 
in contrast
to simple histograms, we can now change sides arbitrarily often.
Nevertheless, we can guarantee that in each step, the interval comes closer
to $t$. Once we have reached the case that $t$ is in the interval of 
the current vertex, we again
have to find the right pocket. Unlike in simple histograms, this case
is now simpler to describe.

\subsection{Visibility in Double Histograms}
The structure of the shortest paths in double histograms can be much 
more involved than in simple histograms; in particular,
\cref{lem:any-path-includes-interval-border}
does not hold anymore. However, the following
observations provide some structural insight that 
can be used for an efficient routing scheme.

\begin{observation}
\label{obs:mutually-vis-iff}
Two vertices $v,w$ are co-visible
if and only if $v \in I(w)$ and $w \in I(v)$.
\end{observation}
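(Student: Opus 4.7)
The plan is to prove the two directions of the equivalence separately. The forward direction is a short geometric argument using the rectangle in the definition of co-visibility. The backward direction is the more interesting one and requires combining the $x$-monotonicity of $P$ with the role of the base line.

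For the ($\Rightarrow$) direction, suppose $v$ and $w$ are co-visible, so the axis-aligned rectangle $R$ spanned by $v$ and $w$ lies in $P$. Without loss of generality, $v_x \leq w_x$. The horizontal edge of $R$ at height $w_y$ is a segment from $(v_x, w_y)$ to $w$, and it lies entirely in $P$. By definition, $\ell(w)$ is obtained by shooting a leftward ray from $w$ until it first hits the boundary of $P$; since this ray remains in $P$ at least until it reaches $(v_x, w_y)$, its first boundary intersection satisfies $\ell(w)_x \leq v_x$. Combined with the trivial bound $v_x \leq w_x \leq r(w)_x$, this gives $v \in I(w)$. A symmetric argument using the horizontal edge of $R$ at height $v_y$ yields $w \in I(v)$.

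For the ($\Leftarrow$) direction, suppose $v \in I(w)$ and $w \in I(v)$, and again assume $v_x \leq w_x$. From $v \in I(w)$ we have $\ell(w)_x \leq v_x$, so the leftward ray defining $\ell(w)$ remains in $P$ for every $x \in [v_x, w_x]$; hence $(x, w_y) \in P$ throughout this range. Symmetrically, from $w \in I(v)$ we get $(x, v_y) \in P$ for every $x \in [v_x, w_x]$. Now fix such an $x$. Because $P$ is $x$-monotone, its vertical slice $\{y : (x, y) \in P\}$ is a single interval containing both $v_y$ and $w_y$. If $v$ and $w$ lie on the same side of the base line, this interval already covers the vertical extent of $R$ at $x$. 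If they lie on opposite sides, then the slice additionally contains $(x,0)$, since the base line spans the full $x$-extent of $P$ and lies in $P$; as an interval containing $v_y$, $0$, and $w_y$, the slice again covers the vertical extent of $R$ at $x$. In either case the vertical slice of $R$ at $x$ is contained in $P$, and thus $R \subseteq P$, so $v$ and $w$ are co-visible.

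The main obstacle is the opposite-sides case of the backward direction: it does not arise for simple histograms (where $x$-monotonicity alone would suffice), and it is precisely where the defining property of the base line is required, namely that every vertical slice of $P$ meets the base line and is therefore a single connected interval bridging the two sides.
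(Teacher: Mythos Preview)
Your proof is correct and follows essentially the same approach as the paper: the forward direction is the containment $N(v)\subseteq I(v)$, and the backward direction establishes that the top and bottom edges of the spanning rectangle lie in $P$, then invokes the $x$-monotone structure of $P$ to conclude that the whole rectangle does. The paper phrases the last step as ``left and right boundary of $Q$ are also in $P$, and $P$ has no holes,'' whereas you argue slice by slice; these are equivalent.

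One remark: your case split on whether $v$ and $w$ lie on the same side of the base line, and the accompanying discussion of the ``main obstacle,'' are unnecessary. You have already used that $P$ is a simple $x$-monotone polygon to conclude that each vertical slice $\{y:(x,y)\in P\}$ is a single interval; once that interval contains both $v_y$ and $w_y$, it contains the segment between them, irrespective of the sign of $v_yw_y$. The base line plays no role in this observation beyond what $x$-monotonicity already guarantees.
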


\begin{proof}
The forward direction is immediate, since co-visibility implies
$v \in N(w) \subseteq I(w)$ and $w \in N(v) \subseteq I(v)$. 
For the backward direction, let
$Q$ be the rectangle spanned by $v$ and $w$. Since 
$v \in I(w)$ and $w \in I(v)$,
the upper and lower boundary of $Q$ do
not contain a point outside $P$. As $P$ is a double histogram,
this implies that the left and right boundary of $Q$ also do not
contain any point outside $P$. The claim follows since $P$ has no
holes.
\end{proof}

\begin{observation}
\label{obs:overlapping-intervals}
Let $a$, $b$, $c$, and $d$ be vertices in $P$ with
$a_x\leq b_x\leq c_x \leq d_x$. If $a\in I(c)$ and $d\in I(b)$, then
$b$ and $c$ are co-visible.
\end{observation}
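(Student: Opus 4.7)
The plan is to reduce the claim to the previous observation by showing that $b$ and $c$ witness the condition $b \in I(c)$ and $c \in I(b)$; once this is established, \cref{obs:mutually-vis-iff} immediately yields that $b$ and $c$ are co-visible.

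First I would unpack the definitions. The interval of a vertex $v$ is $I(v) = [\ell(v), r(v)]$, which, by the definition of $[p,q]$, consists of all vertices $u \in V(P)$ with $\ell(v)_x \le u_x \le r(v)_x$. In particular, since $\ell(v)_x \le v_x \le r(v)_x$, the assumption $a \in I(c)$ gives $\ell(c)_x \le a_x$ and the assumption $d \in I(b)$ gives $d_x \le r(b)_x$.

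Next I would chain these inequalities with the hypothesis $a_x \le b_x \le c_x \le d_x$. For $b \in I(c)$, I need $\ell(c)_x \le b_x \le r(c)_x$: the left inequality follows from $\ell(c)_x \le a_x \le b_x$, and the right inequality is just $b_x \le c_x \le r(c)_x$. Symmetrically, for $c \in I(b)$, the left inequality is $\ell(b)_x \le b_x \le c_x$, and the right inequality follows from $c_x \le d_x \le r(b)_x$. Since both $b$ and $c$ are vertices of $P$, we conclude $b \in I(c)$ and $c \in I(b)$.

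Finally I would invoke \cref{obs:mutually-vis-iff} to conclude that $b$ and $c$ are co-visible. There is no real obstacle here; the only thing worth double-checking is that the $I(\cdot)$-membership condition is really a purely $x$-coordinate condition, which is exactly what the definition of $[p,q]$ gives, so the argument is entirely an interval-chasing computation.
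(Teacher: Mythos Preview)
Your proof is correct and takes essentially the same approach as the paper: the paper's proof is the one-liner ``This follows immediately from \cref{obs:mutually-vis-iff},'' and your argument is exactly the unpacking of that line---showing $b\in I(c)$ and $c\in I(b)$ by chaining the $x$-coordinate inequalities and then invoking \cref{obs:mutually-vis-iff}.
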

\begin{proof}
This follows immediately from \cref{obs:mutually-vis-iff}.
\end{proof}

\begin{observation}
\label{obs:intervals-on-same-side-are-subsets}
The intervals on one side of $P$ form a laminar family,
i.e., 
for any two vertices $v$ and $w$ on the same side of the
base line, we have (i) $I(v) \cap I(w) = \emptyset$, (ii)
$I(v)\subseteq I(w)$, or (iii) $I(w)\subseteq I(v)$.
\end{observation}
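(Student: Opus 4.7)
My plan is a proof by contradiction. By reflecting $P$ vertically if necessary, I may assume that both $v$ and $w$ lie strictly above the base line, so $v_y,w_y>0$. Suppose $I(v)\cap I(w)\neq\emptyset$ yet neither interval contains the other. After possibly swapping $v$ and $w$, I can then arrange the $x$-intervals so that $\ell(v)_x<\ell(w)_x\leq r(v)_x<r(w)_x$, i.e.\ they properly overlap without nesting.

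The geometric fact I will exploit is that, by definition of $\ell(v)$ and $r(v)$, the horizontal segment at height $v_y$ from $\ell(v)$ to $r(v)$ lies entirely in $P$, while at the same height the point obtained by stepping an infinitesimal distance past $r(v)_x$ (resp.\ before $\ell(v)_x$) is outside $P$; the analogue holds for $w$. Combined with the $x$-monotonicity of $P$ (so its intersection with any vertical line is a single segment) and the fact that the entire base line $y=0$ lies in $P$, this pins down the ``height'' of $P$ at every $x$: every vertical cross-section is a segment containing the base-line point $y=0$ and reaching up to some value which is at least $v_y$ whenever $x\in[\ell(v)_x,r(v)_x]$, and similarly for $w$.

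I then split on the three possible comparisons of $v_y$ and $w_y$. If $v_y<w_y$, I take $x_0=r(v)_x+\eps$ for sufficiently small $\eps>0$. Since $r(v)_x<r(w)_x$, we still have $x_0\in[\ell(w)_x,r(w)_x]$, so $(x_0,w_y)\in P$; together with $(x_0,0)\in P$, $x$-monotonicity then forces $(x_0,v_y)\in P$ because $0<v_y<w_y$, contradicting the definition of $r(v)$. The case $v_y>w_y$ is symmetric: taking $x_0=\ell(w)_x-\eps$, the point lies in the range of $v$'s leftward ray, so $(x_0,v_y)\in P$ forces $(x_0,w_y)\in P$, contradicting the definition of $\ell(w)$. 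Finally, if $v_y=w_y$, then general position (at most two vertices share any horizontal line) forces $v$ and $w$ to be the two endpoints of a single horizontal edge of $P$, whence $\ell(v)=\ell(w)$, $r(v)=r(w)$, and $I(v)=I(w)$.

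The main subtlety I expect is the boundary cases where one of $\ell(v),r(v),\ell(w),r(w)$ is not a reflex vertex but a point on the left or right boundary of $P$. In such a case the corresponding $x$-coordinate equals $x_{\min}$ or $x_{\max}$ of $P$, which already forces one of the WLOG inequalities to fail (for instance, if $r(v)$ is on the right boundary then $r(v)_x=x_{\max}\geq r(w)_x$, contradicting $r(v)_x<r(w)_x$). A short case check dispatches all such degeneracies up front, ensuring that the overlap-without-nesting configuration considered in the main argument never actually has boundary points at its extremes.
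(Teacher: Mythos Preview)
Your proof is correct, but it follows a genuinely different route than the paper's. The paper's argument leverages the two preceding observations: from the proper-overlap configuration $\ell(v)_x<\ell(w)_x\leq r(v)_x<r(w)_x$ it invokes \cref{obs:overlapping-intervals} to deduce that $\ell(w)$ and $r(v)$ are co-visible, and then notes that since both lie on the same side of the base line, whichever of the two is farther from the base line has its horizontal ray blocked by the other---so either $r(v)$ cannot see left of $\ell(w)$ (contradicting that $r(v)$ sees $\ell(v)$) or $\ell(w)$ cannot see right of $r(v)$ (contradicting that $\ell(w)$ sees $r(w)$). Your argument is instead a direct, self-contained computation with vertical cross-sections: you compare $v_y$ and $w_y$, pick a point $x_0$ just beyond the shorter ray's endpoint, and use $x$-monotonicity together with the base line to force $(x_0,\min(v_y,w_y))\in P$, contradicting the definition of that ray's endpoint. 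Your approach buys independence from \cref{obs:mutually-vis-iff} and \cref{obs:overlapping-intervals} at the price of a height case split and an $\eps$-argument; the paper's approach is terser precisely because that co-visibility machinery is already in place. One minor simplification: your separate $v_y=w_y$ branch is unnecessary---the argument you give for $v_y<w_y$ goes through verbatim when $v_y=w_y$, since then $(x_0,v_y)=(x_0,w_y)\in P$ directly yields the contradiction, so you can fold it into a single $v_y\leq w_y$ case and avoid the discussion of shared horizontal edges.
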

\begin{proof}
Suppose there are two vertices $v$ and $w$ on the same side
of $P$ with
$\ell(v)_x< \ell(w)_x\leq r(v)_x<r(w)_x$. By \cref{obs:overlapping-intervals},
$\ell(w)$ and $r(v)$ are co-visible. Since $\ell(w)$ and $r(v)$ are on
the same side of $P$, either $r(v)$ cannot see any
vertex to the left of $\ell(w)$ or $\ell(w)$ cannot see any vertex to the
right of $r(v)$. This contradicts the fact that the $\ell(v)$ and $r(v)$
as well as $\ell(w)$ and $r(w)$ must be co-visible.
\end{proof}

\subsection{Paths in a Double Histogram}

To understand shortest paths in double histograms, we distinguish
three cases, depending on where $t$ lies relative
to $s$. First, if $t$ is close, i.e., if $t\in I(s)$, we 
focus on the near and
far dominators. Second, if $t\notin I(s)$ but
there is a vertex $v$ visible from $s$ with $t\in I(v)$, 
then we can find a vertex on a shortest path from 
$s$ to $t$. Third, if there is
no visible vertex $v$ from $s$ such that $t\in I(v)$, we can
apply our intuition from simple histograms: go as fast as possible 
towards the
base line. Details follow.

\subparagraph*{The target is close.}
Let $s, t$ be two vertices with
$t \in I(s) \setminus N(s)$. In contrast to simple histograms,
$\fd(s,t)$ now might not be a vertex.
Furthermore, $\fd(s,t)$ and 
$\nd(s,t)$ might be on different sides of the 
base line. 
In this case,
\cref{lem:nd-sees-fd-simple} no longer holds. 
However,
the next lemma establishes a visibility
relation between them; see \cref{fig:fd-and-nd}.
\begin{figure}
\begin{center}
\includegraphics[scale=1.0]{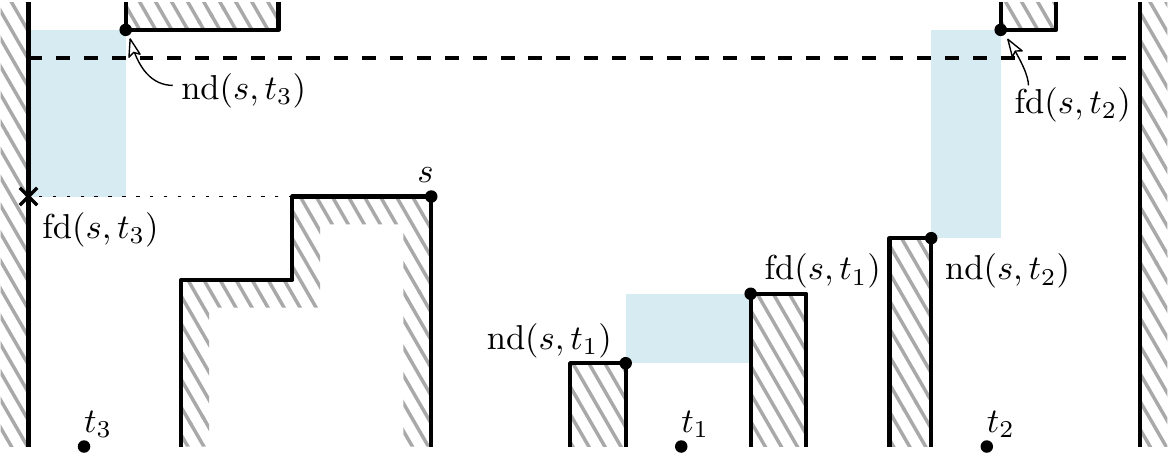}
\end{center}
\caption{The far and the near dominator can see each other.}
\label{fig:fd-and-nd}
\end{figure}
\begin{lemma}
\label[lemma]{lem:nd-sees-fd}
Let $s, t \in V(P)$ with
$t \in I(s) \setminus N(s)$.
Then, $\nd(s,t)$ and
$\fd(s,t)$ are co-visible.
\end{lemma}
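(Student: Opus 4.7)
The plan is to invoke Observation \ref{obs:mutually-vis-iff}: writing $\nd$ and $\fd$ for $\nd(s,t)$ and $\fd(s,t)$, it suffices to prove $\nd\in I(\fd)$ and $\fd\in I(\nd)$. I assume without loss of generality that $t$ lies strictly to the right of $s$, so $s_x<\nd_x<t_x\leq\fd_x$. The inclusion $\nd\in I(\fd)$ is immediate: co-visibility of $s$ and $\fd$ gives $s\in I(\fd)$, so $\ell(\fd)_x\leq s_x<\nd_x<\fd_x\leq r(\fd)_x$.

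For the reverse inclusion $\fd\in I(\nd)$, the point is to show $r(\nd)_x\geq\fd_x$, and I argue by contradiction. Suppose $r(\nd)_x<\fd_x$. If $r(\nd)$ is not a vertex, then it lies on the right boundary of $P$, so $r(\nd)_x$ is the maximum $x$-coordinate of $P$ and hence at least $\fd_x$, a contradiction. So $r(\nd)$ is a genuine vertex on an interior vertical edge; by symmetry I assume $\nd$, and therefore also $r(\nd)$, lies above the base line, with $0<r(\nd)_y<\nd_y$.

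The crux is to show $r(\nd)\in N(s)$. I plan to do this by applying Observation \ref{obs:mutually-vis-iff} a second time. The inclusion $r(\nd)\in I(s)$ follows from $\ell(s)_x\leq s_x<r(\nd)_x<\fd_x\leq r(s)_x$. For $s\in I(r(\nd))$ it suffices to show $\ell(r(\nd))_x\leq s_x$, i.e., that the horizontal leftward ray from $r(\nd)$ at height $r(\nd)_y$ is unobstructed until at least $x=s_x$. On $[\nd_x,r(\nd)_x]$, the upper boundary lies above $\nd_y>r(\nd)_y$ by the very choice of $r(\nd)$ as the first rightward drop; on $[s_x,\nd_x]$ it lies above $\nd_y>r(\nd)_y$ because $s$ sees $\nd$; and the lower boundary lies below the base line, hence strictly below $r(\nd)_y$, throughout.

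Given $r(\nd)\in N(s)$ together with $\nd_x<r(\nd)_x<\fd_x$, a contradiction follows: if $r(\nd)_x<t_x$ then $r(\nd)$ is a better candidate than $\nd$ for the rightmost vertex of $N(s)$ to the left of $t$; if $r(\nd)_x>t_x$ then $r(\nd)$ is a better candidate than $\fd$ for the leftmost vertex of $N(s)$ to the right of $t$; and the coincident case $r(\nd)_x=t_x$ is absorbed by the tiebreaking rules in the definitions of $\nd$ and $\fd$. The main obstacle is the visibility statement $r(\nd)\in N(s)$; the key simplification is that Observation \ref{obs:mutually-vis-iff} reduces the two-dimensional rectangle-containment condition to two one-dimensional interval inclusions, which can be verified by tracking the boundary only along the single horizontal level $y=r(\nd)_y$.
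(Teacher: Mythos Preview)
Your proof is correct and follows the same approach as the paper: suppose $r(\nd)$ is strictly left of $\fd$, show $r(\nd)\in N(s)$ via \cref{obs:mutually-vis-iff}, and contradict the choice of the dominators. The only difference is that the paper obtains $s\in I(r(\nd))$ in one line from the inclusion $I(\nd)\subseteq I(r(\nd))$ (valid because $r(\nd)$ lies on the same side as $\nd$ but closer to the base line), and then wraps up with \cref{obs:overlapping-intervals} applied to $s,\nd,\fd,r(\nd)$ rather than checking the two interval inclusions for $\nd$ and $\fd$ separately.
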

\begin{proof}
Without loss of generality, $t$ is strictly to the 
right of $s$. Suppose for a contradiction that
$r(\nd(s,t))$ is strictly left of $\fd(s,t)$. Then, 
we get $r(\nd(s,t))\in I(s)$. 
Also, $s \in I(\nd(s,t))
\subseteq I(r(\nd(s,t)))$. Hence, by \cref{obs:mutually-vis-iff},
$s$ can see $r(\nd(s,t))$. But then $r(\nd(s,t))$
is a vertex strictly between the near and far dominator
visible from $s$, 
contradicting the choice of the dominators.
Thus, $s_x \leq \nd(s,t)_x\leq\fd(s,t)_x\leq
r(\nd(s,t))_x$, and
\cref{obs:overlapping-intervals} gives the result.
\end{proof}
The proof of the next lemma uses \cref{lem:nd-sees-fd} 
to find a shortest path vertex.
\begin{lemma}
\label[lemma]{lem:shortest-path-via-dn-or-df}
One of $\nd(s,t)$ or $\fd(s,t)$ is on a
shortest path from $s$ to $t$.
If $\fd(s,t)$ is not a vertex, then
$\nd(s,t)$ is on a shortest path from $s$ to $t$.
\end{lemma}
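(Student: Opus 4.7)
The plan is to take any shortest path $\pi=\langle s,p_1,\dots,p_k=t\rangle$ and argue that its first hop $p_1$ can be replaced by one of the two dominators without changing the length. Without loss of generality, assume $t$ lies strictly to the right of $s$. Since $t\notin N(s)$, by general position $p_1$ lies either strictly left of $t$ or strictly right of $t$. In the former case the defining property of $\nd(s,t)$ (rightmost neighbor of $s$ left of $t$) gives $p_{1,x}\le\nd(s,t)_x$; in the latter the existence of a neighbor of $s$ right of $t$ forces $\fd(s,t)$ to be a vertex with $\fd(s,t)_x\le p_{1,x}$. The two sub-cases are symmetric, so I focus on placing $\nd(s,t)$ on a shortest path when $p_1$ lies left of $t$.

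The key claim is that $\nd(s,t)$ and $p_2$ are co-visible (with $p_2=t$ if $k=2$). Once this is established, the replacement path $\langle s,\nd(s,t),p_2,\dots,p_k\rangle$ has length $k$ and passes through $\nd(s,t)$. I would prove the claim via \cref{obs:mutually-vis-iff}, verifying $\nd(s,t)\in I(p_2)$ and $p_2\in I(\nd(s,t))$. For the first containment, $p_1\in I(p_2)$ (from co-visibility of $p_1,p_2$) combined with $p_{1,x}\le\nd(s,t)_x$ handles the lower bound $\ell(p_2)_x\le\nd(s,t)_x$; the upper bound $\nd(s,t)_x\le r(p_2)_x$ needs a separate geometric argument based on where $p_2$ sits relative to $\fd(s,t)$. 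For the second containment, the bound $\ell(\nd(s,t))_x\le p_{2,x}$ is easy, and $p_{2,x}\le r(\nd(s,t))_x$ relies on the crucial inequality $\fd(s,t)_x\le r(\nd(s,t))_x$ extracted from the proof of \cref{lem:nd-sees-fd}.

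The second statement of the lemma then follows without extra work: if $\fd(s,t)$ is not a vertex, then $s$ has no neighbor strictly right of $t$, so the sub-case ``$p_1$ right of $t$'' cannot occur, and $\nd(s,t)$ must lie on a shortest path.

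The main obstacle I foresee is the overshoot configuration where $p_{2,x}>\fd(s,t)_x$, i.e.\ the shortest path ``jumps'' past both dominators in its second step. There, neither $\nd(s,t)\in I(p_2)$ nor $p_2\in I(\nd(s,t))$ is immediate from the ingredients above. My plan for this is to switch the argument and instead show that $\fd(s,t)$ lies on a shortest path: the non-decreasing chain of $x$-coordinates $p_{1,x}\le\nd(s,t)_x\le\fd(s,t)_x\le p_{2,x}$ combined with $p_1\in I(p_2)$ and $\nd(s,t)\in I(\fd(s,t))$ (from \cref{lem:nd-sees-fd}) puts us exactly in the setting of \cref{obs:overlapping-intervals}, which yields co-visibility of $\fd(s,t)$ and $p_2$; \cref{obs:intervals-on-same-side-are-subsets} will be the tool of choice to rule out any remaining pathological interval positions that might obstruct this conclusion.
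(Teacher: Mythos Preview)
Your plan diverges from the paper's in one decisive choice: you anchor the argument at the \emph{first} hop $p_1$ and try to connect a dominator to $p_2$, whereas the paper takes $p_j$ to be the \emph{last} vertex on the shortest path that lies outside $I(s,t)=[\nd(s,t),\fd(s,t)]$. That choice forces $p_{j+1}\in I(s,t)$, and then the ordered quadruple $p_j,\nd(s,t),p_{j+1},\fd(s,t)$ (when $p_j$ is left of $\nd(s,t)$) matches the hypotheses of \cref{obs:overlapping-intervals} exactly: $p_j\in I(p_{j+1})$ from co-visibility and $\fd(s,t)\in I(\nd(s,t))$ from \cref{lem:nd-sees-fd}. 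The conclusion is that $\nd(s,t)$ sees $p_{j+1}$, and one replaces the whole prefix $p_1,\dots,p_j$ by $\nd(s,t)$.

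Your overshoot argument does not go through as written. With the order $p_1,\nd(s,t),\fd(s,t),p_2$, the two facts you invoke are $p_1\in I(p_2)$ and $\nd(s,t)\in I(\fd(s,t))$; in the notation of \cref{obs:overlapping-intervals} that is $a\in I(d)$ and $b\in I(c)$, but the observation requires $a\in I(c)$ and $d\in I(b)$, and its conclusion concerns $b$ and $c$, not $c$ and $d$. Unwinding what you actually need for $\fd(s,t)$ and $p_2$ to be co-visible, the missing piece is $p_{2,x}\le r(\fd(s,t))_x$, and nothing you have established bounds $p_2$ on the right: $p_2$ is only known to lie in $I(p_1)$, which can extend well past $r(\fd(s,t))$ (for instance when $p_1$ sits on the opposite side of the base line from $\fd(s,t)$). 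The mention of \cref{obs:intervals-on-same-side-are-subsets} does not rescue this, since $p_1$ and $\fd(s,t)$ need not be on the same side. Symmetrically, the ``undershoot'' configuration $p_{2,x}<\nd(s,t)_x$ is not treated at all; there your claim that $\ell(\nd(s,t))_x\le p_{2,x}$ is ``easy'' is exactly what fails.

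Both difficulties evaporate with the paper's choice of $j$: once $p_{j+1}$ is pinned inside $[\nd(s,t),\fd(s,t)]$, there is no overshoot or undershoot to worry about, and a single clean application of \cref{obs:overlapping-intervals} finishes the argument. The second statement of the lemma then follows because $p_j$ cannot lie to the right of $\fd(s,t)$ when the latter is a point on the boundary rather than a vertex.
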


\begin{proof}
Without loss of generality, $t$ is to the right of $s$.
Let $\pi: \langle s = p_0, \dots, p_k = t\rangle$ 
be a shortest path from $s$ to $t$,
and let $p_j$ be the last vertex outside of $I(s,t)$.
If $j = 0$, then $p_{j+1}$ must be one of the 
dominators, since by definition they are the only 
vertices in $I(s, t)$ visible from $s$.
Now, assume $j \geq 1$.
If $p_j$ is to the left of $\nd(s,t)$, we apply
\cref{lem:nd-sees-fd} and \cref{obs:overlapping-intervals} on the
four points $p_j$, $\nd(s,t)$, $p_{j+1}$, and
$\fd(s,t)$ to conclude that $\nd(s,t)$ can see
$p_{j+1}$. Symmetrically, if $p_j$ is to the right of $\fd(s,t)$,
the same argument shows that the far dominator
can see $p_{j+1}$.
Thus, depending on the position of $p_j$ 
we can exchange the subpath $p_1, \dots, p_j$ in $\pi$ by 
$\nd(s,t)$ or $\fd(s, t)$
and get a valid path of length
$k - j + 1 \leq k$. The second part of the lemma holds because 
$p_j$ cannot be to the right of $\fd(s,t)$, 
if $\fd(s,t)$ is not a vertex but a point on the right boundary.
\end{proof}

Next, we consider the case where $\fd(s,t)$ is a vertex
but not on a shortest path from $s$ to $t$. Then, $\fd(s,t)$
cannot see $t$, and we define
$\fd^2(s,t)=\fd(\fd(s,t),t)$. By \cref{lem:nd-sees-fd},
$\nd(s,t)$ and $\fd(s,t)$ are co-visible,
so $\fd^2(s, t)$ has
to be in the interval $[\nd(s,t),t]$, and therefore it is
a vertex. The following lemma states that
$\fd^2(s,t)$ is strictly closer to $t$ than $s$; 
see \cref{fig:fd-fd}.
\begin{figure}
\begin{center}
\includegraphics[scale=1.0]{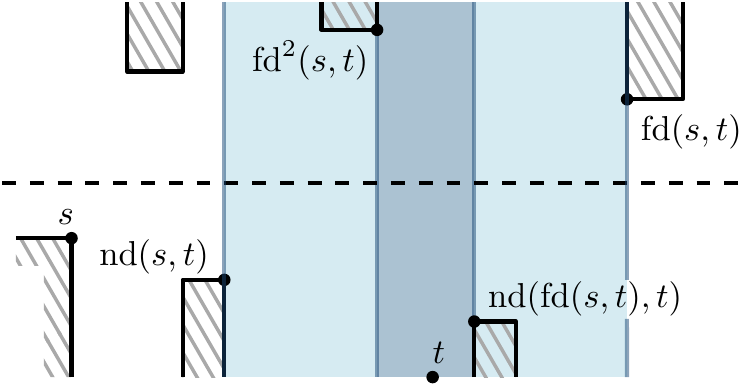}
\end{center}
\caption{$\fd^2(s,t)$ lies between $\nd(s,t)$ and
$\fd(s,t)$ and is closer to $t$ than $s$. The darker region
is $I(\fd(s,t),t)$ and a subset of $I(s,t)$, the brighter region.}
\label{fig:fd-fd}
\end{figure}
\begin{lemma}
\label[lemma]{lem:fd-fd-is-closer}
If $\fd(s,t)$ is a vertex but not
on a shortest path from $s$ to $t$, then we have
$d(\fd^2(s,t),t) = d(s,t)-1$.
\end{lemma}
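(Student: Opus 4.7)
Write $n = \nd(s,t)$, $f = \fd(s,t)$, $w = \fd^2(s,t) = \fd(f,t)$, and $d = d(s,t)$. Without loss of generality assume that $t$ lies strictly to the right of $s$. I will establish the two inequalities $d(w,t) \geq d-1$ and $d(w,t) \leq d-1$ separately.

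The lower bound is the easier direction, and I would prove it by contradiction. By construction we have $f \in N(s)$ and $w \in N(f)$, so the two-edge segment $s, f, w$ is a valid path. If we had $d(w,t) \leq d-2$, then concatenating this segment with a shortest path from $w$ to $t$ would yield a path from $s$ to $t$ of length at most $d$, i.e.\ a shortest path, and this shortest path uses the vertex $f = \fd(s,t)$. This contradicts the hypothesis that $\fd(s,t)$ lies on no shortest path between $s$ and $t$, so $d(w,t) \geq d-1$.

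For the upper bound, I first use \cref{lem:shortest-path-via-dn-or-df}: since $f$ is not on any shortest path, $n$ must be, and in particular $d(n,t) = d-1$. Fix a shortest path $n, q_2, q_3, \dots, q_d = t$ from $n$ to $t$. By \cref{lem:nd-sees-fd}, $n$ and $f$ are co-visible, so $n \in N(f)$; since $n_x \leq t_x$, $n$ is a legitimate candidate for the rightmost vertex of $N(f)$ to the left of $t$, so $n_x \leq w_x \leq t_x \leq f_x$. The plan is then to produce a path of length $d-1$ from $w$ to $t$ by showing that some $q_2$ on a shortest path from $n$ to $t$ is co-visible with $w$; replacing the initial hop $n \to q_2$ by $w \to q_2$ yields $w, q_2, \dots, q_d = t$ of length $d-1$. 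To obtain the visibility, in the configuration $n_x \leq w_x \leq q_2{}_x \leq f_x$ I apply \cref{obs:overlapping-intervals} to the quadruple $(n, w, q_2, f)$, using $n \in I(q_2)$ (because $q_2 \in N(n)$) and $f \in I(w)$ (because $w \in N(f)$); this precisely yields co-visibility of $w$ and $q_2$.

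The main obstacle is guaranteeing that some shortest path from $n$ to $t$ has its second vertex $q_2$ in the $x$-range $[w_x, f_x]$ required by the application of \cref{obs:overlapping-intervals}. I expect to handle the two bad subcases by rerouting: if all natural choices of $q_2$ satisfy $q_2{}_x > f_x$, then an analogous application of \cref{obs:overlapping-intervals} to $(n, f, q_2)$ shows $q_2$ is co-visible with $f$, allowing $f$ to be inserted into the path in a controlled way and exchanged with $w$ via the chain $f$--$n$--$q_2$; if instead $q_2{}_x < w_x$, I would invoke \cref{lem:shortest-path-via-dn-or-df} one level down at the pair $(n,t)$ to locate a different dominator-based vertex on a shortest path and argue that it must lie in the admissible range. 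This case analysis, together with the laminar structure of intervals on a single side of the base line (\cref{obs:intervals-on-same-side-are-subsets}), should close the argument and give $d(w,t) \leq d-1$, completing the proof.
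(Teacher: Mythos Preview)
Your overall plan matches the paper's: both argue the lower bound exactly as you do (the paper leaves it implicit), fix a shortest path $s,n,p_2,\dots,t$ through $n=\nd(s,t)$, and then show that $w=\fd^2(s,t)$ sees $p_2$ so that $\langle w,p_2,\dots,t\rangle$ has length $d-1$. Your ``good case'' argument via \cref{obs:overlapping-intervals} on $(n,w,q_2,f)$ is fine.

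The gap is in your two bad cases. For $q_{2,x}>f_x$ you invoke \cref{obs:overlapping-intervals} on a \emph{triple} $(n,f,q_2)$; even read charitably, the intended conclusion ``$f$ sees $q_2$'' would need $q_2\in I(f)$, i.e.\ $q_{2,x}\le r(f)_x$, and nothing you have said gives that (you only know $q_2\in I(n)$, and $r(n)$ need not lie in $I(f)$). For $q_{2,x}<w_x$, descending to $(n,t)$ via \cref{lem:shortest-path-via-dn-or-df} only helps if the dominator that lies on a shortest $n$--$t$ path is $\fd(n,t)$, which indeed sits in $[t,f]\subseteq[w,f]$; but if only $\nd(n,t)$ works and $\nd(n,t)_x<w_x$, you are back where you started, and your sketch gives no termination argument. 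Invoking \cref{obs:intervals-on-same-side-are-subsets} does not obviously rescue either case.

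The paper avoids both problems with a single ``last crossing'' trick. It does not try to force $p_2$ into a convenient range directly; instead it supposes $p_2\notin I(f,t)=[\,w,\nd(f,t)\,]$, takes the \emph{last} index $j\ge 2$ with $p_j\notin I(f,t)$ and $p_{j+1}\in I(f,t)$, and applies \cref{obs:overlapping-intervals} to the crossing step together with \cref{lem:nd-sees-fd} at the pair $(f,t)$. In either subcase ($p_{j,x}<w_x$ or $p_{j,x}>\nd(f,t)_x$) this yields that $w$ or $\nd(f,t)$ sees $p_{j+1}$, producing a path $\langle s,f,\ast,p_{j+1},\dots,t\rangle$ of length $k-j+2\le k$ through $f$, contradicting the hypothesis. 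Hence $p_2\in I(f,t)\subseteq I(w)$, and together with $w\in I(p_2)$ (since $p_2$ sees $n$, which is left of $w$, while $p_2$ itself is right of $w$) you get the desired co-visibility. Replacing your ad hoc case split by this last-crossing argument closes the proof.
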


\begin{proof}
Without loss of generality, $t$ is to the right of $s$.
By \cref{lem:shortest-path-via-dn-or-df}, $\nd(s,t)$ lies
on a shortest path from $s$ to $t$.
Let $\langle s = p_0, \nd(s, t) = p_1, p_2, \dots, p_k = t\rangle$ 
be such a shortest path.
We claim that $\fd^2(s,t)$ can see $p_2$.
Then, $\langle\fd^2(s,t), p_2, \dots, p_k =  t\rangle$ is a
valid path of length $k - 1 = d(s, t)-1$.
To prove that $\fd^2(s, t)$ can indeed see $p_2$, 
we show that
$p_2\in I\big(\!\fd^2(s,t)\big)$ and $\fd^2(s,t)\in I(p_2)$ and
then apply \cref{obs:mutually-vis-iff}.

First, we show $p_2 \in I(\fd(s,t),t)$ by contradiction. Thus,
suppose $p_2 \notin I(\fd(s,t),t)$.
Since $t \in I(\fd(s,t),t)$, there is a $j\geq 2$ 
with $p_{j+1}\in I(\fd(s,t),t)$
and $p_j\notin I(\fd(s,t),t)$.
First, if $p_{j,x}<\fd^2(s,t)_x$, then
$p_{j,x}<\fd^2(s,t)_x\leq p_{j+1,x}\leq \nd(\fd(s,t),t)_x$.
By \cref{lem:nd-sees-fd}, $\fd^2(s,t)$ and $\nd\big(\!\fd(s,t), t\big)$
are co-visible, so \cref{obs:overlapping-intervals} implies that
$\fd^2(s,t)$ and $p_{j+1}$ are co-visible.
Then $\langle s,\fd(s,t),\fd^2(s,t),p_{j+1},\dots,p_k = t\rangle$ is
a valid path of length $k-j+2\leq k$, contradicting the assumption 
that $\fd(s,t)$ is not on a shortest path.
If $\nd(\fd(s,t),t)_x<p_{j,x}$, it follows with the same reasoning that
$\nd(\fd(s,t),t)$ and $p_{j+1}$ are co-visible then
$\fd(s,t)$ is on
$\langle s, \fd(s,t),\nd(\fd(s,t),t),p_{j+1},\dots,p_k = t\rangle$
which is a valid path of length $k-j+2\leq k$.
This again contradicts the assumption.

Now, since $I(\fd(s,t),t) = \big[\!\fd^2(s,t),\nd(\fd(s,t),t)\big]
\subseteq I(\fd^2(s,t))$, 
we get
$p_2 \in I(\fd^2(s,t))$.
Since $p_2$ sees $\nd(s,t)$ which is to the left of
$\fd^2(s,t)$ and since $p_2$ is in $I(\fd(s,t),t)$, and
thus to the right of $\fd^2(s,t)$,
it follows that $\fd^2(s,t)\in I(p_2)$.
\end{proof}

\subparagraph*{The target can be made close in one step.}
Let $s, t$ be two vertices with $t \notin I(s)$ but there is a vertex
$v \in N(s)$ with $t\in I(v)$.
For clarity of presentation, we will always assume that $s$ 
is below the base line.
The crux of this case is this: there might
be many vertices visible
from $s$ that have $t$ in their interval. However, we can find a best
vertex as follows: once $t$ is in the interval
of a vertex, the goal is to shrink the interval as fast as possible.
Therefore, we must find a vertex $v \in N(s)$ whose left or right
interval boundary is closest to $t$ among all vertices in $N(s)$.
This leads to the following inductive definition of two sequences
$a^i(s)$ and $b^i(s)$ of vertices in $N(s)$.
For $i = 0$, we let $a^0(s) = b^0(s) = s$.
For $i > 0$,
if the set
$A^i(s)=\{v\in N(s) \mid \ell(v)_x < \ell(a^{i-1}(s))_x\}$
is nonempty, we define $a^i(s) = \argmin\{v_x \mid v\in A^i(s)\}$;
and $a^i(s)=a^{i-1}(s)$, otherwise. If the set
$B^i(s)=\{v \in N(s) \mid r(v)_x > r(b^{i-1}(s))_x\}$ is nonempty, we 
define $b^i(s) = \argmax\{v_x \mid v\in B^i(s)\}$;
and $b^i(s)=b^{i-1}(s)$, otherwise. We force unambiguity by choosing the
vertex closer to the base line.
Let $a^*(s)$ be the vertex with $a^*(s)=a^i(s)=a^{i-1}(s)$, for 
an $i > 0$, and 
$b^*(s)$ the vertex with $b^*(s)=b^i(s)=b^{i-1}(s)$,
for an $i > 0$. If the context is clear, we write
$a^i$ instead of $a^i(s)$ and $b^i$ instead of $b^i(s)$.

Let us try to understand this definition. 
For $i  \geq 0$, we write $\ell^i$ for 
$\ell(a^i)$; and we write $\ell^*$ for $\ell(a^*)$.
Then, we have $a^0 = s$ and $\ell^0 = \ell(s)$.
Now, if $\ell(s)$ is not a
vertex, then $a^* = s$, because there is no 
vertex whose left point is strictly to the left of the left 
boundary of $P$.
On the other hand, if $\ell^0$ is a vertex in $P$, we have 
$a^1 = \ell^0 = \ell(s)$, and $[\ell^1, a^1]$ is an 
interval between points on the lower side of $P$.
Then comes a (possibly empty) sequence of 
intervals $[\ell^2, a^2], [\ell^3, a^3], \dots, [\ell^k, a^k]$
between points on the upper side of $P$;
possibly followed by the interval $[\ell(r(s)), r(s)]$.
There are four possibilities for $a^*$: it could 
be $s$, $\ell(s)$, a vertex $a^i$ on the upper side of $P$,
or $r(s)$. If $a^* \neq s$, then the intervals
$[\ell^1, a^1] \subset [\ell^2, a^2] \subset \dots \subset [\ell^*, a^*]$
are strictly increasing: $\ell^i$ is strictly to the left
of $\ell^{i-1}$ and $a^i$ is strictly to the right of $a^{i-1}$; 
see \cref{fig:a-i-b-i}.
Symmetric observations apply for the $b^i$; we write 
$r^i$ for $r(b^i)$ and $r^*$ for $r(b^*)$. 

\begin{figure}
\begin{center}
\includegraphics[scale=0.9]{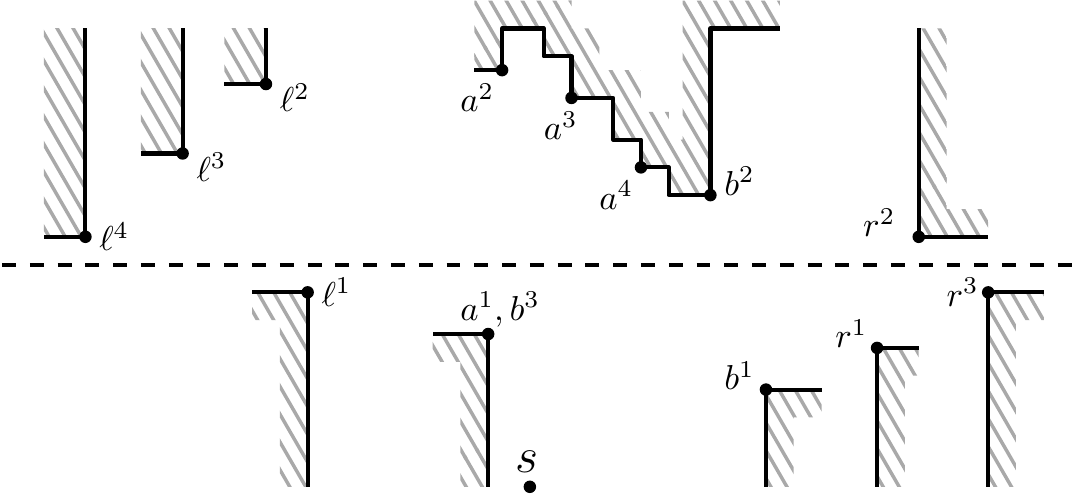}
\end{center}
\caption{The vertices $a^i$ and $b^i$ are illustrated. Observe that
$\ell(s)=a^1=b^3$ and $r(s)=b^1$.}
\label{fig:a-i-b-i}
\end{figure}
\begin{lemma}
\label[lemma]{lem:ai-seeslai-1}
For $i \geq 1$, the vertices
$\ell^{i-1}$ and  $a^{i}$ as well as 
$r^{i-1}$ and $b^i$ are 
co-visible.
\end{lemma}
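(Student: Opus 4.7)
By the left--right symmetry between the $a$- and $b$-sequences, it suffices to prove that $\ell^{i-1}$ and $a^i$ are co-visible; the statement for $r^{i-1}$ and $b^i$ follows by the mirror argument. I first dispose of the degenerate situations: if $A^i(s)=\emptyset$ then $a^i=a^{i-1}$, and $\ell^{i-1}=\ell(a^{i-1})$ is co-visible with $a^{i-1}=a^i$ along the horizontal segment realising the left point; the case $i=1$ is trivial as well, since then $a^1=\ell(s)=\ell^0$ whenever $A^1(s)\neq\emptyset$. So I may assume $i\geq 2$ and $a^i\in A^i(s)$, giving $\ell(a^i)_x<\ell^{i-1}_x$. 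By \cref{obs:mutually-vis-iff}, co-visibility of $\ell^{i-1}$ and $a^i$ reduces to the two containments $\ell^{i-1}\in I(a^i)$ and $a^i\in I(\ell^{i-1})$.

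The first containment follows from the chain
\[
\ell(a^i)_x<\ell^{i-1}_x=\ell(a^{i-1})_x\leq s_x\leq r(a^i)_x,
\]
where the strict inequality is by choice of $a^i$, the middle inequality is $s\in I(a^{i-1})$ via \cref{obs:mutually-vis-iff} applied to the co-visible pair $(s,a^{i-1})$, and the final one is $s\in I(a^i)$ by the same reasoning.

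For the second containment, my plan is to apply \cref{obs:overlapping-intervals} to the quadruple $(\ell(a^i),\ell^{i-1},a^i,a^{i-1})$. The witnessing memberships $\ell(a^i)\in I(a^i)$ and $a^{i-1}\in I(\ell^{i-1})$ are immediate from the definitions of the left point and the horizontal co-visibility of $\ell^{i-1}$ and $a^{i-1}$. The $x$-ordering $\ell(a^i)_x<\ell^{i-1}_x<a^i_x$ follows from the displayed chain together with $\ell^{i-1}_x<\ell^0_x=\ell(s)_x\leq a^i_x$, the latter valid for $i\geq 2$ using $a^i\in N(s)\subseteq I(s)$. Hence whenever $a^i_x\leq a^{i-1}_x$ holds, the observation applies and delivers co-visibility of $\ell^{i-1}$ and $a^i$ directly.

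The remaining case $a^{i-1}_x<a^i_x$ is the main obstacle. Here I plan to invoke the laminar structure of one-sided intervals given by \cref{obs:intervals-on-same-side-are-subsets}. Since $\ell^{i-1}$ and $a^{i-1}$ lie on the same side of the base line and are co-visible, with $\ell^{i-1}$ closer to the base line, laminarity forces $I(a^{i-1})\subseteq I(\ell^{i-1})$. If $a^i$ lies on the same side as $a^{i-1}$, then $I(a^i)$ and $I(a^{i-1})$ share $s$ and, combined with $\ell(a^i)_x<\ell(a^{i-1})_x$, laminarity gives $I(a^{i-1})\subseteq I(a^i)$; chasing these two nestings against the minimality of $a^i_x$ in the definition of the argmin pins down the nesting of $I(a^i)$ and $I(\ell^{i-1})$ from which $a^i\in I(\ell^{i-1})$ can be read off. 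If $a^i$ and $a^{i-1}$ lie on opposite sides of the base line, the rectangle realising co-visibility of $s$ and $a^i$ crosses the base line, and a cross-side application of \cref{obs:overlapping-intervals} with $s$ serving as the pivot vertex still delivers $a^i\in I(\ell^{i-1})$. Cleanly executing this side-based case split is the principal difficulty of the proof.
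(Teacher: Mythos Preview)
Your reduction via \cref{obs:mutually-vis-iff} to the two containments $\ell^{i-1}\in I(a^i)$ and $a^i\in I(\ell^{i-1})$ matches the paper, and your argument for the first containment is fine. The gap is entirely in the second.

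First, your case split is misaligned with the actual geometry: since $A^i(s)\subseteq A^{i-1}(s)$ and $a^{i-1}\notin A^i(s)$, the leftmost element of $A^i(s)$ satisfies $a^i_x\ge a^{i-1}_x$. So your ``easy'' case $a^i_x\le a^{i-1}_x$ is essentially vacuous, and what you call the ``main obstacle'' is the generic situation.

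Second, in that situation your laminarity argument does not close. In the same-side sub-case you obtain $I(a^{i-1})\subseteq I(\ell^{i-1})$ and $I(a^{i-1})\subseteq I(a^i)$, and laminarity then leaves two possibilities for $I(a^i)$ versus $I(\ell^{i-1})$. You try to exclude $I(\ell^{i-1})\subseteq I(a^i)$ via ``the minimality of $a^i_x$'', but that would require $\ell^{i-1}\in N(s)$, which is false: for $i\ge 2$ one has $\ell^{i-1}_x<\ell(s)_x$, hence $\ell^{i-1}\notin I(s)\supseteq N(s)$. In fact $I(\ell^{i-1})\subseteq I(a^i)$ is precisely what occurs in the typical case, so it cannot be excluded; one must still locate $a^i$ inside $I(\ell^{i-1})$ by other means. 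Your opposite-side sub-case is only a slogan: no concrete quadruple is named, and \cref{obs:overlapping-intervals} delivers co-visibility of two middle points, not a statement of the form $a^i\in I(\ell^{i-1})$.

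The paper sidesteps all of this by identifying $a^i$ explicitly. From $a^{i-1}\in I(\ell^{i-1})$ one gets $a^{i-1}_x\le r(\ell^{i-1})_x$, and from $\ell^{i-1}\in I(r(\ell^{i-1}))$ one gets $\ell(r(\ell^{i-1}))_x\le \ell^{i-1}_x$; so $r(\ell^{i-1})$ is a member of $A^i(s)$ whenever it is visible from $s$, and then $a^i=r(\ell^{i-1})$ by the argmin definition. If $r(\ell^{i-1})$ is not visible from $s$, the obstruction must be $r(s)$, and then $a^i=r(s)$. Either way $a^i\in[a^{i-1},r(\ell^{i-1})]\subseteq I(\ell^{i-1})$. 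This direct geometric identification of $a^i$ as $r(\ell^{i-1})$ (or $r(s)$) is the key step your proposal is missing.
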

\begin{proof}
We focus on $\ell^{i-1}$ and $a^i$. 
We show that $\ell^{i-1} \in I(a^i)$ and 
$a^i \in I(\ell^{i-1})$; the lemma 
follows from \cref{obs:mutually-vis-iff}.
The claim
$\ell^{i-1} \in I(a^i)$ is due to the facts that
$[\ell^i, a^i] \subseteq I(a^i)$ and $\ell^{i-1} \in [\ell^i, a^i]$
(this holds also for $i = 1$, as then $\ell^{i-1} = a^i$).
Next, since $a^{i-1} \in I(\ell^{i-1})$, the vertex $a^{i-1}$ is 
to the left of $r(\ell^{i-1})$; and
since $\ell^{i-1} \in I(r(\ell^{i-1}))$,
the point $\ell(r(\ell^{i-1}))$ is to
the left of $\ell^{i-1}$.
Thus, if $r(\ell^{i-1})$ is visible from $s$,
we have $a^i = r(\ell^{i-1})$, by the definition of $a^i$.
On the other hand, if $r(\ell^{i-1})$ is not visible from $s$,
the visibility must be blocked by $r(s)$, and then $a^i = r(s)$.
In either case, we have
$a^i \in [a^{i-1},r(\ell^{i-1})] \subseteq I(\ell^{i-1})$, 
as desired.
\end{proof}
Finally, the next lemma tells us the following: if $t \in [\ell^ *,r^*]$
we find a vertex $v \in N(s)$ with $t \in I(v)$. Its quite technical
proof needs \cref{lem:ai-seeslai-1}.
\begin{lemma}
\label[lemma]{lem:t-in-extended-interval}
If $t\in[\ell^i,\ell^{i-1}]$, for some $i \geq 1$,
then $a^i$ is on a shortest path from $s$ to $t$.
If $t\in[r^{i-1},r^i]$, for some $i \geq 1$,
then $b^i$ is on a shortest path from $s$ to $t$.
\end{lemma}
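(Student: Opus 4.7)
The plan is to establish $d(s,t) = 1 + d(a^i, t)$; combined with $a^i \in N(s)$ this places $a^i$ on a shortest $(s,t)$-path. I focus on the first statement, since the second is symmetric by mirroring $P$ horizontally. Assume $t \in [\ell^i, \ell^{i-1}]$ for some $i \geq 1$. The degenerate case $a^i = s$ is vacuous, so I may assume $a^i \in A^i \subseteq N(s)$, which gives $\ell(a^i) = \ell^i$ by definition.

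The upper bound $d(s,t) \leq 1 + d(a^i,t)$ follows from exhibiting the edge $s \to a^i$ followed by a shortest $(a^i,t)$-path. For the latter to exist it suffices to verify $t \in I(a^i)$: we have $\ell(a^i)_x = \ell^i_x \leq t_x$, and on the other side $t_x \leq \ell^{i-1}_x \leq \ell^0_x \leq a^i_x$, using that $a^i \in N(s) \subseteq I(s)$. Hence $t \in [\ell(a^i), a^i] \subseteq I(a^i)$.

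The matching lower bound $d(s,t) \geq 1 + d(a^i,t)$ is the technical core. I would fix an arbitrary shortest path $\pi = s = p_0, p_1, \dots, p_k = t$ and show that $\pi$ can be rerouted through $a^i$ without changing its length; concretely, if $p_1 \neq a^i$, I would prove that $a^i$ and $p_2$ are co-visible, so that $s, a^i, p_2, \dots, p_k$ is again a valid path of length $k$. By \cref{obs:mutually-vis-iff}, this reduces to showing $p_2 \in I(a^i)$ and $a^i \in I(p_2)$. I would induct on $i$ and use \cref{lem:ai-seeslai-1} crucially: it supplies co-visibility of $\ell^{j-1}$ with $a^j$ for every $j \leq i$, and iterating \cref{obs:overlapping-intervals} along the zig-zag chain $a^0, \ell^0, a^1, \ell^1, \dots, a^i$ propagates visibility outward level by level, so that whichever pair $(\ell^{j-1}, a^j)$ is needed to cover the position of $p_2$ can be invoked.

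The hard part is precisely this propagation. Because consecutive $a^j$'s alternate sides of the base line, the co-visibility claim for $a^i$ and $p_2$ splits according to which side $p_2$ lies on, and each side uses a different link in the chain supplied by \cref{lem:ai-seeslai-1}. The laminarity of same-side intervals from \cref{obs:intervals-on-same-side-are-subsets} pins $p_2$ into the correct nested sub-interval of $I(a^i)$, while a case analysis on whether the original first step $p_1$ lies in $A^i$ or not (the latter forcing $p_1$'s reachable cone to sit entirely to the right of $\ell^{i-1}$, which bounds how far left $p_2$ can travel) completes the argument. It is this case analysis, with its dependence on the alternating-side structure of the $a^j$'s, that I expect to make the proof ``quite technical'' as flagged in the lemma's preamble.
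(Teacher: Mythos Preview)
Your high-level plan (exhibit a path through $a^i$ of length $d(s,t)$) is the right goal, but the specific mechanism you propose --- replace $p_1$ by $a^i$ and argue that $a^i$ sees $p_2$ --- is not what the paper does, and your central claim can fail. Nothing prevents a shortest path from having $p_2 \notin I(a^i)$. For instance, if some $p_1 \in N(s)$ satisfies $\ell(p_1)_x < \ell^i_x$ (i.e.\ $p_1 \in A^{i+1}(s)$, which is perfectly possible when $a^{i+1} \neq a^i$), then $p_2$ may sit strictly to the left of $\ell^i$, outside $I(a^i)$; symmetrically $p_2$ can sit to the right of $r(a^i)$. Triangle-inequality bookkeeping only gives $d(a^i,p_2) \in \{1,2,3\}$, and the cases $2,3$ are not ruled out by any of the tools you list (\cref{lem:ai-seeslai-1}, \cref{obs:overlapping-intervals}, \cref{obs:intervals-on-same-side-are-subsets}). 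Your sketch of ``propagating visibility along the zig-zag chain $a^0,\ell^0,\dots,a^i$'' and the case split on $p_1 \in A^i$ or not does not address this: when $p_1 \in A^{i+1}$ you get $p_1 \in A^i$ too, yet $p_2$ can still escape $I(a^i)$ on the left.

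The paper's proof avoids this pitfall by anchoring the rerouting not at $p_1$ but at the \emph{last} vertex $p_j$ outside $[\ell^i,\ell^{i-1}]$. The point is that when the direct hop $a^i \to p_{j+1}$ fails, one can insert one or two intermediate vertices ($\ell^i$, or $\ell^{i-1}$ together with $p_j$), and the position of $p_j$ forces $j \geq 2$ or $j \geq 3$ respectively, so the detour still fits within length $k$. The case analysis is on the location of $p_j$ (left of $\ell^i$, or right of $\ell^{i-1}$), with the latter case further refined via an auxiliary vertex $b$ on the opposite side of the base line. This flexibility in how many vertices are inserted is exactly what your fixed ``replace $p_1$ by $a^i$'' scheme lacks.
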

\begin{proof}
We focus on the first statement; see
\cref{fig:t-in-extended-interval}.
Let $\pi: \langle s = p_0,\dots,p_k = t \rangle$ be a 
shortest path from $s$ to $t$, and let $p_j$ be 
the last vertex on $\pi$ outside 
of $[\ell^i,\ell^{i-1}]$. 
If $j = 0$, then $p_{j+1}$ must be $\ell^0 = \ell(s)$, because this 
is the only vertex $\ell^{i-1}$ visible from $s$.
Then, $i = 1$ and $a^i = \ell(s)$ is on $\pi$. 
From now on, we assume that $j \geq 1$.

\begin{figure}
\begin{center}
\includegraphics[scale=0.8]{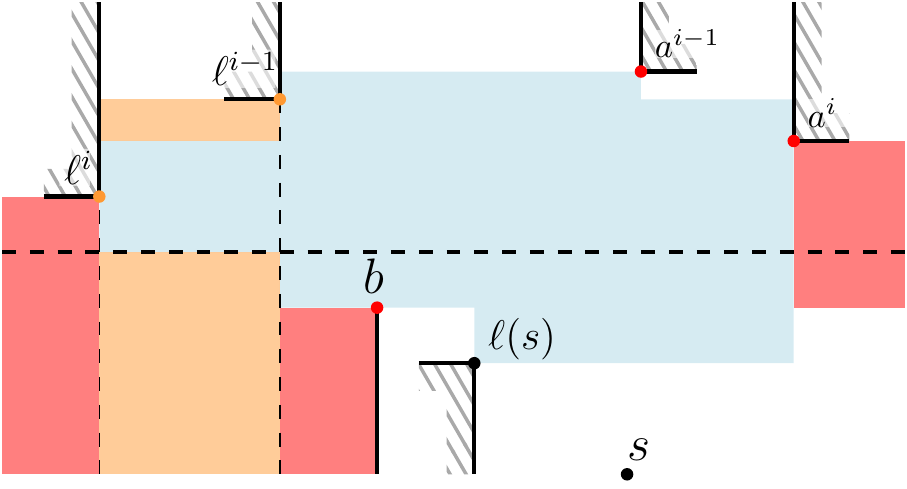}
\end{center}
\caption{The vertex $a^i$ is on a shortest path. The vertex $p_j$ can
lie in the red regions, the vertex $p_{j+1}$ can lie in the orange region,
and the blue region cannot contain any point outside of $P$.}
\label{fig:t-in-extended-interval}
\end{figure}

First, suppose that
$p_{j+1}$ and $a^i$ are co-visible. Then
$\langle s, a^i, p_{j+1},\dots, p_k\rangle$ is a path from $s$ 
to $t$ that uses $a^i$ and has length $k-j + 1 \leq k$.
Second, suppose that $p_{j+1}$ and $a^i$ are not 
co-visible. Then, the contrapositive of 
\cref{obs:overlapping-intervals} applied to the four points 
$\ell^i$, $p_{j+1}$, $a^i$, and $p_j$ shows that $p_j$ is 
strictly to the left of $a^i$. There are two subcases, depending on whether 
$p_j$ is strictly to the left of $\ell^i$ or strictly 
to the right of $\ell^{i-1}$.

If $p_j$ is strictly to the left of $\ell^i$, then $j \geq 2$, since
$\ell^i$ is to the left of $\ell^0 = \ell(s)$ and we need at least two 
hops to reach a point strictly to the left of $\ell(s)$ from $s$. We apply
\cref{obs:overlapping-intervals} on the four points $p_j$, $\ell^i$,
$p_{j+1}$, and $a^i$, and get that $\ell^i$ and $p_{j+1}$ 
are co-visible. Hence,
$\langle s,a^i,\ell^i,p_{j+1},\dots,p_k\rangle$ is a path 
that uses $a^i$ and has length
$k-j+2\leq k$.

Finally, assume that $p_j$ is strictly to the right of $\ell^{i-1}$.
By \cref{lem:ai-seeslai-1}, $a^i$
can see $\ell^{i-1}$. Thus, $p_{j+1} \neq \ell^{i-1}$ and
there is no vertex strictly between $\ell^{i-1}$ and $a^i$ 
on the same side as 
$\ell^{i-1}$ that can see a
vertex strictly to the left of $\ell^{i-1}$.
Thus, $p_j$ and $a^{i-1}$ are
on different sides of the base line. 
Let $b$ be the rightmost vertex that (i) lies on the 
same side of $P$ as $p_j$; (ii) is strictly between 
$\ell^{i-1}$ and $a^i$; (iii) is closest to the base line.
The vertex $b$ exists (since $p^j$ is a candidate),
is not visible from  $s$
(because $b$ is strictly
left of $a^i$ and can see strictly left of $\ell^{i-1}$);
and thus strictly left of $\ell(s)$.
The vertex $p_j$
cannot be strictly to the right of $b$, as otherwise $b$ would 
obstruct visiblity between $p_j$ and
$p_{j+1}$. 
We conclude that $j \geq 2$,  since we need at least two hops 
to reach a point strictly to the left of $\ell(s)$ from $s$.
If $p_j \in \{b,\cv(b)\}$, $a^i$ can see $p_j$ and thus,
$\langle s,a^i,p_j,p_{j+1},\dots,p_k\rangle$ is a path of length $k-j+2\leq k$
using $a^i$.
If $p_j \not\in \{b,\cv(b)\}$, then
$b$ is strictly closer to the base line than $p_j$.
Then, we have $j\geq 3$, because we need two hops to cross the 
vertical line through $\ell(s)$ and one more hop to cross the
horizontal line through $b$.
We apply \cref{obs:overlapping-intervals} on 
the four points $p_{j+1}$, $\ell^{i-1}$, $p_j$, and $b$ to 
conclude that $\ell^{i-1}$ can see $p_j$. Hence, 
$\langle s,a^i,\ell^{i-1},p_j,p_{j+1},\dots,p_k\rangle$ is a path of
length $k-j+3\leq k$ using $a^i$. 
\end{proof}

\subparagraph*{The target is far away.}
Finally, we consider the case that there is no vertex
$v \in N(s)$ with $t\in I(v)$, i.e., $t\notin[\ell^*,r^*]$.
The intuition now is as follows: to widen the interval, we should go to
a vertex that is visible from $s$, but closest to the base line. In simple
histograms, there was only one such vertex, but in double histograms 
there might
be a second one on the other side. These two vertices are the
\emph{dominators of $s$}. These two dominators might have their
own dominators, and so on. This leads to the following inductive definition.

For $k \geq 0$, 
we define the $k$-th \emph{bottom
dominator} $\bd^k(s)$, 
the $k$-th \emph{top dominator} $\td^k(s)$, 
and the $k$-th \emph{interval} $I^k(s)$ of $s$.
For any set $ Q\subset V(P)$, we write  $Q^-$ (resp.~$Q^+$) for all
points in $Q$ below (resp.~above) the base line.
We set $\bd^0(s) = \td^0(s) = s$ and
$I^0(s) = \{ s \}$.
For $k > 0$, we set 
$I^k(s) = I(\bd^{k-1}(s))\cup I(\td^{k-1}(s))$.
If $I^k(s)^-$
is nonempty, we let $\bd^k(s)$ be the leftmost 
vertex inside $I^k(s)^-$ that minimizes the distance to the base line. 
If $I^k(s)^+$ is nonempty, we let $\td^k(v)$ be the 
leftmost vertex inside
$I^k(s)^+$ that minimizes the distance to the base line; see \cref{fig:dominators}. 
If one of the
two sets is empty, the other one has to be nonempty, since
$s \in I^k(s)$. In this case, we let
$\td^k(s) = \bd^k(s)$. 
We write $\bd(s)$ for $\bd^1(s)$ and $\td(s)$ for $\td^1(s)$.
\begin{figure}
\begin{center}
\includegraphics[scale=1]{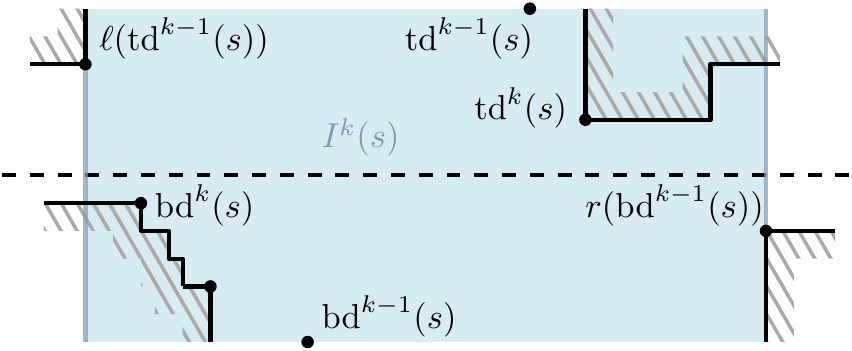}
\end{center}
\caption{The $k-1$-th and $k$-th dominators and the $k$-th interval.}
\label{fig:dominators}
\end{figure}
Observe, that $I^1(s) = I(s)$ and $I^2(s) = [\ell^*,r^*]$.
If $I(\bd^{k-1}(s)) = V(P)$, we have 
$\bd^k(s)= \bd^{k-1}(s)$. 
The same holds for the top dominator.
We provide a few technical properties concerning the
$k$-th interval as well as the $k$-th dominators.

\begin{lemma}
\label[lemma]{lem:Ik-inside-Ibdk-and-Itdk}
For any $s \in V(P)$ and $k \geq 0$,
we have
$I^k(s) \subseteq I(\bd^k(s))\cap
I(\td^k(s))$.
\end{lemma}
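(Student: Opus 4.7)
The plan is induction on $k$. The base case $k=0$ is immediate since $\bd^0(s)=\td^0(s)=s$ and $I^0(s)=\{s\}\subseteq I(s)$. For the inductive step, I would first establish that $I^k(s)$ spans a single contiguous range of $x$-coordinates. Applying the inductive hypothesis to $s$ (and unfolding recursively down to $k=0$) yields $s \in I^{k-1}(s) \subseteq I(\bd^{k-1}(s)) \cap I(\td^{k-1}(s))$, so both of the $x$-intervals $[\ell(\bd^{k-1}(s))_x, r(\bd^{k-1}(s))_x]$ and $[\ell(\td^{k-1}(s))_x, r(\td^{k-1}(s))_x]$ contain $s_x$. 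Hence $I^k(s)$ consists precisely of the vertices whose $x$-coordinates lie in the single interval $[L,R]$, where $L=\min\{\ell(\bd^{k-1}(s))_x,\ell(\td^{k-1}(s))_x\}$ and $R=\max\{r(\bd^{k-1}(s))_x,r(\td^{k-1}(s))_x\}$.

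I would then focus on $I^k(s) \subseteq I(\bd^k(s))$; the other containment follows by the symmetric argument above the base line. Assume $I^k(s)^- \neq \emptyset$, so that $\bd^k(s)$ is chosen below the base line (if $I^k(s)^-=\emptyset$ then $\bd^k(s)=\td^k(s)$ and only the upper-side argument is needed). The structural observation that drives the argument is that $P \cap \{y \leq 0\}$ is itself a histogram with the base line as its flat top. Within $[L,R]$, no downward ``column'' of this lower histogram can have depth strictly less than $|\bd^k(s)_y|$: such a column would contribute a bottom vertex lying in $I^k(s)^-$ but strictly closer to the base line than $\bd^k(s)$, contradicting the maximality defining $\bd^k(s)$. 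Consequently, at height $\bd^k(s)_y$ the interior of $P$ extends uninterrupted across the open interval $(L,R)$, so shooting the leftward and rightward horizontal rays from $\bd^k(s)$ yields $\ell(\bd^k(s))_x \leq L$ and $r(\bd^k(s))_x \geq R$. Every vertex of $I^k(s)$ has $x$-coordinate in $[L,R] \subseteq [\ell(\bd^k(s))_x, r(\bd^k(s))_x]$, so it lies in $I(\bd^k(s))$.

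The main obstacle will be making the appeal to the lower histogram structure fully rigorous, especially for columns that straddle the endpoints of $[L,R]$, for ties at the maximal $y$-value (resolved by the leftmost tie-break rule together with general position), and for the cases where $\ell(\bd^{k-1}(s))$ or $\ell(\td^{k-1}(s))$ is a point on the left boundary of $P$ rather than a vertex. I expect \cref{obs:intervals-on-same-side-are-subsets} (laminar family of intervals on one side) to streamline the bookkeeping: any two intervals on the same side are comparable, and combining this with the maximality of $\bd^k(s)_y$ rules out the degenerate possibility that some $I(u)$ with $u \in I^k(s)^-$ strictly contains $I(\bd^k(s))$. This forces $I(u) \subseteq I(\bd^k(s))$ for every such $u$, which, together with the $x$-range containment established above for the vertices of $I^k(s)^+$, completes the proof.
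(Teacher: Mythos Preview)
Your proposal is correct and follows essentially the same idea as the paper: since $\bd^k(s)$ is, by definition, the vertex of $I^k(s)$ on its side that is closest to the base line, no vertex of $I^k(s)$ can obstruct the horizontal rays from $\bd^k(s)$ within the $x$-range of $I^k(s)$, hence $I^k(s)\subseteq I(\bd^k(s))$ (and symmetrically for $\td^k(s)$). The paper dispatches this in three sentences without making the induction explicit; your write-up is more careful in first establishing that $I^k(s)$ is a single contiguous $x$-range (which indeed relies on the statement for $k-1$) and in flagging the boundary and tie-breaking edge cases, but none of this changes the underlying argument. The appeal to \cref{obs:intervals-on-same-side-are-subsets} is unnecessary --- the direct obstruction argument already suffices --- though it does no harm.
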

\begin{proof}
We have $I^k(s) \subseteq I(\bd^k(s))$,
since by definition the interval $I^k(s)$ contains no 
vertex that is on the same side as $\bd^k(s)$ and strictly
closer to the base line, so no vertex can obstruct horizontal 
visibility of $\bd^k(s)$ in $I^k(s)$.
Analogously, we have $I^k(s)\subseteq I(\td^k(s))$. 
The claim follows.
\end{proof}

\begin{lemma}
\label[lemma]{lem:dominators-see-each-other}
For any $s \in V(P)$ and $k \geq 0$,
$\bd^k(s)$ and $\td^k(s)$ are co-visible.
\end{lemma}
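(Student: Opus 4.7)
The plan is to reduce the claim directly to the previous lemma (\cref{lem:Ik-inside-Ibdk-and-Itdk}) and the visibility characterization from \cref{obs:mutually-vis-iff}, with a quick check of the degenerate cases in the definition.

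First I would dispose of the trivial case $k = 0$: by definition $\bd^0(s) = \td^0(s) = s$, so the two vertices coincide and are co-visible. For $k \geq 1$, I would also separately dispose of the case in which one of $I^k(s)^-$ or $I^k(s)^+$ is empty. The definition then forces $\bd^k(s) = \td^k(s)$, so again co-visibility is immediate.

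The main case is $k \geq 1$ with both $I^k(s)^-$ and $I^k(s)^+$ nonempty. Here I would observe that the definition of $\bd^k(s)$ picks a vertex inside $I^k(s)^-\subseteq I^k(s)$, and analogously $\td^k(s)\in I^k(s)^+\subseteq I^k(s)$. Applying \cref{lem:Ik-inside-Ibdk-and-Itdk}, we obtain
\[
    \bd^k(s) \in I^k(s) \subseteq I(\td^k(s))
    \quad\text{and}\quad
    \td^k(s) \in I^k(s) \subseteq I(\bd^k(s)).
\]
Thus each of the two vertices lies in the interval of the other, and \cref{obs:mutually-vis-iff} yields that $\bd^k(s)$ and $\td^k(s)$ are co-visible.

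I do not foresee a real obstacle: the heavy lifting has already been done in \cref{lem:Ik-inside-Ibdk-and-Itdk}, where one argues that no vertex of $I^k(s)$ can block the horizontal visibility of $\bd^k(s)$ or $\td^k(s)$ (since by construction no vertex of $I^k(s)$ is strictly closer to the base line than them on their respective sides). The only thing to be careful about is the definitional bookkeeping (the base case and the case where one side of the interval is empty), which is handled above.
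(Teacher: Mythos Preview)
Your proposal is correct and follows essentially the same argument as the paper: both invoke \cref{lem:Ik-inside-Ibdk-and-Itdk} to obtain $\bd^k(s)\in I(\td^k(s))$ and $\td^k(s)\in I(\bd^k(s))$, and then conclude via \cref{obs:mutually-vis-iff}. The only difference is cosmetic---you separate out the degenerate cases $k=0$ and ``one side empty'' explicitly, whereas the paper's two-line proof handles them uniformly (since in those cases $\bd^k(s)=\td^k(s)$ and the inclusions are trivially satisfied).
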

\begin{proof}
By definition and \cref{lem:Ik-inside-Ibdk-and-Itdk}, 
$\td^k(s) \in I^k(s) \subseteq I(\bd^k(s))$ and
$\bd^k(s) \in I^k(s) \subseteq I(\td^k(s))$.
The claim now follows from \cref{obs:mutually-vis-iff}.
\end{proof}
The following lemma seems rather specific but it will be needed
later on to deal with short paths. 
\begin{lemma}
\label[lemma]{lem:I3-equals-exended-of-dominators}
For any $s \in V(P)$, we have
$I^3(s)=I^2\big(\!\bd(s)\big) \cup I^2\big(\!\td(s)\big)$.
\end{lemma}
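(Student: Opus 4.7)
The plan is to expand both sides using the recursive definitions of $I^k$ and $\bd^k,\td^k$, and then match them via a structural correspondence. Write $A = \bd(\bd(s))$, $B = \td(\bd(s))$, $C = \bd(\td(s))$, and $D = \td(\td(s))$, so that $I^2(\bd(s)) \cup I^2(\td(s)) = I(A) \cup I(B) \cup I(C) \cup I(D)$. The key observation is that $\bd^2(s) \in \{A, C\}$ and $\td^2(s) \in \{B, D\}$: by definition $\bd^2(s)$ is the leftmost vertex in $I^2(s)^- = I(\bd(s))^- \cup I(\td(s))^-$ minimizing distance to the base line, and $A, C$ are the corresponding leftmost-closest vertices in $I(\bd(s))^-$ and $I(\td(s))^-$ respectively; a short case analysis on whether $|A_y| < |C_y|$, $|A_y| > |C_y|$, or $|A_y| = |C_y|$ (broken by smaller $x$-coordinate) pins down $\bd^2(s)$ as one of them. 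The argument for $\td^2(s)$ is symmetric.

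The forward inclusion $I^3(s) \subseteq I^2(\bd(s)) \cup I^2(\td(s))$ is now immediate: $I(\bd^2(s))$ equals either $I(A)$ or $I(C)$, and $I(\td^2(s))$ equals either $I(B)$ or $I(D)$, so both already appear on the right-hand side.

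For the reverse inclusion I show $I(A), I(C) \subseteq I(\bd^2(s))$; the corresponding claims $I(B), I(D) \subseteq I(\td^2(s))$ then follow symmetrically on the top side. By \cref{lem:Ik-inside-Ibdk-and-Itdk} at $k = 2$, $I^2(s) \subseteq I(\bd^2(s))$, and since $A \in I(\bd(s)) \subseteq I^2(s)$, we get $A \in I(\bd^2(s))$. Trivially $A \in I(A)$, so $I(A) \cap I(\bd^2(s))$ is nonempty; as $A$ and $\bd^2(s)$ are both on the bottom side, \cref{obs:intervals-on-same-side-are-subsets} forces the two intervals to be nested. Since $\bd^2(s)$ minimizes distance to the base line over all of $I^2(s)^-$, it is at least as close to the base line as $A$, so the nesting must resolve as $I(A) \subseteq I(\bd^2(s))$ via the following geometric fact, which I prove separately: on a single side of $P$, if $u, v$ are on the same side with $v \in I(u)$ and $u$ is at least as close to the base line as $v$, then $I(v) \subseteq I(u)$. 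The identical argument with $C$ in place of $A$ gives $I(C) \subseteq I(\bd^2(s))$.

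The main obstacle is establishing the separate geometric fact. It follows from a direct obstruction argument: the wall blocking $u$'s leftward ray at $\ell(u)_x$ has some $y$-range $[d_1, d_2]$ with $d_1 \leq u_y \leq d_2$, and the polygon's extent at $x = \ell(u)_x$ reaches exactly down to $y = d_1$. Since $v_y \leq u_y$, either $v_y \in [d_1, d_2]$ (so the same wall blocks $v$'s ray) or $v_y < d_1$ (so $(\ell(u)_x, v_y) \notin P$ and $v$'s ray cannot even reach that $x$-coordinate); either way $\ell(v)_x \geq \ell(u)_x$, and the rightward analogue is identical. A minor bookkeeping point is the degenerate case in which one of $I^k(s)^-$ or $I^k(s)^+$ is empty so that $\bd^k(s) = \td^k(s)$; by the symmetric definition of the dominators, the argument collapses harmlessly.
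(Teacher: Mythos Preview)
Your proof is correct and follows essentially the same route as the paper's. Both arguments reduce to showing that $\bd^2(s)\in\{\bd(\bd(s)),\bd(\td(s))\}$ and that the two candidate intervals are nested via \cref{obs:intervals-on-same-side-are-subsets}, then conclude symmetrically for $\td^2(s)$. The only cosmetic differences are that the paper uses $s$ itself as the common point witnessing $I(A)\cap I(C)\neq\emptyset$, whereas you route through \cref{lem:Ik-inside-Ibdk-and-Itdk} to place $A,C\in I(\bd^2(s))$; and you make explicit (and prove) the monotonicity fact ``closer to the base line on one side implies larger interval'', which the paper uses implicitly when it asserts that equation~(\ref{equ:bdtwo}) follows once $\bd^2(s)\in\{b_1,b_2\}$.
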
 

\begin{proof}
We begin by showing that
\begin{equation}
\label{equ:bdtwo}
I\big(\!\bd^2(s)\big) = I\big(\!\bd(\td(s))\big) \cup I\big(\!\bd(\bd(s))\big).
\end{equation}
If $\bd(s)$ is above the base line, 
then $\bd(s) = \td(s)$
and $I^2(s) = I(\bd(s)) \cup I(\td(s)) = I(\td(s))$. The 
definition of $\bd^2(s)$ then gives 
$\bd^2(s) = \bd(\td(s))$, and
(\ref{equ:bdtwo}) follows.

If $\bd(s)$ is below the base line,
the vertex $b_1 = \bd(\bd(s))$ is below the base line.
Let $b_2 = \bd(\td(s))$. By
\cref{lem:dominators-see-each-other},
$\bd(s)$ and $\td(s)$ are co-visible,
so $\bd(s) \in I(\td(s))^-$. Therefore, $b_2$
is below the base line. Since $I(b_1)$ and $I(b_2)$
are not disjoint (both contain $s$) and since 
$b_1$ and $b_2$ are on the same
side of the base line,
\cref{obs:intervals-on-same-side-are-subsets} gives
$I(b_1) \subseteq I(b_2)$ or
$I(b_2) \subseteq I(b_1)$. Because $\bd^2(s)$ is the highest vertex
in $\big(I(\bd(s)) \cup I(\td(s)) \big)^-$, we get that $\bd^2(s)$ is
$b_1$ or $b_2$, and (\ref{equ:bdtwo}) follows also in this 
case.
Symmetrically, we have
\begin{equation}
\label{equ:tdtwo}
I(\td^2(s))=I(\td(\td(s))) \cup I(\td(\bd(s))).
\end{equation}
We use the definitions and (\ref{equ:bdtwo},\ref{equ:tdtwo}) to get

\begin{multline*}
I^3(s) = I\big(\!\bd^2(s)\big) \cup I\big(\!\td^2(s)\big) \\
 = I\big(\!\bd(\td(s))\big) \cup
I\big(\!\bd(\bd(s))\big) \cup
I\big(\!\td(\td(s))\big) \cup
I\big(\!\td(\bd(s))\big) 
= I^2\big(\!\bd(s)\big) \cup I^2\big(\!\td(s)\big),
\end{multline*}

as desired.
\end{proof}

Intuitively,
the meaning of $I^k(s)$ is as follows: let $\ell$ be the leftmost and
$r$ be the rightmost vertex with hop distance exactly $k$ from $s$,
then, $I^k(s) = [\ell,r]$. We do not really need 
this property. So we leave it as an exercise for the reader to find a 
proof for this. Instead, we prove the following weaker statement.
For this, recall that due to its definition, $\bd^k(s)$ might not be
on the lower side of the histogram (and
$\td^k(s)$ might not be on the upper side).

\begin{lemma}
\label[lemma]{lem:distance-k-implies-in-Ik}
Let $k \geq 0$ and let
$s, t \in V(P)$ with $d(s, t) \leq k$. 
Then, $t \in I^k(s)$. 
\end{lemma}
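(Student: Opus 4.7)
The plan is to proceed by induction on $k$, and the base case $k=0$ is immediate since $d(s,t)=0$ forces $t=s\in I^0(s)$. For the inductive step, I would isolate the following Key Claim: \emph{for every $v\in I^{k-1}(s)$, $I(v)\subseteq I^k(s)$}. Granting this claim, the lemma follows quickly. Taking $v=\bd^{k-2}(s)$ and $v=\td^{k-2}(s)$ in the Key Claim yields the monotonicity $I^{k-1}(s)=I(\bd^{k-2}(s))\cup I(\td^{k-2}(s))\subseteq I^k(s)$. Now consider $s,t$ with $d(s,t)\le k$: if $d(s,t)\le k-1$, induction plus monotonicity places $t\in I^k(s)$; and if $d(s,t)=k$, then on a shortest path $\langle s=p_0,\dots,p_k=t\rangle$ the penultimate vertex $p_{k-1}$ has $d(s,p_{k-1})=k-1$, so by induction $p_{k-1}\in I^{k-1}(s)$, and since $t\in N(p_{k-1})\subseteq I(p_{k-1})$ the Key Claim then gives $t\in I^k(s)$.

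To prove the Key Claim, I would take $v\in I^{k-1}(s)$ and $w\in I(v)$; by symmetry, assume $w$ lies on the lower side, so that the goal becomes $w\in I(\bd^{k-1}(s))$. The main tools are \cref{obs:mutually-vis-iff,obs:overlapping-intervals,obs:intervals-on-same-side-are-subsets}, together with the definition of $\bd^{k-1}(s)$ as the leftmost vertex of $I^{k-1}(s)^-$ closest to the base line. There are two cases depending on which side of the base line $v$ lies on. If $v$ is on the lower side, then $v\in I^{k-1}(s)^-$, and by the defining property of $\bd^{k-1}(s)$ the latter is at least as close to the base line as $v$; applying \cref{obs:intervals-on-same-side-are-subsets} to the same-side pair $(v,\bd^{k-1}(s))$ shows that their intervals are nested or disjoint, and combined with \cref{obs:mutually-vis-iff,obs:overlapping-intervals} one can deduce that they are nested with $I(v)\subseteq I(\bd^{k-1}(s))$, placing $w$ in $I(\bd^{k-1}(s))$. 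If instead $v$ lies on the upper side, then the rectangle spanned by $v$ and $w$ crosses the base line, and I would exploit that $\bd^{k-1}(s)$ lies close enough to the base line on the lower side to itself witness this crossing, concluding via \cref{obs:overlapping-intervals} that $\bd^{k-1}(s)$ also sees $w$.

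The hardest part will be making the Key Claim rigorous in all corner cases. The central geometric assertion — that among same-side vertices with overlapping intervals, the one closer to the base line has the containing interval — is implicit in the argument behind \cref{obs:intervals-on-same-side-are-subsets} but needs to be explicitly extracted and invoked. It must also be adapted to degenerate situations: when $I^{k-1}(s)^-$ or $I^{k-1}(s)^+$ is empty (so the two dominators collapse to the same vertex on the opposite side), when $v$ sits in a lower-side ``pocket'' that is disjoint from $\bd^{k-1}(s)$'s pocket (so that the symmetric dominator $\td^{k-1}(s)$ must do the covering instead), or when the visibility between $v$ and $w$ crosses the base line. Handling the cross-base-line upper-side case of the Key Claim is the technical crux of the proof.
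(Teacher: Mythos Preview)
Your overall structure---induction on $k$ with the Key Claim that $I(v)\subseteq I^k(s)$ for every $v\in I^{k-1}(s)$---is exactly the paper's approach, and the Key Claim is the paper's sole intermediate step. Where you diverge is in the proof of the Key Claim, and there the decomposition you choose creates an artificial difficulty.

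You case on which side $w$ lies, fixing $w$ below the base line and declaring the goal to be $w\in I(\bd^{k-1}(s))$. That goal is too strong and in fact false in general: take $v\in I^{k-1}(s)^+$ on the upper side with $r(v)$ well to the right of $r(\bd^{k-1}(s))$, and a lower-side vertex $w\in I(v)$ lying strictly to the right of $r(\bd^{k-1}(s))$; then $w\notin I(\bd^{k-1}(s))$ even though $w\in I^k(s)$ (via $I(\td^{k-1}(s))$). Your ``cross-base-line upper-side crux'' is precisely this situation, and it is unprovable as you have stated it.

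The paper instead cases on which side $v$ lies, and this makes the Key Claim a one-liner. If $v\in I^{k-1}(s)^-$, then $\bd^{k-1}(s)$ is also below the base line; since $v\in I^{k-1}(s)\subseteq I(\bd^{k-1}(s))$ by \cref{lem:Ik-inside-Ibdk-and-Itdk}, the intervals $I(v)$ and $I(\bd^{k-1}(s))$ meet, and your own ``central geometric assertion'' (the same-side vertex closer to the base line has the containing interval) gives $I(v)\subseteq I(\bd^{k-1}(s))\subseteq I^k(s)$ in one stroke---for \emph{every} $w\in I(v)$, regardless of which side $w$ is on. Symmetrically for $v\in I^{k-1}(s)^+$. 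No cross-base-line case ever arises, and your ``disjoint pocket'' worry is also moot, since \cref{lem:Ik-inside-Ibdk-and-Itdk} already guarantees the relevant intervals overlap.
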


\begin{proof}
We show that for any $j \geq 0$ and any vertex
$v\in I^j(s)$, we have
$N(v) \subseteq I^{j+1}(s)$. The lemma then follows
by induction.
If $v \in I^j(s)^-$, then $\bd^j(s)$
is on the lower side, and by definition, 
$I(v)\subseteq I(\bd^j(s))$. If $v \in I^j(s)^+$, 
by a similar argument $I(v)\subseteq I(\td^j(s))$.
Thus,
$N(v)\subseteq I(v)\subseteq I(\bd^j(s)) \cup I(\td^j(s)) = I^{j+1}(s)$,
as desired.
\end{proof}

Let $k \geq 0$ and $s \in V(P)$.
For $i = 1, \dots, k$, by \cref{lem:Ik-inside-Ibdk-and-Itdk}, 
$\bd^{i-1}(s), \td^{i-1}(s) \in I(\bd^i(s)) \cap
I(\td^i(s))$.
Moreover, by definition,
$\bd^i(s), \td^i(s) \in I(\bd^{i-1}(s))
\cup I(\td^{i-1}(s))$. \cref{obs:mutually-vis-iff} now
says that both $\bd^i(s)$ and $\td^i(s)$ 
can see at least one of $\bd^{i-1}(s)$ or $\td^{i-1}(s)$.
Therefore, there is a path 
$\pi_b(s, k): \langle s = p_0, \dots, p_k = \bd^k(s)\rangle$ 
from $s$ to $\bd^k(s)$
and a path
$\pi_t(s, k): \langle s = q_0, \dots, q_k = \td^k(s)\rangle$
from $s$ to $\td^k(s)$ with $p_i, q_i \in  \{\bd^i(s), \td^i(s)\}$,
for $i = 0, \dots, k$.
We call $\pi_b(s, k)$ and $\pi_t(s, k)$ the \emph{canonical path} 
from $s$ to $\bd^k(s)$ and from $s$ to  $\td^k(s)$, 
respectively.
The following two lemmas show that for every $t\notin I^ {k+1}(s)$
one of the canonical paths is the prefix of a shortest path from $s$
to $t$. To show
\cref{lem:t-outside-kpp-interval} we need
\cref{lem:distance-k-implies-in-Ik} as well as
\cref{lem:dominators-have-distance-k}.
\begin{lemma}
\label[lemma]{lem:dominators-have-distance-k}
Let $k \geq 1$ and $s \in V(P)$. If
$I(\bd^{k-1}(s)) \neq V(P)$ we have
$d(s,\bd^k(s)) = k$.
If $I(\td^{k-1}(s)) \neq V(P)$ we have
$d(s,\td^k(s)) = k$.
\end{lemma}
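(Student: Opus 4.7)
The upper bound $d(s,\bd^k(s)) \le k$ is immediate from the canonical path $\pi_b(s,k)$ constructed just before the lemma statement, so I focus on the matching lower bound.

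To establish the lower bound, I argue by contradiction: suppose $d(s,\bd^k(s)) \le k-1$. \cref{lem:distance-k-implies-in-Ik} then gives $\bd^k(s) \in I^{k-1}(s)$, and since $\bd^k(s)$ lies below the base line by its definition, in fact $\bd^k(s) \in I^{k-1}(s)^-$. Note that $I^{k-1}(s)^- \subseteq I^k(s)^-$, using \cref{lem:Ik-inside-Ibdk-and-Itdk} to see $I^{k-1}(s) \subseteq I(\bd^{k-1}(s)) \cap I(\td^{k-1}(s)) \subseteq I^k(s)$. Now $\bd^{k-1}(s)$ is by definition the unique leftmost-among-closest-to-base-line vertex of $I^{k-1}(s)^-$ and $\bd^k(s)$ is the unique such vertex of the larger set $I^k(s)^-$; since $\bd^k(s)$ belongs to the smaller set and $\bd^{k-1}(s)$ to the larger one, a two-way optimality comparison (first on distance to the base line, then on $x$-coordinate) forces $\bd^k(s) = \bd^{k-1}(s)$, and in particular $\bd^{k-1}(s)$ lies below the base line. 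The degenerate case $I^{k-1}(s)^- = \emptyset$ causes no trouble, since then $\bd^k(s) \in I^{k-1}(s)^-$ would already be impossible.

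The next step is to exploit the hypothesis $I(\bd^{k-1}(s)) \ne V(P)$ via the following observation: if $v$ is any vertex below the base line with $I(v) \ne V(P)$, then some vertex of $I(v)^-$ beats $v$ in the leftmost-among-closest-to-base-line ordering. Indeed, $I(v) = V(P)$ amounts to $\ell(v)$ and $r(v)$ both lying on the left resp.~right boundary edge of $P$; so $I(v) \ne V(P)$ means that one of the two horizontal rays from $v$, say the rightward one, is blocked by an internal vertical edge $e$ on the lower boundary. The ray at height $v_y$ being blocked by $e$ forces the polygon to satisfy $y_{\min} > v_y$ just past $e$, so the top endpoint of $e$ (the endpoint closer to the base line, which by definition equals $r(v)$) has $y$-coordinate strictly greater than $v_y$. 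Since $r(v)$ is then a vertex of $P$ below the base line with $x$-coordinate inside $[\ell(v)_x, r(v)_x]$, it belongs to $I(v)^-$ and is strictly closer to the base line than $v$.

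Applying the observation to $\bd^{k-1}(s) = \bd^k(s)$ yields $u \in I(\bd^{k-1}(s))^- \subseteq I^k(s)^-$ strictly closer to the base line than $\bd^k(s)$, contradicting the defining optimality of $\bd^k(s)$. The claim for $\td^k(s)$ follows by a fully symmetric argument, swapping ``below'' and ``above'' throughout. The main obstacle I anticipate is the observation in the third paragraph: one must read the geometry of the blocking edge carefully to guarantee that $r(v)$ (or $\ell(v)$) lies strictly above $v$ rather than merely at the same height, even though general position permits two vertices to share a $y$-coordinate.
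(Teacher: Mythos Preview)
Your argument is essentially the paper's argument, reorganized as a proof by contradiction: both rely on the same key observation that, when $I(\bd^{k-1}(s))\neq V(P)$, one of $\ell(\bd^{k-1}(s))$ or $r(\bd^{k-1}(s))$ is a vertex in $I^k(s)$ strictly closer to the base line than $\bd^{k-1}(s)$, forcing $\bd^k(s)\notin I^{k-1}(s)$. Your double-optimality step deriving $\bd^k(s)=\bd^{k-1}(s)$ is a nice way to package this, and your justification of the observation is in fact more careful than the paper's one-line assertion.

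There is, however, one genuine slip. You write ``since $\bd^k(s)$ lies below the base line by its definition,'' but the definition does \emph{not} guarantee this: when $I^k(s)^-=\emptyset$ the paper sets $\bd^k(s)=\td^k(s)$, which then lies above the base line. In that situation your inference $\bd^k(s)\in I^{k-1}(s)\Rightarrow\bd^k(s)\in I^{k-1}(s)^-$ fails, and the rest of the paragraph does not apply. This is exactly the case the paper isolates as its Case~2: if $\bd^{k-1}(s)$ lies in $I^{k-1}(s)^+$ then necessarily $I^{k-1}(s)^-=\emptyset$, hence $\bd^{k-1}(s)=\td^{k-1}(s)$ and (if also $\bd^k(s)\in I^{k-1}(s)^+$) $\bd^k(s)=\td^k(s)$; the hypothesis $I(\bd^{k-1}(s))\neq V(P)$ becomes $I(\td^{k-1}(s))\neq V(P)$, and one invokes the symmetric argument for $\td$. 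Your proof is easily repaired by inserting this short case distinction before asserting that $\bd^k(s)$ is below the base line, but as written the assertion is false and the argument incomplete.
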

\begin{proof}
On the one hand,
$d(s,\bd^k(s)) \leq |\pi_b(s,k)| = k$ and
$d(s,\td^k(s)) \leq |\pi_t(s,k)| = k$.
On the other hand, we show that
$\bd^k(s) \notin I^{k-1}(s)$ and
$\td^k(s) \notin I^{k-1}(s)$. The claim then follows from
the contrapositive of
\cref{lem:distance-k-implies-in-Ik}.

\textbf{Case 1:} First, assume that $\bd^{k-1}(s) \in I^{k-1}(s)^-$. 
Since $I(\bd^{k-1}(s)) \neq P$, at least one of its bounding
points is a vertex $v$ contained in $I^k(s)$. Then, $v$ 
is strictly closer to the base line than $\bd^{k-1}(s)$, and since 
$v$ is a candidate for $\bd^k(s)$, the same applies to
$\bd^k(s)$.  It follows that
$\bd^k(s) \not\in I^{k-1}(s)$. Similarly,
we get that if $\td^{k-1}(s) \in I^{k-1}(s)^+$, the vertex
$\td^k(s)$ is not in $I^{k-1}(s)$.

\textbf{Case 2:} Second, assume that $\bd^{k-1}(s)\in I^{k-1}(s)^+$.
Then, $\bd^k(s) \notin I^{k-1}(s)^-$, since this set is empty.
Thus, suppose for a contradiction that 
$\bd^k(s) \in I^{k-1}(s)^+$. This can
only be the case if $\bd^{k-1}(s)=\td^{k-1}(s)$ and
$\bd^k(s) = \td^k(s)$. However, in 
Case~1 we showed that $\td^k(s) \not\in I^{k-1}(s)^+$ if
$\td^{k-1}(s) \in I^{k-1}(s)^+$. Hence,
$\bd^k(s) \notin I^{k-1}(s)^+$, as desired.
\end{proof}

\begin{lemma}
\label[lemma]{lem:t-outside-kpp-interval}
Let $s$ and $t$ be vertices and $k\geq 1$ an integer
such that $t\notin I^{k+1}(s)$. Then $\bd^k(s)$
or $\td^k(s)$ is on a shortest path from $s$ to $t$.
\end{lemma}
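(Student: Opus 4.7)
The plan is to take a shortest $s$-to-$t$ path $\pi = \langle s = p_0, p_1, \ldots, p_m = t\rangle$ and argue that it can be rerouted through $\bd^k(s)$ (or $\td^k(s)$) without changing its length. By the contrapositive of \cref{lem:distance-k-implies-in-Ik}, the hypothesis $t \notin I^{k+1}(s)$ forces $m = d(s,t) \geq k+2$, so $p_{k+1}$ is a well-defined internal vertex of $\pi$, distinct from $t$. Applying \cref{lem:distance-k-implies-in-Ik} to $p_{k+1}$, which has distance at most $k+1$ from $s$, gives $p_{k+1} \in I^{k+1}(s) = I(\bd^k(s)) \cup I(\td^k(s))$. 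Without loss of generality, I assume $p_{k+1} \in I(\bd^k(s))$; the other case is symmetric with $\td^k(s)$ in place of $\bd^k(s)$.

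The main step is to show that $\bd^k(s)$ and $p_{k+1}$ are co-visible. Granted this, observe that the assumption $t \notin I^{k+1}(s)$ rules out $I(\bd^{k-1}(s)) = V(P)$, so \cref{lem:dominators-have-distance-k} provides a canonical path $\pi_b(s, k)$ from $s$ to $\bd^k(s)$ of length exactly $k$. Concatenating $\pi_b(s, k)$ with the edge $\bd^k(s)p_{k+1}$ and the suffix $\langle p_{k+1}, \ldots, p_m\rangle$ of $\pi$ yields an $s$-to-$t$ walk of length $k + 1 + (m - k - 1) = m$. Since $\pi$ is shortest, any shortcutting of this walk would contradict minimality, so the walk is already a simple path, and $\bd^k(s)$ lies on a shortest $s$-to-$t$ path.

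To establish co-visibility, \cref{obs:mutually-vis-iff} together with $p_{k+1} \in I(\bd^k(s))$ reduces the task to showing $\bd^k(s) \in I(p_{k+1})$. I would use \cref{obs:overlapping-intervals}, relying on two key facts: first, $p_k \in I^k(s) \subseteq I(\bd^k(s))$ by \cref{lem:distance-k-implies-in-Ik} and \cref{lem:Ik-inside-Ibdk-and-Itdk}; second, $p_k \in I(p_{k+1})$ since $p_kp_{k+1}$ is an edge of $G(P)$. If $\bd^k(s)$ lies horizontally between $p_k$ and $p_{k+1}$, then taking these three points together with $p_{k+1}$ itself (playing the role of the rightmost vertex) in \cref{obs:overlapping-intervals} immediately yields $\bd^k(s) \in I(p_{k+1})$. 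The main obstacle is the remaining configuration, where $p_k$ and $p_{k+1}$ lie on the same horizontal side of $\bd^k(s)$: here one must combine the extremality of $\bd^k(s)$ (the leftmost vertex of $I^k(s)^-$ closest to the base line) with the laminar structure of same-side intervals from \cref{obs:intervals-on-same-side-are-subsets}, in order to either conclude $\bd^k(s) \in I(p_{k+1})$ directly from the nesting of the two intervals, or to observe that $p_{k+1}$ is in fact in $I(\td^k(s))$ and that the symmetric argument with $\td^k(s)$ applies.
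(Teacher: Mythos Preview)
Your approach departs from the paper's in a crucial way: you pin the argument to the vertex $p_{k+1}$ at the fixed index $k+1$, whereas the paper works with the \emph{last} vertex $p_j$ of the shortest path lying in $I^{k+1}(s)$. The paper's choice guarantees $p_{j+1}\notin I^{k+1}(s)$, and this extra leverage is what drives the whole case analysis. Your vertex $p_{k+1}$ carries no such guarantee, since $p_{k+2}$ may well remain inside $I^{k+1}(s)$.

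The gap is exactly your ``main obstacle'' paragraph. Take the configuration where $p_{k+1}$ lies below the base line (same side as $\bd^k(s)$), the laminar relation resolves as $I(p_{k+1})\subsetneq I(\bd^k(s))$, and $\bd^k(s)_x<\ell(p_{k+1})_x$. Laminarity in this direction of nesting does \emph{not} give $\bd^k(s)\in I(p_{k+1})$; it gives the opposite inclusion. Your fallback (``observe that $p_{k+1}$ is in fact in $I(\td^k(s))$ and the symmetric argument applies'') is also unjustified: nothing you have written excludes $r(\td^k(s))_x<p_{k+1,x}$, in which case $p_{k+1}\notin I(\td^k(s))$; and even if $p_{k+1}\in I(\td^k(s))$, the reverse containment $\td^k(s)\in I(p_{k+1})$ is an independent condition you never verify. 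The extremality of $\bd^k(s)$ in $I^k(s)^-$ only constrains vertices inside $I^k(s)$, while $p_{k+1}$ may sit in $I^{k+1}(s)\setminus I^k(s)$, so that lever does not obviously reach it.

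That this is a real obstacle, and not just a missing line, is confirmed by the paper's own proof: it explicitly treats the case where its pivot vertex $p_j$ sees \emph{neither} $\bd^k(s)$ nor $\td^k(s)$. There it uses $p_{j+1}\notin I^{k+1}(s)$ to force $j\ge k+2$ and then routes through an extra intermediate vertex $r(\bd^k(s))$ or $r(\td^k(s))$, absorbing the additional hop with the slack in $j$. Your index $k+1$ gives you no such slack. To repair your argument you would need either a genuine proof that $p_{k+1}$ is always co-visible with one of the two dominators on a shortest path (which is not obvious), or to switch to the paper's last-exit vertex $p_j$ and carry out the extra case.
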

\begin{proof}
First, observe that $I^{k+1}(s) = I(\bd^k(s))\cup
I(\td^k(s))\neq P$, as $t\notin I^{k+1}(s)$.
Let $\pi:  \langle s = p_0,\dots,p_m = t\rangle$ 
be a shortest path from $s$ to $t$, 
and $p_j$ the last vertex in $I^{k+1}(s)$. Without
loss of generality, $p_{j+1}$ is strictly to the right of $s$. 
By \cref{lem:distance-k-implies-in-Ik}, we get that $p_j$ 
is not in $I^k(s)$ and thus, again by
\cref{lem:distance-k-implies-in-Ik}, we have
$j\geq d(s,p_j)\geq k+1$.

First, suppose that $p_j$ and $\bd^k(s)$ (resp.
$\td^k(s)$) are co-visible. Then, 
$\pi_b(s,k) \circ \langle p_j, \dots, t \rangle$ (resp.
$\pi_t(s,k) \circ \langle p_j, \dots, t \rangle$)
is a valid path of length $k+1+(m-j)\leq m$. Here, $\circ$
concatenates two paths.
Second, suppose $p_j$ be visible from neither $\bd^k(s)$ nor
$\td^k(s)$. First, we claim that $p_j$ is strictly to the 
right of $\bd^k(s)$ and $\td^k(s)$. 
Otherwise, since $p_{j+1}$ is strictly to the right of both dominators,
we would get $\bd^k(s),\td^k(s)\in I(p_j)$. Moreover,
$p_j\in I^{k+1}(s)=I(\bd^k(s))\cup
I(\td^k(s))$. \cref{obs:mutually-vis-iff} now would imply
that $p_j$ can see $\bd^k(s)$ or $\td^k(s)$---a contradiction. The 
claim follows. Next, we claim $\bd^k(s) \neq \td^k(s)$. If not,
$p_j \in I(\bd^k(s)) = I^{k+1}(s)$, and since $p_j$ can see a point
outside of $I^{k+1}(s)$, we would get 
$p_j \in \{\ell(\bd^k(s)),r(\bd^k(s))\}$,
which again contradicts our assumption that $p_j$ cannot see
$\bd^k(s)$. The claim follows.
There are two cases, depending on which dominator sees
further to the right.

\begin{figure}
\begin{center}
\includegraphics[scale=1]{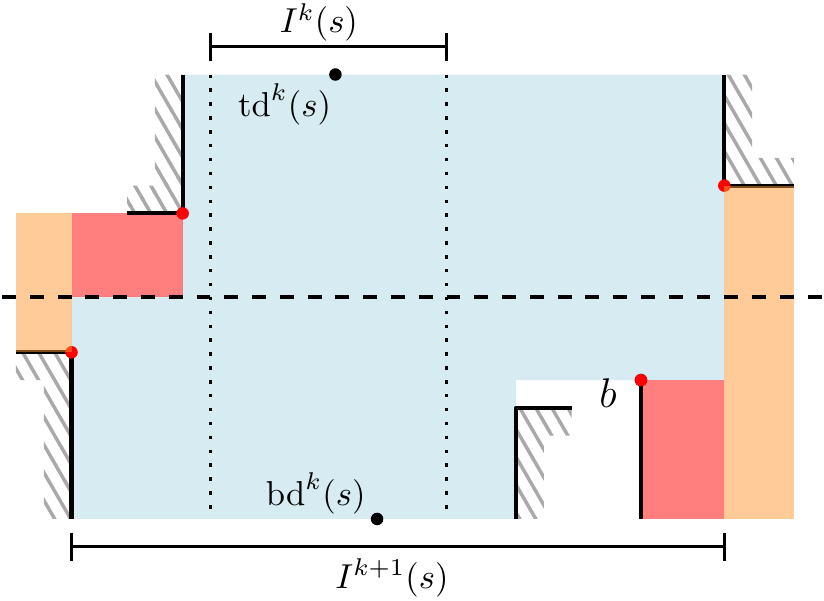}
\end{center}
\caption{$\bd^k(s)$ or $\td^k(s)$ is on a shortest
path. The vertex $p_j$ lies in one of the red regions, $p_{j+1}$
lies in one of the orange regions,
and the blue region cannot contain any point outside of $P$.}
\label{fig:dominators-on-shortest-path}
\end{figure}

\textbf{Case 1:} $r(\bd^k(s))_x<r(\td^k(s))_x$;
see \cref{fig:dominators-on-shortest-path}.
Let $b$ be the
leftmost vertex in $[r(\bd^k(s)),r(\td^k(s))]^-$
closest to the base line. Observe that $b$ is strictly 
to the right of $I^k(s)$, because $r(\bd^k(s))$ is strictly to the 
right of $I^k(s)$.
Since $p_j$ is not visible from $\td^k(s)$, it has to be strictly 
to the right of and strictly below $b$. Next, we claim that 
no vertex $v \in I^k(s)$ can see $p_j$. If one could, by \cref{obs:mutually-vis-iff}, 
we would have $v\in I(p_j)$. But since $p_j$ is strictly to the 
right of and strictly below $b$, then $v$ would be to
the right of $b$, which is impossible. This shows the claim.
Thus, by \cref{lem:distance-k-implies-in-Ik}, $j \geq d(s, p_j) \geq k+2$.
We apply \cref{obs:overlapping-intervals} to $\td^k(s)$, 
$p_j$, $r(\td^k(s))$ and $p_{j+1}$  and get that $r(\td^k(s))$
can see $p_j$. Therefore, $\pi_t(s,k) \circ \langle r(\td^k(s)),p_j,\dots,t\rangle$
is a valid path of length $k+1+(1+m-j)\leq m$.

\textbf{Case 2:} $r(\td^k(s))_x<r(\bd^k(s))_x$.
Let $b$ be the
leftmost vertex in $[r(\td^k(s)),r(\bd^k(s))]^+$
closest to the base line. Observe that $b$ is strictly 
to the right of $I^k(s)$, because $r(\td^k(s))$ is strictly to the 
right of $I^k(s)$.
Since $p_j$ is not visible from $\bd^k(s)$, it has to be strictly 
to the right of and strictly above $b$. Next, we claim that 
no vertex $v \in I^k(s)$ can see $p_j$. If one could, by \cref{obs:mutually-vis-iff}, 
we would have $v\in I(p_j)$. But since $p_j$ is strictly to the 
right of and strictly above $b$, then $v$ would be to
the right of $b$, which is impossible. This shows the claim.
Thus, by \cref{lem:distance-k-implies-in-Ik}, $j \geq d(s, p_j) \geq k+2$.
We apply \cref{obs:overlapping-intervals} to $\bd^k(s)$, 
$p_j$, $r(\bd^k(s))$ and $p_{j+1}$  and get that $r(\bd^k(s))$
can see $p_j$. Therefore, $\pi_b(s,k) \circ \langle r(\bd^k(s)),p_j,\dots,t\rangle$
is a valid path of length $k+1+(1+m-j)\leq m$.
\end{proof}

\subsection{Routing Scheme}

\subparagraph*{Labels and routing tables}
Let $v$ be a vertex. The label of $v$ consists of its $x$- and
$y$-coordinate as well as the bounding $x$-coordinates of $I(v)$.
We do not need $v_{\id}$ since $(v_x,v_y)$ identifies the vertex in the network.
Thus, $\Lab(n)=4 \cdot \lceil \log n \rceil$ since we can
assume that $v_x, v_y \in \{0, \dots, n-1\}$.
In the routing table of $v$, we store the 
bounding $x$-coordinates of $I^2(\bd(v))$ as well 
as the bounding $x$-coordinates of
$I^2(\td(v))$. Furthermore, we store $(\bd^2(v)_x,\bd^2(v)_y,bit)$
where $bit$ indicates
whether $\td(v)$ or $\bd(v)$ is on the path $\pi_b(v,2)$. Thus,
$\Tab(n) = 6 \cdot \lceil \log n \rceil +1$.

\subparagraph*{The routing function.}
We are given a current vertex $s$ together with its
routing table and link table, the label of a target vertex $t$, and
a header. If $t \in N(s)$, then $\lab(t)$ is
in the link table of $s$, and we send the data packet directly to $t$.
If the header is non-empty, it will contain the coordinates of exactly one
vertex visible from $s$. We clear the header and go to this respective vertex.
The remaining discussion assumes that the header is empty and 
that $t \not\in N(s)$.
The routing function now distinguishes 
four cases depending on whether $t \in I(s)$,
$t \in I^2(s)$ or $t\in I^3(s)$. We can check the first and the second
condition locally, using the link table of $s$ as well as the 
label of $t$ (note that from the link table of $s$, we can deduce 
$a^*(s)$ and $b^*(s)$, and their interval boundaries).
To check the third condition locally, we use
\cref{lem:I3-equals-exended-of-dominators} which shows that 
$I^3(s) = I^2(\bd(s)) \cup I^2(\td(s))$. Since we
stored the bounding $x$-coordinates of these two intervals in the routing
table of $s$, we can check $t\in I^3(s)$ easily.

\textbf{Case 1 {\boldmath $(t\in I(s)\setminus N(s))$}:}
if $\fd(s,t)$ is a vertex, we can determine it by using the
link table and the label of $t$. The packet is sent to $\fd(s,t)$.
If $\fd(s,t)$ is not a vertex, we determine $\nd(s,t)$
and send the packet there. 
The header remains empty.

\textbf{Case 2 {\boldmath $(t\in I^2(s)\setminus I(s))$}:}
there is an $i\geq 1$ with $t\in[\ell^i,\ell^{i-1}]$ or
$t\in[r^{i-1},r^i]$. We find $i$ using
the link table and $\lab(t)$. The packet is sent
to $a^i$ or $b^i$.
The header remains empty.

\textbf{Case 3 {\boldmath $(t\in I^3(s)\setminus I^2(s))$}:}
if $t\in I^2(\bd(s))$, we send
the packet to $\bd(s)$. Otherwise, $t \in I^2(\td(s))$, and 
we send the packet to 
$\td(s)$. In both cases, the header remains empty.

\textbf{Case 4 {\boldmath $(t\notin I^3(s))$}:}
in the routing table we find the entry $(\bd^2(s)_x,\bd^2(s)_y,bit)$.
We store $(\bd^2(s)_x,\bd^2(s)_y)$ in the header and send
the packet to $\bd(s)$ or $\td(s)$, whichever is indicated by $bit$. 

\subparagraph*{Analysis.}
Obviously, $H(n) = 2 \cdot \lceil \log n \rceil$.
It remains to analyze the stretch factor. 
For this, we show that after
one or two steps, the distance to the target vertex has decreased 
by at least one. This immediately gives a
stretch factor of $2$.

\begin{lemma}
\label[lemma]{lem:routing-scheme-terminates}
Let $s, t \in V(P)$. After at most two steps of the routing scheme 
from $s$ with target label $\lab(t)$, we reach a vertex $v$ with 
$d(v, t) \leq d(s, t) - 1$.
\end{lemma}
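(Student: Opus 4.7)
The plan is to inspect each of the four cases of the routing function and argue in each case that the hop distance to $t$ has decreased by at least one after at most two steps. Cases~2 and~4 will follow almost directly from earlier lemmas, Case~1 uses a small two-step argument exploiting the recursive structure of $\fd$, and Case~3 is the main subtlety.

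For Case~2, where $t\in I^2(s)\setminus I(s)$, \cref{lem:t-in-extended-interval} immediately says that the vertex $a^i$ (resp.~$b^i$) chosen by the scheme lies on a shortest $s$-to-$t$ path, so one step already reduces the distance by exactly one. For Case~1, where $t\in I(s)\setminus N(s)$, I invoke \cref{lem:shortest-path-via-dn-or-df}: if $\fd(s,t)$ is not a vertex, or is a vertex on some shortest path, a single step to the corresponding dominator suffices. Otherwise $\fd(s,t)$ is a vertex off every shortest path; the scheme moves to $\fd(s,t)$, is once more in Case~1 there because $t\in I(\fd(s,t))\setminus N(\fd(s,t))$ (precisely the condition under which $\fd^2(s,t)$ is defined), and then routes to $\fd^2(s,t)$. \cref{lem:fd-fd-is-closer} then yields $d(\fd^2(s,t),t)=d(s,t)-1$ after two steps.

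For Case~4, where $t\notin I^3(s)$, I apply \cref{lem:t-outside-kpp-interval} with $k=2$: at least one of $\bd^2(s)$ or $\td^2(s)$ lies on a shortest $s$-to-$t$ path. The scheme traces the canonical path $\pi_b(s,2)$ to $\bd^2(s)$ in exactly two steps, using the stored bit for the first step and the header for the second. If $\bd^2(s)$ itself is on a shortest path, then $d(\bd^2(s),t)=d(s,t)-2$; otherwise $\td^2(s)$ is, and since $\bd^2(s)$ and $\td^2(s)$ are co-visible by \cref{lem:dominators-see-each-other}, we get $d(\bd^2(s),t)\leq d(\td^2(s),t)+1=d(s,t)-1$. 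Either way, two steps reduce the distance by at least one.

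Case~3, where $t\in I^3(s)\setminus I^2(s)$, is the main obstacle, since the scheme commits to $\bd(s)$ or $\td(s)$ based only on which contains $t$ in its second interval, and not on which lies on a shortest path. The plan is first to apply \cref{lem:t-outside-kpp-interval} with $k=1$ to establish that at least one of the two dominators lies on a shortest $s$-to-$t$ path; then, by the co-visibility of $\bd(s)$ and $\td(s)$ (\cref{lem:dominators-see-each-other}) and the triangle inequality, both $d(\bd(s),t)\leq d(s,t)$ and $d(\td(s),t)\leq d(s,t)$, independently of which branch the scheme selects. Next, because $I(\bd(s))\subseteq I^2(s)$ while $t\notin I^2(s)$, we have $t\in I^2(\bd(s))\setminus I(\bd(s))$, putting us exactly into Case~2 at $\bd(s)$; the second step therefore lands on a vertex $w$ on a shortest $\bd(s)$-to-$t$ path by \cref{lem:t-in-extended-interval}, giving $d(w,t)=d(\bd(s),t)-1\leq d(s,t)-1$. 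The $\td(s)$ branch is handled symmetrically.
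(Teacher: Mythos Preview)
Your proof is correct and follows essentially the same approach as the paper's: a case analysis over the four branches of the routing function, invoking \cref{lem:shortest-path-via-dn-or-df} and \cref{lem:fd-fd-is-closer} for Case~1, \cref{lem:t-in-extended-interval} for Case~2, and \cref{lem:t-outside-kpp-interval} together with \cref{lem:dominators-see-each-other} for Cases~3 and~4. The paper's proof is structured identically; your treatment of Case~4 is in fact slightly more explicit than the paper's about why the co-visibility of $\bd^2(s)$ and $\td^2(s)$ yields the desired bound. Two very minor points: you omit the trivial base case $t\in N(s)$, and in Case~4 your equalities $d(\bd^2(s),t)=d(s,t)-2$ and $d(\td^2(s),t)=d(s,t)-2$ implicitly use $d(s,\bd^2(s))=d(s,\td^2(s))=2$, which follows from \cref{lem:dominators-have-distance-k} (since $t\notin I^3(s)$ forces $I(\bd(s)),I(\td(s))\neq V(P)$); the paper is equally terse here.
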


\begin{proof}
First, if $t\in N(s)$, then we take one hop and decrease the 
distance to $0$.

Second, suppose that $t\in I(s)\setminus N(s)$. 
If $\fd(s,t)$ is not a vertex, the
next vertex is $\nd(s,t)$, which is on a shortest path from $s$ to 
$t$ due
to \cref{lem:shortest-path-via-dn-or-df}. Otherwise, $\fd(s,t)$ is the next
vertex. If $\fd(s,t)$ is on a shortest path from $s$ to $t$, we are done.
Otherwise, $t$ is not
visible from $\fd(s,t)$, so $\fd^2(s,t)$ has to be the
second vertex on the routed path. By \cref{lem:fd-fd-is-closer}, we have
$d(\fd^2(s,t),t) = d(s,t)-1$.

Third, if $t\in I^2(s)\setminus I(s)$, there is an $i \geq 1$ 
such that the next vertex is either $a^i$ or $b^i$. 
By \cref{lem:t-in-extended-interval},
this vertex is on a shortest path.

Fourth, assume $t\in I^3(s)\setminus I^2(s)$. Let $v_1$ and $v_2$ be 
the next two vertices on the routing path. 
We use \cref{lem:dominators-see-each-other} and
\cref{lem:t-outside-kpp-interval} to conclude $d(v_1,t)\leq d(s,t)$,
as $v_1$ is either $\td(s)$ or $\bd(s)$. Due to the construction
of the routing function, we have $t\in I^2(v_1)\setminus I(v_1)$. Thus, there is
an $i \geq 1$, such that $v_2=a^i(v_1)$ or $v_2=b^i(v_1)$. By
\cref{lem:t-in-extended-interval}, the vertex $v_2$ is on a shortest path from $v_1$
to $t$ and we can conclude $d(v_2,t)=d(v_1,t)-1\leq d(s,t)-1$.

Last, assume $t\notin I^3(s)$. Then, the packet is routed to a vertex
$p \in\{\bd(s),\td(s)\}$, whichever is on a shortest path to
$\bd^2(s)$, and then $\bd^2(s)$.
\cref{lem:dominators-see-each-other} and \cref{lem:t-outside-kpp-interval}
give $d(\bd^2(s), t)\leq d(s,t)-1$.
\end{proof}
A more detailed analysis gives that the label size can be reduced to
$3 \cdot \lceil \log n \rceil - 1$, whereas the routing table size can be reduced
to $5 \cdot \lceil \log n \rceil -2$. However, this will not affect our second main result
which follows from the discussion above.
\begin{restate}[thm:double-routing]
\begin{theorem}[restated]
Let $P$ be a double histogram with $n$ vertices. There is a routing
scheme for $G(P)$ with routing table, label and header size
$O(\log n)$, such that we can route between any two vertices
with a stretch at most $2$.
\end{theorem}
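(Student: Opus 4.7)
The plan is to assemble the machinery developed in the preceding subsections into a concrete routing scheme and then bound its stretch via a short case analysis on the position of the target $t$ relative to the nested intervals $I(s) \subseteq I^2(s) \subseteq I^3(s)$ of the current vertex $s$. The design of the labels, routing tables, and header is forced by two requirements: first, each quantity must fit in $O(\log n)$ bits; second, the local data at $s$ together with $\lab(t)$ and the link table $\lab(N(s))$ must suffice to decide which of the four cases $t\in N(s)$, $t\in I(s)\setminus N(s)$, $t\in I^2(s)\setminus I(s)$, $t\in I^3(s)\setminus I^2(s)$, $t\notin I^3(s)$ applies, and to identify the correct next hop in that case.

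For the labels I would store the coordinates $(v_x,v_y)$ and the bounding $x$-coordinates of $I(v)$, giving $\Lab(n)=O(\log n)$. For the routing table of $v$ I would store the bounding $x$-coordinates of $I^2(\bd(v))$ and of $I^2(\td(v))$, together with the coordinates of $\bd^2(v)$ and a single bit saying whether $\bd(v)$ or $\td(v)$ is the second vertex on the canonical path $\pi_b(v,2)$; this gives $\Tab(n)=O(\log n)$. The header is either empty or carries the coordinates of one vertex, so $H(n)=O(\log n)$. Locality of the case distinction is immediate in the first three cases: $t\in I(s)$ and $t\in I^2(s)$ can be checked from $\lab(t)$ and the link table (from which $a^*(s),b^*(s)$ and their interval boundaries are recoverable), and \cref{lem:I3-equals-exended-of-dominators} shows that $t\in I^3(s)$ is equivalent to $t\in I^2(\bd(s))\cup I^2(\td(s))$, which is decidable from the routing table of $s$.

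Given locality, the routing function is already dictated by the earlier lemmas: in Case~1 send the packet to $\fd(s,t)$ if this is a vertex and to $\nd(s,t)$ otherwise; in Case~2 identify the unique index $i\geq 1$ with $t\in[\ell^i,\ell^{i-1}]$ or $t\in[r^{i-1},r^i]$ and forward to $a^i$ or $b^i$; in Case~3 forward to whichever of $\bd(s),\td(s)$ has $t$ in its $I^2$; in Case~4 copy $\bd^2(s)$ into the header and forward to the $\bd(s)$ or $\td(s)$ indicated by the stored bit, and then at the next step clear the header and hop to the vertex it named. The stretch bound follows from \cref{lem:routing-scheme-terminates}: in each case the scheme reaches, in at most two hops, a vertex $v$ with $d(v,t)\leq d(s,t)-1$, so the total routing distance is at most $2\,d(s,t)$.

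The main obstacle I expect is not the analysis per se but convincing oneself that the information actually stored is enough to execute Case~4 correctly: the packet must travel three hops ($s\to\{\bd(s),\td(s)\}\to\bd^2(s)\to\cdots$) along a prefix of a canonical path guaranteed by \cref{lem:t-outside-kpp-interval}, even though the intermediate vertex has no way of re-deriving $\bd^2(s)$ from its own routing table. This is exactly what the one-shot use of the header (carrying $\bd^2(s)$) is for, and the stored bit ensures that the first hop leaves from the correct side so that the second hop indeed reaches $\bd^2(s)$; once we are at $\bd^2(s)$ the analysis reduces to one of Cases~1--3 relative to the new current vertex, and \cref{lem:routing-scheme-terminates} closes the argument.
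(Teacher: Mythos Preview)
Your proposal is correct and follows essentially the same approach as the paper: the same labels, routing tables, header usage, four-case routing function, and the same appeal to \cref{lem:routing-scheme-terminates} for the stretch bound. One small slip: in your last paragraph you say that once the packet reaches $\bd^2(s)$ ``the analysis reduces to one of Cases~1--3''; this need not be true (the target may still lie outside $I^3(\bd^2(s))$, keeping us in Case~4), but it is also unnecessary---\cref{lem:routing-scheme-terminates} already guarantees a distance decrease of at least one per two hops regardless of which case recurs, which is all the stretch-$2$ bound needs.
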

\end{restate}

\section{Conclusion}
We gave the first routing schemes for the hop-distance in simple
polygons. In particular, we have a routing scheme for simple 
histograms with label
size $2 \cdot \lceil \log n \rceil$, routing table size 1, and
stretch 1. We also presented a routing scheme for double
histograms with label, routing table and header size $O(\log n)$
and stretch 2.
This constitutes a first step towards an efficient routing scheme
for the hop-distance in orthogonal polygons. 
The following open problems arise naturally.

First of all, the routing scheme for double histograms 
shows that it is possible to obtain a routing scheme for simple 
histograms with label size $\lceil \log n \rceil$. The stretch factor
increases to $2$. The basic idea is as follows: if $t \in I(s)$, we
determine the far dominator $\fd(s,t)$ and take the hop to
$\fd(s,t)$, without looking at the breakpoint of the near dominator.
Therefore, we save the $\lceil \log n \rceil$ bits that were necessary
to store the $\id$ of the breakpoint. It remains open whether one can
decrease the stretch simultaneously.

As a next step, it would be interesting to see how the routing scheme
extends to monotone polygons as well as
arbitrary orthogonal polygons, assuming $r$-visibility.

After that, it will be interesting to take a closer look at (orthogonal)
polygons assuming $l$-visibility. Here, the structure of visibility -- even in
simple histograms -- is much more complicated. Moreover, we can no longer 
assume integer coordinates.

Last but not least, it would be interesting to know, whether it is possible
to decrease the stretch in double histograms to, say $1 + \eps$, for $\eps > 0$. 

\bibliographystyle{plainurl}
\bibliography{sources.bib}

\newcommand{\SortNoop}[1]{}
\begin{thebibliography}{10}

\bibitem{AbrahamGa11}
Ittai Abraham and Cyril Gavoille.
\newblock On approximate distance labels and routing schemes with affine
  stretch.
\newblock In {\em Proc. 25th Int. Symp. Dist. Comp. (DISC)}, pages 404--415,
  2011.

\bibitem{AwerbuchBNLiPe90}
Baruch Awerbuch, Amotz Bar-Noy, Nathan Linial, and David Peleg.
\newblock Improved routing strategies with succinct tables.
\newblock {\em J. Algorithms}, 11(3):307--341, 1990.

\bibitem{banyassady2017routing}
Bahareh Banyassady, Man-Kwun Chiu, Matias Korman, Wolfgang Mulzer, Andr{\'e}
  van Renssen, Marcel Roeloffzen, Paul Seiferth, Yannik Stein, Birgit
  Vogtenhuber, and Max Willert.
\newblock Routing in polygonal domains.
\newblock In {\em Proc. 28th Annu. Internat. Sympos. Algorithms Comput.
  (ISAAC)}, pages 10:1--10:13, 2017.

\bibitem{bartschi2011coloring}
Andreas B{\"a}rtschi.
\newblock Coloring variations of the art gallery problem.
\newblock {\em Master's thesis, Department of Mathematics, ETH Z{\"u}rich},
  2011.

\bibitem{bartschi2014improved}
Andreas B{\"a}rtschi, Subir~Kumar Ghosh, Mat{\'u}{\v{s}} Mihal{\'a}k, Thomas
  Tschager, and Peter Widmayer.
\newblock Improved bounds for the conflict-free chromatic art gallery problem.
\newblock In {\em Proc. 30th Annu. Sympos. Comput. Geom. (SoCG)}, page 144,
  2014.

\bibitem{bartschi2014conflict}
Andreas B{\"a}rtschi and Subhash Suri.
\newblock Conflict-free chromatic art gallery coverage.
\newblock {\em Algorithmica}, 68(1):265--283, 2014.

\bibitem{bhattacharya2017constant}
Pritam Bhattacharya, Subir~Kumar Ghosh, and Sudebkumar Pal.
\newblock Constant approximation algorithms for guarding simple polygons using
  vertex guards.
\newblock \texttt{arXiv:1712.05492}, 2017.

\bibitem{bhattacharya2017approximability}
Pritam Bhattacharya, Subir~Kumar Ghosh, and Bodhayan Roy.
\newblock Approximability of guarding weak visibility polygons.
\newblock {\em Discrete Applied Mathematics}, 228:109--129, 2017.

\bibitem{BoseFavReVe15}
Prosenjit Bose, Rolf Fagerberg, Andr{\'{e}} van Renssen, and Sander
  Verdonschot.
\newblock Optimal local routing on {D}elaunay triangulations defined by empty
  equilateral triangles.
\newblock {\em SIAM J. Comput.}, 44(6):1626~--~1649, 2015.

\bibitem{BoseFavReVe17}
Prosenjit Bose, Rolf Fagerberg, Andr{\'{e}} van Renssen, and Sander
  Verdonschot.
\newblock Competitive local routing with constraints.
\newblock {\em J. of Computational Geometry}, 8(1):125--152, 2017.

\bibitem{BoseMo04}
Prosenjit Bose and Pat Morin.
\newblock Competitive online routing in geometric graphs.
\newblock {\em Theoret. Comput. Sci.}, 324(2):273--288, 2004.

\bibitem{Chechik13}
Shiri Chechik.
\newblock Compact routing schemes with improved stretch.
\newblock In {\em Proc. ACM Symp. Princ. Dist. Comp. (PODC)}, pages 33--41,
  2013.

\bibitem{Cowen01}
Lenore~J Cowen.
\newblock Compact routing with minimum stretch.
\newblock {\em J. Algorithms}, 38(1):170--183, 2001.

\bibitem{EilamGaPe03}
Tamar Eilam, Cyril Gavoille, and David Peleg.
\newblock Compact routing schemes with low stretch factor.
\newblock {\em J. Algorithms}, 46(2):97--114, 2003.

\bibitem{FraigniaudGa01}
Pierre Fraigniaud and Cyril Gavoille.
\newblock Routing in trees.
\newblock In {\em Proc. 28th Internat. Colloq. Automata Lang. Program.
  (ICALP)}, pages 757--772, 2001.

\bibitem{GiordanoSt04}
Silvia Giordano and Ivan Stojmenovic.
\newblock Position based routing algorithms for ad hoc networks: A taxonomy.
\newblock In {\em Ad hoc wireless networking}, pages 103--136. Springer-Verlag,
  2004.

\bibitem{hoffmann1990rectilinear}
Frank Hoffmann.
\newblock On the rectilinear art gallery problem.
\newblock In {\em Proc. 17th Internat. Colloq. Automata Lang. Program.
  (ICALP)}, pages 717--728, 1990.

\bibitem{hoffmann2018tight}
Frank Hoffmann, Klaus Kriegel, Subhash Suri, Kevin Verbeek, and Max Willert.
\newblock Tight bounds for conflict-free chromatic guarding of orthogonal art
  galleries.
\newblock {\em Comput. Geom. Theory Appl.}, 2018.

\bibitem{KaplanMuRoSe18}
Haim Kaplan, Wolfgang Mulzer, Liam Roditty, and Paul Seiferth.
\newblock Routing in unit disk graphs.
\newblock {\em Algorithmica}, 80(3):830--848, 2018.

\bibitem{KonjevodRiXi16}
Goran Konjevod, Andr{\'{e}}a~W. Richa, and Donglin Xia.
\newblock Scale-free compact routing schemes in networks of low doubling
  dimension.
\newblock {\em ACM Trans. Algorithms}, 12(3):27:1--27:29, 2016.

\bibitem{motwani1990covering}
Rajeev Motwani, Arvind Raghunathan, and Huzur Saran.
\newblock Covering orthogonal polygons with star polygons: The perfect graph
  approach.
\newblock {\em J. Comput. System Sci.}, 40(1):19--48, 1990.

\bibitem{o1987art}
Joseph O'Rourke.
\newblock {\em Art gallery theorems and algorithms}.
\newblock Oxford University Press, 1987.

\bibitem{PelegUp89}
David Peleg and Eli Upfal.
\newblock A trade-off between space and efficiency for routing tables.
\newblock {\em J. ACM}, 36(3):510--530, 1989.

\bibitem{RodittyTo15}
Liam Roditty and Roei Tov.
\newblock New routing techniques and their applications.
\newblock In {\em Proc. ACM Symp. Princ. Dist. Comp. (PODC)}, pages 23--32,
  2015.

\bibitem{RodittyTo16}
Liam Roditty and Roei Tov.
\newblock Close to linear space routing schemes.
\newblock {\em Distributed Computing}, 29(1):65--74, 2016.

\bibitem{SantoroKh85}
Nicola Santoro and Ramez Khatib.
\newblock Labelling and implicit routing in networks.
\newblock {\em The Computer Journal}, 28(1):5--8, 1985.

\bibitem{Thorup04}
Mikkel Thorup.
\newblock Compact oracles for reachability and approximate distances in planar
  digraphs.
\newblock {\em J. ACM}, 51(6):993--1024, 2004.

\bibitem{ThorupZw01}
Mikkel Thorup and Uri Zwick.
\newblock Compact routing schemes.
\newblock In {\em Proc. 13th ACM Symp. Par. Algo. Arch. (SPAA)}, pages 1--10,
  2001.

\bibitem{worman2007polygon}
Chris Worman and J~Mark Keil.
\newblock Polygon decomposition and the orthogonal art gallery problem.
\newblock {\em Internat. J. Comput. Geom. Appl.}, 17(02):105--138, 2007.

\bibitem{yan2012compact}
Chenyu Yan, Yang Xiang, and Feodor~F Dragan.
\newblock Compact and low delay routing labeling scheme for unit disk graphs.
\newblock {\em Comput. Geom. Theory Appl.}, 45(7):305--325, 2012.

\end{thebibliography}

\end{document}